\documentclass[english]{article}
\usepackage[T1]{fontenc}
\usepackage[latin9]{inputenc}
\usepackage{verbatim}
\usepackage{amsthm}
\usepackage{amsmath}
\usepackage{amssymb}
\usepackage{esint}
\usepackage{varwidth}
\usepackage{url}
\usepackage{constants}
\usepackage{graphicx}
\usepackage{subfig}
\usepackage{float}
\usepackage{fancyhdr}
\pagestyle{fancy}
\usepackage{eufrak}
\usepackage{algorithm}
\usepackage{adjustbox}
\fancyhf{}
\fancyhead[L]{Streaming Principal Component Analysis From Incomplete Data}

\usepackage[dvipsnames]{xcolor}

\makeatletter
\theoremstyle{plain}
\newtheorem{thm}{\protect\theoremname}
  \theoremstyle{plain}
  \newtheorem{lem}[thm]{\protect\lemmaname}
  \theoremstyle{plain}
  
 \theoremstyle{plain}
  \newtheorem{rem}{Remark}
 \theoremstyle{plain}
  \newtheorem{prop}[thm]{\protect\propositionname}



\usepackage{hyperref}
\def \SU {\mathcal{S}}
\def \A {\mathcal{A}}
\def \B {\mathcal{B}}
\def \alg {\operatorname{SNIPE}}
\def \Q {\mathcal{Q}}
\def \E {\mathbb{E}}
\def \h {\widehat}
\def \U {\mathcal{U}}
\def \grouse {\operatorname{GROUSE}}

\def \R {\mathbb{R}}
\def \len {l} 
\def \Int {I} 

\def \GR {\mathbb{G}}

\def \h {\widehat}
\def \wt{\widetilde}

\def \Q {\mathcal{Q}}
\def \A {\mathcal{A}}
\def \B {\mathcal{B}}
\def \reject {\mathcal{A}}

\def \I {\kappa} 

\hypersetup{
   colorlinks=true,%
   citecolor=black,%
   filecolor=black,%
   linkcolor=black,%
   urlcolor=blue
}

\def \l{\left}
\def \r{\right}
\def \fail{\phi}

\def \ev{\mathfrak{E}}
\def \o{\h{\SU}_{k-1}}
\def \n{\h{\SU}_{k}}

\def \Span{\operatorname{span}}

\hoffset=0in \voffset=0in \evensidemargin=0in \oddsidemargin=0in
\textwidth=6.5in \topmargin=0in \headheight=0.0in \headsep=0.2in
\textheight=9in

\makeatother

\usepackage{babel}
  \providecommand{\corollaryname}{Corollary}
  \providecommand{\lemmaname}{Lemma}
\providecommand{\theoremname}{Theorem}
\providecommand{\propositionname}{Proposition}

\providecommand{\keywords}[1]{\textbf{\textit{Keywords---}} #1}

\begin{document}

\title{Streaming Principal Component Analysis From Incomplete Data
}
\author{Armin Eftekhari, Gregory Ongie, Laura Balzano, and Michael B. Wakin\footnote{AE is with the Alan Turing Institute in London. GO and LB are with the Department of Electrical Engineering and Computer Science at the University of Michigan, Ann Arbor. MBW is with the Department of  Electrical Engineering at the Colorado School of Mines. (E-mails: aeftekhari@turing.ac.uk; gongie@umich.edu; girasole@umich.edu; mwakin@mines.edu)} }

\maketitle
\begin{abstract}

Linear subspace models are pervasive in computational sciences  and particularly used for large datasets which are often incomplete due  to  privacy issues or sampling constraints. Therefore, a critical problem is developing an efficient algorithm  for detecting low-dimensional linear structure from incomplete data efficiently, in terms of both computational complexity and storage. 

In this paper we propose a streaming subspace estimation algorithm called Subspace Navigation via Interpolation from Partial Entries ($\alg$) that efficiently processes blocks of incomplete data to estimate the underlying subspace model. In every iteration, $\alg$ finds the subspace that best fits the new data block but remains close to the previous estimate. We show that $\alg$ is a streaming solver for the underlying nonconvex matrix completion problem, that it converges globally {to a stationary point of this program} regardless of  initialization, and that the convergence is locally linear with high probability. We also find that  $\alg$ shows state-of-the-art performance in our numerical simulations.


\end{abstract}

\keywords{Principal component analysis, Subspace identification, Matrix completion, Streaming algorithms, Nonconvex optimization, Global convergence. }

\section{Introduction}
\label{sec:problem statement}

Linear models are the backbone of computational science, and {Principal Component Analysis} (PCA) in particular is an indispensable tool for detecting linear structure in collected data \cite{van2012subspace,ardekani1999activation,krim1996two,tong1998multichannel}. 
Principal components of a dataset are used, for example, to perform linear dimensionality reduction, which is in turn at the heart of classification, regression and other learning tasks that often suffer from the ``curse of dimensionality'', where having a small number of training samples in relation to the number of features typically leads to overfitting \cite{hastie2013elements}.

In this work, we are particularly interested in applying PCA to data that is presented sequentially to a user, with limited processing time available for each item. Moreover, due to hardware limitations, we assume the user can only store small amounts of data. Finally, we also consider the possibility that the incoming data is incomplete, {either due to physical sampling constraints, or deliberately subsampled to facilitate faster processing times or to address privacy concerns.}

As one example, consider monitoring network traffic over time, where acquiring complete network measurements at fine time-scales is impractical and subsampling is necessary \cite{lakhina2004diagnosing,gershenfeld2010intelligent}. 
As another example, suppose we have a network of cheap, battery-powered sensors that must relay summary statistics of their measurements, say  their principal components, to a central node on a daily basis. Each sensor cannot store or process all its daily measurements locally, nor does it have the power to relay all the raw data to the central node. Moreover, many measurements are not reliable and can be treated as missing. 
It is in this {and similar contexts} that we hope to develop a \emph{streaming} algorithm for PCA from incomplete data.

{More formally, we consider the following problem:} Let $\SU$ be an $r$-dimensional subspace with orthonormal basis $S\in\mathbb{R}^{n\times r}$.
For an  integer $T$, let the coefficient vectors $\{q_t\}_{t=1}^T\subset \mathbb{R}^r$ be independent copies of a random vector $q\in\mathbb{R}^r$ with bounded expectation, namely $\E \|q\|_2 <\infty$. Consider the sequence of vectors $\{Sq_t\}_{t=1}^T\subset \SU$ and set $s_t := Sq_t$ for short.  At each time $t\in[1:T]:=\{1,2,\cdots,T\}$,  we observe each entry of  $s_t $ independently with a probability of $p$ and  collect the observed entries in $y_t\in\mathbb{R}^n$. Formally, we let $\omega_t\subseteq [1:n]$ be the random index set over which $s_t$ is observed and write this measurement process as $y_t=P_{\omega_t}\cdot s_t$, where  $P_{\omega_t} \in\mathbb{R}^{n\times n}$ is the projection onto the coordinate set $\omega_t$, namely it equals one on its diagonal entries corresponding to the index set $\omega_t$ and is zero elsewhere. 
\begin{center}
\begin{figure}[H]
\begin{center}
\subfloat[\label{fig:variable prob}]{\protect\includegraphics[width=0.3\textwidth]{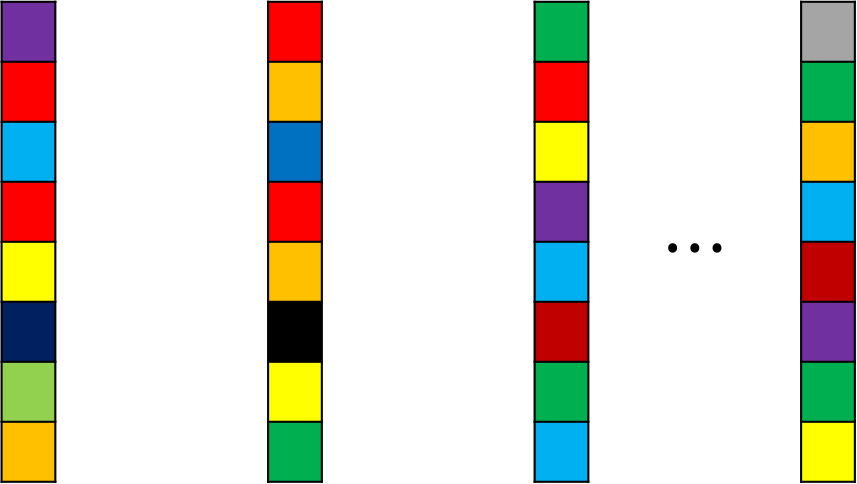}
}

\subfloat[\label{fig:variable rank}]{\protect\includegraphics[width=0.3\textwidth]{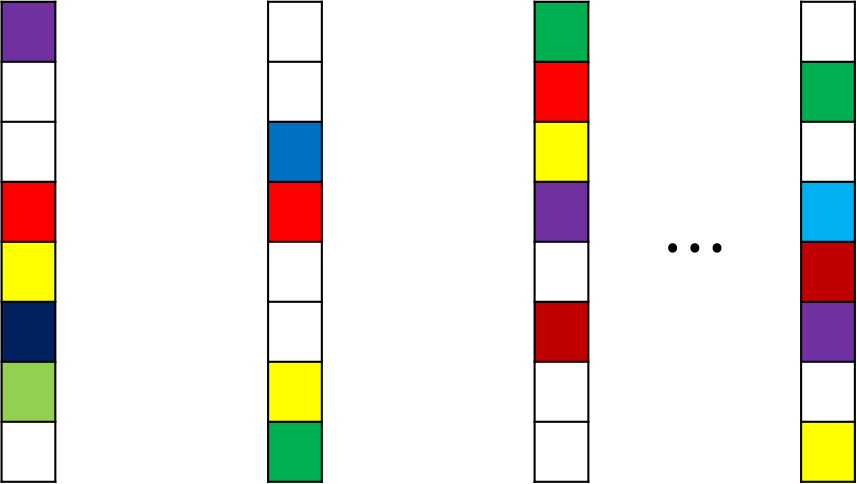}
}
\caption{\label{fig:streamVis} The sequence of generic vectors $\{s_t\}_{t=1}^T$ drawn from an unknown $r$-dimensional subspace $\SU$ in panel~(a) is only partially observed on random index sets $\{\omega_t\}_{t=1}^T$.  That is, we only have access to incomplete data vectors $\{y_t\}_{t=1}^T$ in panel (b), where the white entries are missing. Our objective is to estimate the  subspace $\SU$ from the incomplete data vectors, when limited storage and processing resources are available. See Section \ref{sec:problem statement} for the detailed setup.  }
\end{center}
\end{figure}
\end{center}

Our objective in this paper is to design a streaming algorithm to identify the subspace $\SU$ from the incomplete data $\{y_t\}_{t=1}^T$ supported on the index sets $\{\omega_t\}_{t=1}^T$.  Put differently, our objective is to design a streaming algorithm to compute  leading $r$ principal components of the full (but hidden) data matrix $[s_1 \, s_2 \cdots s_T]$ from the incomplete observations $[y_1 \, y_2 \, \cdots y_T]$, see Figure \ref{fig:streamVis}. By the  Eckart-Young-Mirsky Theorem, this task is equivalent to computing  leading $r$ left singular vectors of the full data matrix from its partial observations \cite{eckart,mirsky}. 

Assuming that $r=\mbox{dim}(\SU)$ is known \emph{a priori} (or estimated from data by other means), we present the $\alg$ algorithm for this task in Section \ref{sec:proposed alg}. $\alg$ is designed based on the \emph{principle of   least-change} and, in every iteration,  finds the subspace that best fits the new data block but remains close to the previous estimate.  $\alg$ requires $O(n)$ bits of memory and performs $O(n)$ flops in every iteration, which is optimal in its dependence on the ambient dimension $n$.  As discussed in Section \ref{sec:interp}, $\alg$ has a natural interpretation as a streaming algorithm for low-rank matrix completion \cite{davenport2016overview}. 

Section \ref{sec:theory} discusses the global and local convergence of $\alg$. 
In particular, the local convergence rate is linear near the true subspace, namely the estimation error of $\alg$ reduces by a factor of $1-cp$ in every iteration, for a certain factor $c$ and with high probability. This local convergence guarantee for $\alg$ is a key technical contribution of this paper which is absent in its close competitors, see Section \ref{sec:related work}. 


Even though we limit ourselves to the ``noise-free'' case of $y_t = P_{\omega_t} s_t$ in this paper, $\alg$ can also be applied (after minimal changes)  when $y_t = P_{\omega_t}(s_t+n_t)$, where we might think of $n_t\in\R^{n}$ as measurement noise  from a signal processing viewpoint. Alternatively from a statistical viewpoint, $n_t$ represents the tail of the covariance matrix of the generative model, from which $\{s_t\}_t$ are drawn. Moreover, $\alg$ can be easily adapted to the \emph{dynamic} case where the underlying subspace $\SU=\SU(t)$ changes over time. We  leave the convergence analysis of $\alg$ under a noisy time-variant model to a future work. Similarly, entries of incoming vectors are observed uniformly at random in our theoretical analysis but $\alg$ also applies to any incomplete data.

A review of prior art is presented in Section~\ref{sec:related work}, and the performance of $\alg$  and rival algorithms are examined numerically in Section \ref{sec:simulations}, where we find that $\alg$  shows the state-of-the-art performance. Technical proofs appear in Section \ref{sec:analysis} and in the appendices, with Appendix \ref{sec:Toolbox} (Toolbox) collecting some of the frequently-used mathematical tools. Finally, Appendix \ref{sec:alt init} offers an alternative initialization for $\alg$.

\section{SNIPE}
\label{sec:proposed alg}

In this section, we present Subspace Navigation  via Interpolation from Partial Entries (SNIPE), a streaming algorithm for subspace identification from incomplete data, received sequentially.

Let us first introduce some additional notation. 
Recall that we denote the incoming sequence of incomplete vectors by $\{y_t\}_{t=1}^T\subset\mathbb{R}^n$, which are supported on index sets $\{\omega_t\}_{t=1}^T\subseteq [1:n]$.
For a block size $b\ge r$, we concatenate every $b$ consecutive vectors into a data block, thereby partitioning the incoming data into $K=T/b$ non-overlapping blocks $\{Y_k\}_{k=1}^K$, where $Y_k\in\mathbb{R}^{n\times b}$ for every $k$.  We assume for convenience that $K$ is an integer. We also often take $b=O(r)$ to maximize the efficiency of SNIPE, as discussed below.

 At a high level, $\alg$ processes the first incomplete block  $Y_1$ to produce an estimate $\h{\SU}_1$ of the true subspace $\SU$. This estimate is then iteratively updated  after receiving each of the new incomplete blocks $\{Y_k\}_{k= 2}^K$,
 thereby producing a sequence of estimates $\{\h{\SU}_k\}_{k= 2}^K$ , see Figure \ref{fig:Stream2}. Every $\h{\SU}_k$ is an $r$-dimensional subspace of $\mathbb{R}^n$ with orthonormal basis $\h{S}_k\in\mathbb{R}^{n\times r}$; the particular choice of orthonormal basis is inconsequential throughout the paper. 
 
\begin{center}
\begin{figure}[H]
\begin{center}
\includegraphics[width=0.3\textwidth]{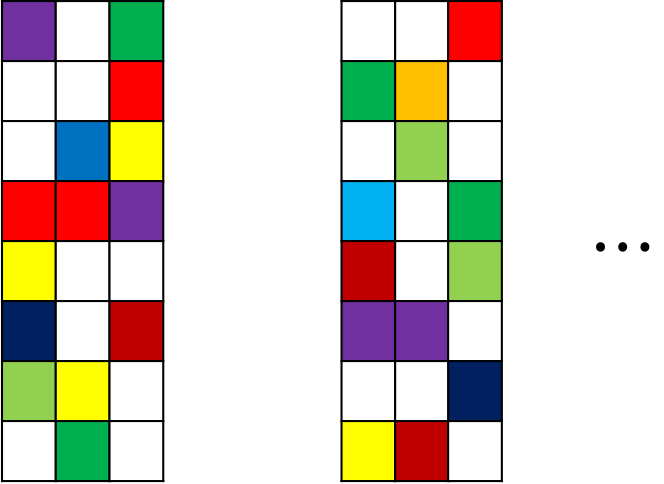}
\caption{\label{fig:Stream2} 
$\alg$ concatenates every $b$ incoming vectors into a block and iteratively updates its estimate of the true subspace $\SU$ after receiving each new block. That is, $\alg$ updates its estimate of $\SU$, from $\h{\SU}_{k-1}$ to $\h{\SU}_{k}$, after receiving the incomplete data block $Y_k\in\R^{n\times b}$, see Section \ref{sec:proposed alg} for the details.    }
\end{center}
\end{figure}
\end{center}

More concretely, $\alg$ sets $\h{\SU}_1$ to be the span of leading $r$ left singular vectors of $Y_1$, namely the left singular vectors corresponding to largest $r$ singular values of $Y_1$, with ties broken arbitrarily. Then, at iteration $k\in[2:K]$ and given the previous estimate $\h{\SU}_{k-1}=\mbox{span}( \h{S}_{k-1})$,   $\alg$ processes the columns of the $k$th incomplete block $Y_k$ one by one and forms the matrix
\begin{equation}
R_k =
\l[
\begin{array}{ccc}
\cdots
&
y_t + P_{\omega_t^C} \h{S}_{k-1}\l( \h{S}_{k-1}^* P_{\omega_t} \h{S}_{k-1} + \lambda I_r  \r)^{\dagger} \h{S}_{k-1}^*  y_t
&
\cdots
\end{array}
\r]
\in\mathbb{R}^{n\times b},
\qquad  t\in [(k-1)b+1:  kb],
\label{eq:Rk}
\end{equation}
where $\dagger$ denotes the pseudo-inverse and $\lambda\ge 0$ is a parameter. Above, $P_{\omega_t^C}=I_n-P_{\omega_t}\in\mathbb{R}^{n\times n}$ projects a vector onto the complement of the index set $\omega_t$. {The motivation for the particular choice of $R_k$ above will become clear in Section \ref{sec:interp}.}
$\alg$ then updates its  estimate by setting $\h{\SU}_k$ to be the span of leading $r$ left singular vectors of $R_k$. Algorithm \ref{alg:Alg} summarizes these steps. Note that Algorithm \ref{alg:Alg} rejects ill-conditioned updates in Step 3 for the convenience of analysis and that similar reject options have precedence in the literature \cite{balzano2015local}. We however found implementing this reject option to be unnecessary in numerical simulations.


\begin{rem} \label{rem:complexity} \emph{\textbf{[Computational complexity of SNIPE]}} \emph{We measure the algorithmic complexity of $\alg$ by calculating the average number of floating-point operations (flops) performed on  an incoming vector. } 
\emph{Every iteration of $\alg$ involves  finding leading $r$ left singular vectors of  an $n\times b$ matrix. Assuming that $b=O(r)$, this could be done with $O(nr^2)$ flops.
At the $k$th iteration  with $k\ge 2$, $\alg$ also requires finding the pseudo-inverse of $P_{\omega_j}\h{S}_{k-1}\in\mathbb{R}^{n\times r}$ for each incoming vector which costs $O(nr^2)$ flops. Therefore the overall computational complexity of $\alg$ is $O(nr^2)$ flops per  vector. As further discussed in Section \ref{sec:related work}, this matches the complexity of other algorithms for streaming PCA even though here the received data is highly incomplete.}\hfill\qedsymbol
\end{rem}

\begin{rem}\label{rem:storage}\emph{\textbf{[Storage requirements of SNIPE]}}
\emph{We measure the storage required by $\alg$ by calculating the number of memory elements stored by $\alg$ at any given instant.  
At the $k$th iteration, $\alg$ must store the current estimate $\h{S}_{k-1}\in\mathbb{R}^{n\times r}$ (if available) and the new incomplete block $Y_k\in\mathbb{R}^{n\times b}$. Assuming that $b=O(r)$, this translates into $O(nr)+O(pnr)=O(nr)$ memory elements. $\alg$ therefore requires $O(nr)$ bits of storage, which is optimal up to a constant factor. }\hfill\qedsymbol
\end{rem}

\begin{center}
\begin{algorithm}
\caption{$\alg$ for streaming  PCA from incomplete data \label{alg:Alg}}
\vspace{5pt}

\textbf{Input:}
\begin{itemize}

\item Dimension $r$,
\item Received data $\{y_t\}_{t=1}^T\subset \mathbb{R}^n$ supported on index sets $\{\omega_t\}_{t=1}^T\subseteq [1:n]$, presented sequentially in $K=T/b$ blocks of size $b \ge r$,
\item Tuning parameter $\lambda\ge 0$, 
\item {Reject thresholds $\sigma_{\min},\tau\ge 0$.}
\end{itemize}
\vspace{5pt}

\textbf{Output: } 
\begin{itemize}
\item $r$-dimensional subspace $\widehat{\SU}_K$.
\end{itemize}
\vspace{5pt}

\textbf{Body:} 
\begin{itemize}

\item Form $Y_1\in\mathbb{R}^{n\times b}$ by concatenating the first $b$  received vectors $\{y_t\}_{t=1}^{b}$. Let $\h{\SU}_1$, with orthonormal basis $\h{S}_1\in\mathbb{R}^{n\times r} $,  be the span of  leading $r$ left singular vectors of $Y_1$, namely those corresponding to $r$ largest singular values. Ties are broken arbitrarily.
\item For $k \in [2:K]$, repeat:
\begin{enumerate}
\item Set $R_{k}\leftarrow \{\}$.
\item For $t\in [(k-1)b+1: kb]$, repeat
\begin{itemize}
\item  Set $$R_{k}\leftarrow \l[
\begin{array}{cc}
R_{k}
&
y_t + P_{\omega_t^C} \h{S}_{k-1}\l( \h{S}_{k-1}^* P_{\omega_t} \h{S}_{k-1} + \lambda I_r  \r)^{\dagger} \h{S}_{k-1}^* y_t 
\end{array}
\r],
$$
where  $P_{\omega_t} \in\mathbb{R}^{n\times n}$ equals one on its  diagonal entries corresponding to the index set $\omega_t$, and is zero elsewhere. Likewise, $P_{\omega_t^C}$ projects a vector onto the complement of the index set $\omega_t$.
\end{itemize}
\item If $\sigma_r(R_k)< \sigma_{\min}$ or $\sigma_{r}(R_k) \le (1+\tau) \cdot \sigma_{r+1}(R_{k})$, then set $\h{\SU}_k\leftarrow\h{\SU}_{k-1}$.  Otherwise, let $\h{\SU}_{k}$, with orthonormal basis $\h{S}_k\in\mathbb{R}^{n\times r} $, be the span of  leading $r$ left singular vectors of $R_k$. Ties are broken arbitrarily. Here, $\sigma_i(R_k)$ is the $i$th largest singular value of $R_k$. 

\end{enumerate}
\item Return $\h{\SU}_K$.
\end{itemize}
\end{algorithm}
\end{center}
\section{Interpretation of SNIPE}
\label{sec:interp}

$\alg$ has a natural interpretation as a streaming  algorithm for low-rank matrix completion, which we now discuss. First let us enrich our notation. Recall the incomplete data blocks $\{Y_k\}_{k=1}^K \subset \R^{n\times b}$ and let the random index set $\Omega_k\subseteq [1:n]\times [1:b]$ be the support of $Y_k$ for every $k$. We write that $Y_k=P_{\Omega_k}(S_k)$   for every $k$, where the complete (but hidden) data block $S_k\in\R^{n\times b}$ is formed by concatenating $\{s_t\}_{t=(k-1)b+1}^{kb}$. Here,  $P_{\Omega_k}(S_k)$  retains only the entries of $S_k$ on the index set $\Omega_k$, setting the rest to zero. By design, $s_t = S q_t$ for every $t$ and we may therefore write that $S_k=S\cdot Q_k$  
for the coefficient matrix $Q_k\in\R^{r\times b}$ formed by concatenating $\{q_t\}_{t=(k-1)b+1}^{kb}$. 
%
To summarize, $\{S_k\}_{k=1}^K$   is formed by partitioning $\{s_t\}_{t=1}^T$   into $K$ blocks. Likewise,  $\{Q_k,Y_k,\Omega_k\}_{k=1}^K$ are formed by partitioning  $\{q_t,y_t,\omega_t\}_{t=1}^T$, respectively.

 With this introduction, let us form $Y\in\R^{n\times T}$ by concatenating the incomplete blocks $\{Y_k\}_{k=1}^K \subset \R^{n\times b}$, 
supported on  the index sets $\Omega\subseteq [1:n]\times [1:T]$. 
To find the  true subspace $\SU$, one might consider solving
\begin{align}
\begin{cases}
\underset{X,\U}{\min}\,\,\,
 \l\|P_{\U^\perp}X  \r\|_F^2+ \lambda \|P_{\Omega^C}(X) \|_F^2 \\
P_{\Omega}\l( X\r) = Y,
\end{cases}
\label{eq:non cvx}
\end{align}
where the minimization is over a matrix $X\in\mathbb{R}^{n\times T}$ and $r$-dimensional subspace $\U\subset\mathbb{R}^n$.
Above, $P_{\U^\perp}\in\mathbb{R}^{n\times n}$ is the orthogonal projection onto the orthogonal complement of subspace $\U$ and $P_\Omega(X)$ retains only the entries of $X$ on the index set $\Omega$, setting the rest to zero.  Note that Program \eqref{eq:non cvx} encourages its solution(s) to be low-rank while matching the observations $Y$ on the index set $\Omega$.  The term $\lambda\| P_{\Omega^C}(X) \|_F^2$  for $\lambda\ge 0$  is the \emph{Tikhonov regularizer} that controls the energy of solution(s) on the complement of index set $\Omega$.

With complete data, namely when $\Omega=[1:n]\times [1:T]$, Program \eqref{eq:non cvx} reduces to PCA, as it returns $X=Y$ and searches for an $r$-dimensional subspace that captures most of the energy of $Y$. That is, Program~\eqref{eq:non cvx} reduces to $\min_{\U} \|P_{\U^\perp}Y\|_F^2$ when $\Omega = [1:n]\times [1:T]$, solution of which is the span of leading $r$  left singular vectors of $Y$ in light of  the   Eckart-Young-Mirsky Theorem \cite{eckart,mirsky}. In this sense then, Program \eqref{eq:non cvx} performs PCA from incomplete data. Note crucially that Program \eqref{eq:non cvx} is a nonconvex problem because the Grassmannian $\mathbb{G}(n,r)$, the set of all $r$-dimensional subspaces in $\mathbb{R}^n$, is a nonconvex set.\footnote{The Grassmannian can be embedded in  $\mathbb{R}^{n\times n}$ via the map that takes $\U\in\mathbb{G}(n,r)$ to the corresponding orthogonal projection $P_{\U}\in\R^{n\times n}$. The resulting submanifold of $\mathbb{R}^{n\times n}$ is a nonconvex set.} However, given a fixed subspace $\U\in\GR(n,r)$, Program \eqref{eq:non cvx} reduces to the simple least-squares program   
\begin{equation}
\begin{cases}
\underset{X}{\min}\,\,\,
 \l\|P_{\U^\perp}X  \r\|_F^2 + \lambda\| P_{\Omega^C}(X) \|_F^2 \\
P_{\Omega}\l( X\r) = Y,
\end{cases}
\label{eq:fixed U}
\end{equation}
where the minimization is over $X\in\R^{n\times T}$. If in addition $\lambda$ is positive, then Program \eqref{eq:fixed U} is strongly convex and has a unique minimizer. 
Given a  fixed feasible $X\in\R^{n\times T}$, Program \eqref{eq:non cvx} has the same minimizers as 
\begin{equation}
\underset{\U \in \GR(n,r)}{\min} \|P_{\U^\perp} X\|_F^2.
\label{eq:fixed X}
\end{equation}
That is, for a fixed feasible $X$, Program \eqref{eq:non cvx} simply performs PCA on $X$. 
We might also view Program \eqref{eq:non cvx} from a matrix completion perspective. More specifically, let 
\begin{equation}
\rho_r^2(X) = \sum_{i\ge r+1} \sigma_i^2(X),
\end{equation}
be the \emph{residual} of $X$, namely the energy of its trailing singular values $\sigma_{r+1}(X)\ge \sigma_{r+2}(X)\ge \cdots $. Like the popular nuclear norm $\|X\|_* = \sum_{i\ge 1}\sigma_i(X)$ in \cite{davenport2016overview}, the residual $\rho_r(X)$ gauges the rank of $X$. In particular, $\rho_r(X)=0$ if and only if {$\mbox{rank}(X) \leq r$}. Unlike the nuclear norm, however, the residual is still a nonconvex function of $X$. 
We now rewrite Program \eqref{eq:non cvx} as 
\begin{align}
\begin{cases}
\underset{X,\U}{\min}\,\,\,
 \l\|P_{\U^\perp}X  \r\|_F^2 + \lambda\| P_{\Omega^C}(X) \|_F^2 \\
P_{\Omega}\l( X\r) = Y
\end{cases}
& = 
\begin{cases}
\underset{X}{\min}\,\,\, \underset{\U\in \GR(n,r)}{\min}\,\,\,
 \l\|P_{\U^\perp}X  \r\|_F^2 + \lambda\| P_{\Omega^C}(X) \|_F^2\\
P_{\Omega}\l( X\r) = Y
\end{cases} \nonumber\\
& = 
\begin{cases}
\underset{X}{\min}\,\,\, \rho_r^2(X) + \lambda\| P_{\Omega^C}(X) \|_F^2\\
P_{\Omega}\l( X\r) = Y.
\end{cases}
\label{eq:residual}
\end{align}
That is, if we ignore the regularization term $\lambda \|P_{\Omega^C}(X)\|_F^2$, Program \eqref{eq:non cvx} searches for a matrix with the least residual, as a proxy for least rank, that matches the observations $Y$.  In this sense then, Program~\eqref{eq:non cvx} is a ``relaxation'' of the low-rank matrix completion problem. Several other formulations for the matrix completion problem are reviewed in \cite{davenport2016overview,eftekhari2018weighted,eftekhari2016mc}. We can also rewrite Program \eqref{eq:non cvx} in terms of its data blocks by considering the equivalent program
\begin{align}
\begin{cases}
\min\,\,\, \sum_{k=1}^K  \|P_{\U}^\perp X_k  \|_F^2 + \lambda\| P_{\Omega_k^C}(X_k) \|_F^2\\
P_{\Omega_k}\l(X_k\r) = Y_k \qquad  k\in[1:K],
\end{cases} 
\label{eq:non cvx parallel pre}
\end{align}
where the minimization is over  matrices $\{X_k\}_{k=1}^K \subset \R^{n\times b}$ and subspace $\U \in \mathbb{G}(n,r)$. Let us additionally introduce a number of auxiliary variables into Program \eqref{eq:non cvx parallel pre} by considering the equivalent program 
\begin{align}
\begin{cases}
\min\,\,\, \sum_{k=1}^K  \|P_{\U_k}^\perp X_k  \|_F^2+ \lambda\| P_{\Omega_k^C}(X_k) \|_F^2\\
P_{\Omega_k}\l(X_k\r) = Y_k \qquad  k\in[1:K]\\
\U_1 = \U_2 = \cdots = \U_K,
\end{cases}
\label{eq:non cvx parallel}
\end{align}
where the minimization is over matrices $\{X_k\}_{k=1}^K \subset \R^{n\times b}$ and subspaces $\{\U_k\}_{k=1}^K \subset \mathbb{G}(n,r)$. Indeed, Programs (\ref{eq:non cvx},\ref{eq:non cvx parallel pre},\ref{eq:non cvx parallel}) are all equivalent and all nonconvex. Now consider the following approximate solver for Program \eqref{eq:non cvx parallel} that alternatively solves for matrices and subspaces:
\begin{itemize}
\item Setting $X_1=Y_1$ in Program \eqref{eq:non cvx parallel},   we minimize $ \|P_{\U_1}^\perp Y_1 \|_F^2
$ over $\U_1\in\mathbb{G}(n,r)$ and, by the Eckart-Young-Mirsky Theorem, find a minimizer to be the span of  leading $r$ left singular vectors of $Y_1$, which coincides with $\h{\SU}_1$ in $\alg$.
\item For $k\in[2:K]$, repeat:
\begin{itemize}
\item Setting $\U_k=\h{\SU}_{k-1}$ in Program \eqref{eq:non cvx parallel}, we  solve
\begin{align}
\begin{cases}
\min_{X_k} \,\,\, \|P_{\h{\SU}_{k-1}^\perp}X_k \|_F^2+\lambda\| P_{\Omega_k^C}(X_k) \|_F^2 \\
P_{\Omega_k}\l(X_k\r) = Y_k,
\end{cases}
\label{eq:Rk interp}
\end{align}
over matrix $X_k\in\mathbb{R}^{n\times b}$. We verify  in Appendix \ref{sec:interp verifying} that the minimizer of Program \eqref{eq:Rk interp} coincides with $R_k$  in $\alg$, see \eqref{eq:Rk}. 
\item If $\sigma_r(R_k)<\sigma_{\min}$ or $\sigma_r(R_k)\le (1+\tau) \sigma_{r+1}(R_k)$, then no update is made, namely we set $\h{\SU}_k = \h{\SU}_{k-1}$. Otherwise, setting $X_k=R_k$ in Program \eqref{eq:non cvx parallel},   we solve
$ \min \|P_{\U_k}^\perp R_k \|_F^2
$ over $\U_k\in\mathbb{G}(n,r)$ to find  $\h{\SU}_k$. That is, by the Eckart-Young-Mirsky Theorem again, $\h{\SU}_k$ is the span of leading $r$ left singular vectors of $R_k$. The output of this step matches $\h{\SU}_k$ produced in $\alg$.
\end{itemize}
\end{itemize}
To summarize, following the above procedure   produces $\{R_k\}_{k=1}^K$ and $\{\h{\SU}_k\}_{k=1}^K$  in $\alg$. 
In other words, we might think of $\alg$ as an approximate solver for Program \eqref{eq:non cvx}, namely $\alg$ is a streaming algorithm for low-rank matrix completion.  In fact, the output of $\alg$ always converges to a stationary point of Program \eqref{eq:non cvx} in the sense described in Section \ref{sec:theory}.

Another insight about the choice of $R_k$ in \eqref{eq:Rk} is as follows. Let us set $\lambda=0$ for simplicity. At the beginning of the $k$th iteration of $\alg$ with $k\ge 2$, the available estimate of the true subspace is $\h{\SU}_{k-1}$ with orthonormal basis $\h{S}_{k-1}$. Given a new incomplete vector $y\in\mathbb{R}^n$, supported on the index set $\omega\subseteq [1:n]$, $z= \h{S}_{k-1} (P_{\omega} \h{S}_{k-1})^\dagger y$ best approximates $y$ in $\h{\SU}_{k-1}$ in  $\ell_2$ sense.
In order to agree with the measurements, we minimally adjust this to $y+P_{\omega^C}z$, where $P_{\omega^C}$ projects onto the complement of index set $\omega$. This indeed matches the expression for the columns of $R_k$ in $\alg$. We note that this type of least-change  strategy has been successfully used in the development of quasi-Newton methods for optimization \cite[Chapter 6]{nocedal2006numerical}.

\section{Performance of SNIPE}
\label{sec:theory}

To measure the performance of $\alg${---whose output is a {subspace}---}we naturally use principal angles as an error metric. More specifically, recall that  $\SU$ and $\h{\SU}_K$  denote the true subspace and the output of $\alg$, respectively. Then the $i$th largest singular value of $P_{\SU^{\perp}} P_{\h{\SU}_{K}}$ equals $\sin(\theta_i(\SU,\h{\SU}_K))$, where 
$$
\theta_1(\SU,\h{\SU}_K)\ge \theta_2(\SU,\h{\SU}_K) \ge \cdots \ge \theta_r(\SU,\h{\SU}_K)
$$
denote the principal angles between the two $r$-dimensional subspaces $\SU,\h{\SU}_K$ \cite{golub2013matrix}.
The estimation error of $\alg$ is then
\begin{equation}
d_{\mathbb{G}}(\SU,\h{\SU}_K ) := \sqrt{
\frac{1}{r}
\sum_{i=1}^r \sin^2\l( \theta_i (\SU,\h{\SU}_K )
\r)}
=
\frac{\| P_{\SU^\perp}P_{\h{\SU}_K} \|_F}{\sqrt{r}},
\label{eq:err metric}
\end{equation}
which also naturally induces a metric topology on the Grassmannian $\GR(n,r)$.\footnote{Another possible  error metric is simply the largest principal angle $\theta_1(\SU,\h{\SU}_K)$. The two metrics are very closely related: $\theta_1(\SU,\h{\SU}_K)/\sqrt{r} \le d_{\GR}(\SU,\h{\SU}_K) \le \theta_1(\SU,\h{\SU}_K)$. However,  we find that $\theta_1(\SU,\h{\SU}_K)$ is not amenable for analysis of our problem, as opposed to $d_{\GR}(\SU,\h{\SU}_K)$.  } Note also that we will always reserve calligraphic letters for subspaces and capital letters for their orthonormal bases, for example subspace $\SU$ and its orthonormal basis $S$.

Our first result loosely speaking states that {a subsequence of} $\alg$ converges to a stationary point of  the nonconvex Program \eqref{eq:non cvx}  as $T$ goes to infinity, see Section \ref{sec:cvg to crit pnt} for the proof.  
\begin{thm}\label{prop:cvg to critic point} \textbf{\emph{[Global convergence]}} 
Consider an $r$-dimensional subspace $\SU$ with orthonormal basis $S\in\mathbb{R}^{n\times r}$.  For an  integer $T$, let the coefficient vectors $\{q_t\}_{t=1}^T\subset \mathbb{R}^r$  be independent copies of a random vector $q\in\mathbb{R}^r$ with bounded expectation, namely $\mathbb{E}\|q\|_2<\infty$.
For every $t\in[1:T]$,  we observe each coordinate of  $s_t = S q_t \in \SU$ independently with a probability of $p$ and collect the observations in $y_t\in\mathbb{R}^n$, supported on a random index set $\omega_t\subseteq [1:n]$. Fix positive $\lambda$, block size $b\ge r$, positive reject thresholds $\sigma_{\min},\tau$, and consider the  output sequence of $\alg$ in Algorithm \ref{alg:Alg}, namely $\{(R_k,\h{\SU}_k)\}_k$.  Also by partitioning $\{q_t,s_t,y_t,\omega_t\}_t$, form the coefficient blocks $\{Q_k\}_k\subset \R^{r\times b}$, data blocks $\{S_k\}_k\subset \R^{n\times b}$, and incomplete data blocks $\{Y_k\}_k \subset \R^{n\times b}$ supported on index sets $\{\Omega_k\}_k \subseteq [1:n]\times [1:b]$, as described in Section \ref{sec:interp}.

For every integer $l$, there exists an integer $k_l$, for which the following {asymptotic statement  is almost surely true as $T\rightarrow\infty$.} 
Consider  the restriction of Program (\ref{eq:non cvx}) to iteration $k_l$, namely 
\begin{equation}
\label{eq:restricted}
\begin{cases}
\min_{X,\U}\,\,\,  \| P_{\U^\perp} X\|_F^2 + \lambda \| P_{\Omega_{k_l}^C}(X) \|_F^2\\
P_{\Omega_{k_l}}(X) = Y_{k_l},
\end{cases}
\end{equation}
where the minimization is over matrix $X\in\mathbb{R}^{n\times b}$ and $r$-dimensional subspace $\U$. 
 Then there exists $R\in\R^{n\times b}$ and $r$-dimensional subspace $\h{\SU}$ such that
\begin{itemize}
\item $\h{S}$ is the span of leading $r$ left singular vectors of $R$,
\item   $(R,\h{\SU})$ is a stationary pair of Program (\ref{eq:restricted}), namely it satisfies the first-order optimality conditions of Program (\ref{eq:restricted}), 
\item  $\lim_{l\rightarrow\infty} \| R_{k_l} - R \|_F = 0,$
\item $\lim_{l\rightarrow\infty} d_{\GR} (\h{\SU}_{k_l} ,\SU) = 0.$
\end{itemize} 
\end{thm} 
\begin{rem}
\label{rem:disc of thm 1}
\emph{\textbf{[Discussion of Theorem \ref{prop:cvg to critic point}]} Theorem \ref{prop:cvg to critic point}  roughly speaking states that there is a subsequence of $\alg$ that converges to a stationary point of Program \eqref{eq:non cvx}, which was the program designed  in Section \ref{sec:interp} for  PCA from incomplete data or, from a different perspective, for low-rank matrix completion. 
Theorem \ref{prop:cvg to critic point} is however silent about the nature of this stationary point, whether it is a local or global minimizer/maximizer, or a saddle point. To some extent, this question is addressed below in Proposition \ref{prop:if lim exists}. }

\emph{More generally, we have been able to show that this stationary point is in fact rank-$r$. When the block size of $\alg$ is sufficiently large, namely when $b=\Omega(n)$, we can further establish that the  limit point of $\alg$ is  indeed a global minimizer of Program \eqref{eq:non cvx} and moreover $\alg$ recovers the true subspace, namely $\lim_{l \rightarrow\infty} d(\h{\SU}_{k_l},\SU)=0$, with high probability and under certain standard conditions on the  \emph{coherence} of the true subspace $\SU$ and sampling probability $p$. We have not included these results here because $\alg$ is  intended as a streaming algorithm and we are therefore more interested in the setting where  $b=O(r)$, see Remarks \ref{rem:complexity}~and~\ref{rem:storage} about the implementation of $\alg$. It is not currently clear to us when $\alg$ converges in general but, as suggested by Proposition \ref{prop:if lim exists} below, if $\alg$ converges, it does indeed converge to the true subspace $\SU$.   }
\hfill\qedsymbol
\end{rem}
\begin{rem}
\emph{\textbf{[Technical point about Theorem \ref{prop:cvg to critic point}]} 
Note that Theorem \ref{prop:cvg to critic point} is proved for positive (but possibly arbitrarily small)  $\sigma_{\min},\tau$. In particular, an update $(R_k,\h{\SU}_k)$ is rejected if 
\begin{equation}
\frac{\sigma_r(R_k)}{\sigma_{r+1}(R_{k+1})} \le 1+\tau, 
\end{equation}
for an (otherwise arbitrary) positive $\tau$ and whenever the ratio is well-defined. Here, $\sigma_i(R_k)$ is  the $i$th largest singular value of $R_k$. This is merely a technical nuance to avoid the output subspace $\h{\SU}_k$ from oscillating in the limit. Likewise, Theorem \ref{prop:cvg to critic point} does not address the case $\lambda=0$, even though $\lambda$ can be made arbitrarily small in Theorem \ref{prop:cvg to critic point}, see Program \eqref{eq:restricted}. This is again for technical convenience and in fact the numerical simulations in Section \ref{sec:simulations} are all conducted with $\lambda=0$. } \hfill\qedsymbol
\end{rem}

Our second result establishes that, if $\alg$ converges, then it converges to the true subspace $\SU$, see Section \ref{sec:proof of if converges} for the proof. 

\begin{prop}
\label{prop:if lim exists}
\emph{\textbf{[Convergence]}} Consider the setup in the first paragraph of Theorem \ref{prop:cvg to critic point}. Suppose that $r$ independent copies of random coefficient vector $q\in\mathbb{R}^r$ almost surely form a basis for $\mathbb{R}^r$.\footnote{For example, this requirement is met if entries of $q$ are independent Gaussian random variables with zero-mean and unit variance.} Suppose also that the output of $\alg$ converges to an $r$-dimensional subspace $\h{\SU}$, namely 
\begin{equation}
\lim_{k\rightarrow\infty} d_{\GR}(\h{\SU}_k,\h{\SU}) = 0. 
\label{eq:limExists}
\end{equation}
Then almost surely it must hold that $\h{\SU} = \SU$. 
\end{prop}

\begin{rem}
\emph{\textbf{[Discussion of Proposition \ref{prop:if lim exists}]}
Proposition \ref{prop:if lim exists} does not specify the conditions under which $\alg$ converges. Indeed, if the sampling probability $p$ is too small, namely if very few of the entries of incoming vectors are observed, then $\alg$ might  not converge at all as the numerical evidence suggests, see also Remark \ref{rem:disc of thm 1}. However, if $\alg$ converges, then it converges to the true subspace $\SU$. The local rate of convergence is specified below. 
 } \hfill\qedsymbol
\end{rem}

The concept of \emph{coherence} is  critical in  specifying the local convergence rate, since we consider entrywise subsampling. The coherence of an $r$-dimensional subspace $\SU$ with orthonormal basis $S\in\mathbb{R}^{n\times r}$ is defined as
\begin{equation}
\eta\l( \SU\r) := 
\frac{n}{r} \max \l\|S[i,:] \r\|_2^2,
\label{eq:def of coh}
\end{equation}
where $S[i,:]$ is the $i$th row of $S$. It is easy to verify that $\eta(\SU)$ is independent of the choice of orthonormal basis $S$ and that 
\begin{equation}
1 \le \eta(\SU)\le \frac{n}{r}. \label{eq:coh is bounded}
\end{equation} 
It is also common to say that $\SU$ is \emph{coherent} (\emph{incoherent}) when $\eta(\SU)$ is large (small). Loosely speaking, when $\SU$ is coherent, its orthonormal basis $S$ is ``spiky.'' An example is when $\SU$ is the span of a column-subset of the identity matrix. In contrast, when $\SU$ is incoherent, entries of $S$ tend to be ``diffuse.'' Not surprisingly, identifying a coherent subspace from subsampled data  may require many more samples  \cite{balzano2015local,mitliagkas2014streaming,chen2015incoherence}.

We will  also use $\lesssim$ and $\gtrsim$ below to suppress (most of) the  universal constants. 
 Moreover, throughout $C$ represents a universal constant, the value of which is subject to change in every appearance. 

Our next results specify the local convergence rate of $\alg$. Indeed, 
the convergence speed near the true subspace $\SU$ is linear as detailed in Theorems \ref{thm:local cvg expectation} and \ref{thm:main result}, and proved in Section \ref{sec:Refinement}. 
In particular, Theorem \ref{thm:local cvg expectation} states that, when sufficiently small, the expected estimation error of $\alg$ reduces by a factor of $1-p/32$ in every iteration. 
\begin{thm}\textbf{\emph{[Locally linear convergence of $\alg$ in expectation]}} \label{thm:local cvg expectation}
Consider the setup in the first paragraph of Theorem~\ref{prop:cvg to critic point}. Fix a positive tuning parameter $\alpha$, iteration $k\in [2:K]$, and let $\ev_{k-1}$ be the event where
 \begin{equation}
\frac{1}{\sqrt{nb}}\gtrsim  p\gtrsim  \log n \log^2 \l( p\sqrt{r} d(\SU,\h{\SU}_{k-1})  \r)    \frac{\eta(\SU)r}{n},
\label{eq:sampling p exp}
\end{equation}
\begin{equation}
d_{\GR}(\SU,\h{\SU}_{k-1})  
\log\l(\frac{16}{ p\sqrt{r}  d_{\GR}(\SU,\h{\SU}_{k-1}) } \r) 
\lesssim  
\frac{p^{\frac{3}{2}}}{    \sqrt{\log n}},
\label{eq:activate exp thm}
\end{equation} 
\begin{equation}
\|Q_k\| \le \frac{\sigma_{\min}}{\sqrt{1-p/4}},
\label{eq:well-cnd exp}
\end{equation}
where $\sigma_{\min}$ is the reject threshold in $\alg$. 
Let also $\reject_k$ be the event where the $k$th iteration of $\alg$ is not rejected (see Step 3 of Algorithm \ref{alg:Alg}) and let $1_{\reject_k}$ be the indicator  for this event, taking one if the event happens and zero otherwise. 
Then it holds that 
\begin{equation}
\E\l[ 1_{\reject_k} \cdot  d_{\GR}(\SU,\h{\SU}_{k}) \, |\, \h{\SU}_{k-1},\ev_{k-1} \r] \le  \l( 1- \frac{p}{32}\r) d_{\GR}(\SU,\h{\SU}_{k-1}).
\label{eq:exp cvg rate thm}
\end{equation}
\end{thm}

\begin{rem}\label{rem:discussion of local cvg exp}\emph{\textbf{[Discussion of Theorem \ref{thm:local cvg expectation}]} When the sampling probability $p$ is large enough and $\alg$ is near the true subspace $\SU$, Theorem \ref{thm:local cvg expectation}  states that the expected estimation error of $\alg$ reduces by a factor of $1-p/32$, if the iterate of $\alg$ is not rejected. Note that \raisebox{.5pt}{\textcircled{\raisebox{-.9pt} {1}}} The lower bound on the sampling probability $p$ in \eqref{eq:sampling p exp} matches the one in the low-rank matrix completion literature up  to a logarithmic factor \cite{davenport2016overview}. Indeed,   $\alg$ can be interpreted as a streaming matrix completion algorithm as discussed in Section~\ref{sec:interp}. 
 The upper bound on $p$ in \eqref{eq:sampling p exp} is merely for technical convenience and a tidier presentation in the most interesting regime for $p$. Indeed, since we often take $b=O(r)\ll n$, one might loosely read \eqref{eq:sampling p exp} as 
\begin{equation}
\frac{1}{\sqrt{nr}} \gtrsim p \gtrsim \frac{\eta(\SU)r}{n},
\label{eq:loosely reads}
\end{equation}
in which the upper bound  hardly poses a restriction even for moderately large data dimension $n$, as it forces $r =O(n^{\frac{1}{3}})$. \raisebox{.5pt}{\textcircled{\raisebox{-.9pt} {2}}} Ignoring the logarithmic factors for simplicity, we may read  \eqref{eq:activate exp thm} as $d_{\GR}(\SU,\h{\SU}_{k-1}) \lesssim p^{3/2}$, which ``activates'' \eqref{eq:exp cvg rate thm}. In other words, the basin of attraction of the true subspace $\SU$ as a (possibly local) minimizer of the (nonconvex) Program (\ref{eq:non cvx}) has a {radius of} $O(p^{3/2})$.  
\raisebox{.5pt}
{\textcircled{\raisebox{-.9pt} {3}}}
The indicator $1_{\reject_k}$ in \eqref{eq:exp cvg rate thm} removes the rejected iterates and similar conditions implicitly exist in the analysis of other streaming PCA algorithms~\cite{balzano2015local}. 
} 
\hfill\qedsymbol
\end{rem}

Note that Theorem \ref{thm:local cvg expectation} \emph{cannot} tell us what the local convergence rate of $\alg$ is, even in expectation. Indeed, the expected reduction in the estimation error of $\alg$, specified in \eqref{eq:exp cvg rate thm}, is not enough to activate \eqref{eq:activate exp thm} for  the next iteration (namely, with $k$ instead of $k-1$). That is, we cannot apply Theorem \ref{thm:local cvg expectation} iteratively and find the expected convergence rate of $\alg$. A key technical contribution of this paper is specifying the local behaviour of $\alg$ below. 
With high probability, the estimation error does not \emph{increase} by much in every iteration near  the true subspace. However, only in some of these iterations  does the error reduce.
Overall, on a long enough interval, the estimation error of $\alg$ near the  true subspace indeed reduces substantially and with high probability as detailed in Theorem \ref{thm:main result} and proved in Section \ref{sec:Refinement}. Performance guarantees for stochastic algorithms on long intervals is not uncommon, see for example \cite{cartis2017global}.

\begin{thm}\label{thm:main result} \textbf{\emph{[Locally linear convergence of $\alg$]}}
Consider the setup in the first paragraph of Theorem \ref{prop:cvg to critic point}. Suppose that the output $\h{\SU}_{K_0}$ of $\alg$ at iteration $K_0\in [2:K]$ satisfies 
\begin{equation}
d_{\GR}(\SU,\h{\SU}_{K_0})\lesssim  
 \frac{e^{-\frac{Cp^3 n b}{\widetilde{\eta}}} p^{\frac{7}{2}} nb}{  \wt{\eta}  \log b  \log\l( C\wt{\eta}n \r) \log^2(K-K_0)   },
 \label{eq:close enough thm}
\end{equation}
and that 
\begin{align}
K-K_0 \gtrsim \frac{ \wt{\eta} \log b \log(K-K_0)}{p^2 nb }.
\label{eq:large enough interval}
\end{align}
Then it holds that 
\begin{align}
\prod_{k=K_0+1}^{K} 1_{\reject_k} \cdot 
 d(\SU,\h{\SU}_K) 
 \lesssim    \l( 1- \frac{Cp^3 nb}{\wt{\eta}  \log b\log(K-K_0)}   \r)^{K-K_0}  d(\SU,\h{\SU}_{K_0}) , 
 \label{eq:cvg rate thm whp}
 \end{align}
except with a probability of at most
\begin{equation}
b^{-C\log(K-K_0)} + \sum_{k=K_0+1}^K \Pr\l[\|Q_k\|>\l( 1+ \frac{Cp^3nb}{\wt{\eta}\log b}\r)\sigma_{\min} \r]
\label{eq:fail pr main thm}
\end{equation}
and provided that 
\begin{equation}
 \frac{1}{\sqrt{nb}} \gtrsim p \gtrsim \log^2 b \log n \frac{\eta(\SU) r }{n}.
 \label{eq:sampling probability whp}  
\end{equation}
Above, $\sigma_{\min}$ is the reject threshold of $\alg$ and 
\begin{equation}
\wt{\eta} = \max_{k\in [K_0:K]} \wt{\eta}_k,
\end{equation}
\begin{equation}
\wt{\eta}_{k} 
:= nb\cdot \frac{\|P_{\h{\SU}_{k-1}^\perp} S_k\|^2_{\infty}}{\|P_{\h{\SU}_{k-1}^\perp}S_k\|_F^2}.
\label{eq:eta tilda}
\end{equation}

\end{thm}

\begin{rem}
\emph{\textbf{[Discussion of Theorem \ref{thm:main result}]} 
Loosely speaking, Theorem \ref{thm:main result} states that with high probability the estimation error of $\alg$ over $O(\wt{\eta}n)$ iterations reduces linearly (i.e., exponentially fast) when $\alg$ is near the true subspace and the sampling probability $p$ is large enough. 
Most of the remarks about Theorem~\ref{thm:local cvg expectation} are also valid here. 
Let us also point out that the dependence on the coefficient matrix $Q_k$ in \eqref{eq:fail pr main thm} is mild but necessary. 
As an example, consider the case where the coefficient vectors $\{q_t\}_t$ are standard random Gaussian vectors so that the coefficient matrices $\{Q_k\}_k$ are standard random Gaussian matrices, namely populated with independent zero-mean Gaussian random variables with unit variance. Then by taking 
$$
\sigma_{\min} = {C\sqrt{b}}/\l(1+ \frac{Cp^3nb}{\wt{\eta}\log b}\r),
$$
we find that 
\begin{equation}
\Pr\l[ \|Q_k\| > \l(1+\frac{Cp^3 nb}{\widetilde{\eta}\log b} \r) \sigma_{\min}\r]  = \Pr\l[\|Q_k\| \gtrsim \sqrt{b} \r] \le e^{-Cb}
\end{equation}
and consequently the failure probability in \eqref{eq:fail pr main thm} becomes $b^{-C\log(K-K_0)}+(K-K_0)e^{-Cb}$, which can be made arbitrarily small by modestly increasing the block size $b$.
For the reader's convenience, Appendix \ref{sec:cnd coh of Gaussian} collects the relevant spectral properties of a standard random Gaussian matrix. 
The dependence on $\|Q_k\|$ in Theorem~\ref{thm:local cvg expectation} is \emph{not} an artifact of our proof techniques. Indeed, when  $\|Q_k\|\gg 1$, it is likely that certain directions in $\SU$ are ``over represented'' which will skew the estimate of $\alg$ in their favor.
}
 \hfill\qedsymbol
\end{rem}

\begin{rem}
\emph{\textbf{[Coherence factor $\wt{\eta}$]} A key quantity in Theorem \ref{thm:main result} is the new ``coherence'' factor $\wt{\eta}$ which is absent in the expected behavior of $\alg$ in Theorem \ref{thm:local cvg expectation}. Somewhat similar to the coherence $\eta(\cdot)$ in \eqref{eq:def of coh}, $\wt{\eta}_k$ measures how ``spiky'' the matrix $ P_{\SU^\perp_{k-1}} S_k\in\R^{n\times b}$ is. In fact, one may easily verify that 
\begin{equation}
\wt{\eta}_k \le \mbox{rank}(P_{\SU^\perp_{k-1}} S_k) \cdot \nu(P_{\SU^\perp_{k-1}} S_k)^2 \cdot \eta(\mbox{span}(P_{\SU^\perp_{k-1}} S_k) ) \cdot \eta(\mbox{span}( S_k^* P_{\SU^\perp_{k-1}})),
\end{equation}
where $\nu(P_{\SU^\perp_{k-1}}S_k)$ is the condition number of $P_{\SU^\perp_{k-1}}S_k$, namely the ratio of largest to smallest nonzero singular values. The number of iterations needed to see a reduction  in estimation error in \eqref{eq:large enough interval} and the convergence rate of $\alg$ in \eqref{eq:cvg rate thm whp} both prefer  $\wt{\eta}$ to be small, namely prefer that $\{P_{\SU_{k-1}^\perp}S_k\}_k$ are all incoherent as measured by $\wt{\eta}_k$.}

\emph{When $\alg$ is close enough to true subspace $\SU$ as required in \eqref{eq:close enough thm}, one would expect that iterates of $\alg$ would be  nearly as coherent as $\SU$ itself in the sense that $\eta(\h{\SU}_k)\approx \eta(\SU)$. This intuition is indeed correct and also utilized in our analysis. However, even when $d_{\GR}(\SU,\h{\SU}_{k-1})$ is small and $\eta(\h{\SU}_{k-1}),\eta(\SU)$ are both small, $\wt{\eta}_{k}$ might be very large, namely $P_{\h{\SU}_{k-1}^\perp}S_k$ might be a  spiky matrix. Indeed, when $b=r$, $(\h{S}_{k-1}^\perp)^*S_k$ is approximately the (horizontal) tangent at $\h{\SU}_{k-1}$ to the geodesic on the Grassmannian $\GR(n,r)$ that connects $\h{\SU}_{k-1}$ to $\SU$. Even though both $\h{\SU}_{k-1}$ and $\SU$ are incoherent subspaces, namely $\eta(\h{\SU}_{k-1}),\eta(\SU)$ are both small, 
the tangent direction connecting the two is not necessarily incoherent. {Despite the dependence of our results on $\wt{\eta}$,} it is entirely possible that $\alg$ with high probability approaches the true subspace $\SU$ from an incoherent tangent direction, of which there are many. Such a result has remained beyond our reach. In fact, similar questions arise in matrix completion.   Iterative Hard Thresholding (IHT) is a powerful algorithm for matrix completion with excellent empirical performance \cite{tanner2013normalized}, the convergence rate of which has remained unknown for many years. With $\{M_k\}_k$ denoting the iterates of IHT, it is not difficult to see that if the differences $\{M_k-M_{k-1}\}_k$ are incoherent matrices (i.e., not spiky), then the linear convergence rate of IHT follows from rather standard arguments. 
\hfill\qedsymbol}
\end{rem}

\section{Related Work}
\label{sec:related work}

In this paper, we presented $\alg$ for streaming PCA from incomplete data and, from a different perspective, $\alg$ might be considered as a streaming matrix completion algorithm, see Section \ref{sec:interp}. In other words, $\alg$ is a ``subspace tracking'' algorithm that identifies the linear structure of data as it arrives. Note also that $t$ in our framework need not correspond to time, see Figure \ref{fig:streamVis}. For example, only a small portion of a large data matrix $Y$ can be stored in the fast access memory of the processing unit, which could instead use $\alg$ to fetch and process the data in small chunks and 
iteratively update the principal components. Moreover, $\alg$ can be easily adapted to the dynamic case where the distribution of data changes over time. In dynamic subspace tracking for example, the (hidden) data vector $s_t$ is drawn from the subspace $\SU(t)\in\GR(n,r)$ that varies with time. Likewise, it is easy to slightly modify  $\alg$ to handle noisy observations or equivalently to the case where $\{s_t\}_t$ are generated from a distribution with possibly full-rank covariance matrix. We leave investigating both of these directions to a future work. 


Among several algorithms that have been proposed for tracking low-dimensional structure in a dataset from partially observed streaming data~\cite{mitliagkas2014streaming,chi2013petrels,mardani2015subspace,xie2013change,eftekhari2016expect}, $\alg$ might be most closely related to $\grouse$ \cite{balzano2010online,balzano2013grouse}. $\grouse$ performs streaming PCA from incomplete data using stochastic gradient projection on the Grassmannian, updating  its estimate of the true subspace with each new incomplete vector. Both $\grouse$ and $\alg$ were designed based on the principle of least-change, discussed in Section~\ref{sec:interp}. In fact, when $\grouse$ is sufficiently close to the true subspace and with a specific choice of its step length, both algorithms have nearly identical updates, see \cite[Equation 1.9]{balzano2015local}. A weaker analogue of Theorem \ref{thm:local cvg expectation} for GROUSE was recently established  in \cite{balzano2015local}. More specifically, \cite{balzano2015local} stipulates that, if the current estimate $\widehat{\SU}_k$ is sufficiently close to the true subspace $\SU$, then $\widehat{\SU}_{k+1}$ will be even closer to $\SU$  in expectation. Such a result however \emph{cannot} tell us what the local convergence rate of  $\alg$ is, even in expectation, see the discussion right before Theorem \ref{thm:main result} above. In this sense, a key technical contribution of our work is establishing the local linear convergence of $\alg$, see Theorem \ref{thm:main result}, which is missing from its close competitor $\grouse$. In fact, the global convergence guarantees listed in Theorem \ref{prop:cvg to critic point} and Proposition \ref{prop:if lim exists} are also unique to $\alg$; such theoretical guarantees are not available for $\grouse$.

It might be interesting to add that our proposed update in $\alg$ was inspired by that of $\grouse$ when we found zero-filled updates were unreliable \cite{eftekhari2016expect}. However, $\grouse$ was derived as a purely streaming algorithm, and it therefore is not designed to leverage common low-rank structure that may be revealed when a block of vectors is processed at once. Therefore, for each block $\alg$ often achieves a more significant reduction in error than is possible with $\grouse$. %

Lastly, both $\alg$ and $\grouse$ have a computational complexity of $O(nr^2)$ flops 
 per incoming vector, see Remark \ref{rem:complexity}. Also, $\alg$ and $\grouse$ both require $O(nr)$ memory elements of storage, see Remark~\ref{rem:storage}.  
 With complete data, namely when no entries are missing, a close relative of both $\alg$ and $\grouse$ are incremental SVD algorithms, a class of  algorithms that efficiently compute the SVD of streaming data  \cite{bunch1978updating,balsubramani2013fast,oja1985stochastic,watanabe1973subspace,balsubramani2013fast,brand2002incremental}.

%

A streaming PCA algorithm might also be interpreted as a stochastic algorithm for PCA \cite{arora2012stochastic}. Stochastic projected gradient ascent in this context is closely related to the classical power method. In particular, the algorithm in \cite{mitliagkas2014streaming} extends the power method to  handle missing data, in part by improving the main result of \cite{lounici2014high}. With high probability, this algorithm converges globally and linearly to the true subspace and, most notably, succeeds for arbitrarily small sampling probability $p$, if the scope of the algorithm $T$ is large enough. Additionally, this algorithm too has a  computational complexity of   $O(nr^2)$ operations per vector and a storage requirement of $O(nr)$ memory elements. In practice, $\alg$ substantially outperforms the power method, as we will see in  Section \ref{sec:simulations}. A disadvantage of the power method is that it updates its estimate of the true subspace with every $O(n)$ incoming vectors; the waiting time might be prohibitively long if $n$ is large. In contrast, $\alg$ frequently updates its estimate with every $b=O(r)$ incoming vectors. As we will see in Section \ref{sec:simulations}, $\alg$ substantially outperforms the power method in practice. Let us add that POPCA \cite{gonen2016subspace} is closely related to the power method, for which the authors provide lower bounds on the achievable sample complexity. {However, POPCA has substantially greater memory demand than $\alg$, since it maintains an estimate of the possibly dense $n\times n$ sample covariance matrix of incoming data.}

 The $\operatorname{PETRELS}$ algorithm \cite{chi2013petrels} operates on one column at a time (rather than blocks) and global convergence for $\operatorname{PETRELS}$, namely convergence to a stationary point  of the underlying nonconvex program, is known. Designed for streaming {matrix completion}, the algorithm in \cite{mardani2015subspace} also operates on one column at a time  and asymptotic onvergence to the true subspace is established, see Propositions 2 and 3 therein. This framework is also extended to tensors. $\operatorname{MOUSSE}$ in  \cite{xie2013change} tracks a union of subspaces  rather than just one;  $\alg $ would function more like an ingredient of this algorithm. Asymptotic consistency of $\operatorname{MOUSSE}$ is also established there.
The theoretical guarantees of $\alg$ are more comprehensive in the sense that we also offer local convergence rate for $\alg$, see Theorems \ref{thm:local cvg expectation} and \ref{thm:main result}. ReProcs, introduced in \cite{lois2015online}, tracks a slowly changing subspace when initialized sufficiently close.

In the next section, we compare the performance of several of these algorithms in practice and find that $\alg$ competes empirically with state-of-the-art algorithms.

Even though we consider uniform random sampling of the entries of incoming vectors, $\alg$ can be applied to any incomplete data. For example, instead of uniform sampling analyzed here, one can perhaps sample the entries of every incoming vector based on their estimated importance. More specifically, in iteration $k$, one might  observe  each entry of the incoming vector with a probability proportional to the \emph{leverage score} of the corresponding row of the current estimate $\h{\SU}_{k-1}$. In batch or offline matrix completion, using the idea of \emph{leveraged sampling} (as opposed to uniform sampling)  alleviates the dependence on the coherence factor $\eta(\SU)$ in  \eqref{eq:sampling p exp} \cite{eftekhari2016mc,chen2015incoherence}. While interesting, we have not pursued this direction in the current work. 


\section{Simulations}
\label{sec:simulations}

This section consists of two parts: first, we empirically study the dependence of $\alg$ on various parameters, and second we compare $\alg$ with existing algorithms for streaming subspace estimation with missing data.
 In all simulations, we consider an $r$-dimensional subspace $\SU\subset\mathbb{R}^n$ and a sequence of generic vectors $\{s_t\}_{t=1}^T\subset \SU$. Each entry of these vectors is observed with probability $p\in(0,1]$ and collected in  vectors $\{y_t\}_{t=1}^T\subset\mathbb{R}^n$. Our objective is to estimate $\SU$ from $\{y_t\}$, as described in Section \ref{sec:problem statement}.

\paragraph{Sampling probability} We first set $n=100$, $r=5$, and let $\SU$ be a generic $r$-dimensional subspace, namely the span of an $n\times r$ standard random Gaussian matrix. For various values of probability $p$, we run $\alg$ with  block size $b=2r=10$ and scope of $T=500r=2500$,  recording the average estimation error $d_{\mathbb{G}}(\SU,\h{\SU}_K)$ over $50$ trials, see \eqref{eq:err metric}. The average error versus probability is plotted in Figure \ref{fig:variable prob}.

\paragraph{Subspace dimension}
With the same setting as the previous paragraph,  we now set $p=3r/n=0.15$ and vary the subspace dimension $r$, block size $b=2r$, and scope $T=500r$. The average error versus subspace dimension is plotted in Figure \ref{fig:variable rank}.

\paragraph{Ambient dimension}
This time, we set $r=5$, $p=3r/n$, $b=2r$, $T=500r$, and vary the ambient dimension $n$. In other words, we vary $n$ while keeping the number of samples per vector fixed at about $ pn = 3r$.
The average error versus ambient dimension is plotted in Figure \ref{fig:variable ambient}. Observe that the performance of $\alg$ steadily degrades as $n$ increases. This is in agreement with Theorem \ref{thm:local cvg expectation} by substituting $p=3r/n$ there, {which states that the error reduces by a factor of $1-Cr/n$, in expcetation.} A similar behavior is observed for our close competitor, namely $\grouse$ \cite{balzano2015local}.

\paragraph{Block size} 
Next we set $n=100$, $r=5$, $p=3r/n$, $T=500r$, and vary the block size $b$.
The average error versus block size in both cases is depicted in Figure \ref{fig:variable block}. From Step 3 of Algorithm \ref{alg:Alg}, a block size of $b\ge r$ is necessary for the success of $\alg$ and qualitatively speaking larger values of $b$ lead to better stability in face of missing data, which might explain the poor performance of $\alg$ for very small values of $b$. However, as $b$ increases, the number of blocks $ K=T/b$ reduces because the scope $T$ is  held fixed. As the estimation error of $\alg$ scales like $(1-cp)^{-K}$ in Theorem \ref{thm:main result} for a certain factor $c$, the performance suffers in Figure~\ref{fig:variable block}. It appears that the choice of $b=2r$ in $\alg$ guarantees the best empirical performance.

\paragraph{Coherence}
Lastly, we set $n=300$, $r=10$, $p=3r/n$, $b=2r$, and $T=500r$. We then test the performance of $\alg$  as the coherence of $\SU$ varies, see \eqref{eq:def of coh}. To that end, let $\SU\subset\mathbb{R}^n$ be a generic subspace with orthonormal basis $S\in\mathbb{R}^{n\times r}$. {In particular $S$ is obtained by orthogonalizing the columns of a standard $n\times n$ random Gaussian matrix.} Then, the average coherence of $\SU$ over $50$ trials was $3.3334\ll n/r$ and  the average estimation error of $\alg$  was $2.795\cdot 10^{-5}$. On the other hand, let $D\in\mathbb{R}^{n\times n}$ be a diagonal matrix with entries $D[i,i]=i^{-1}$ and consider $S'=DS$. Unlike $\SU$, the new subspace $\SU':=\mbox{span}(S')$ is typically coherent since the energy of its orthonormal basis $S'$ is mostly concentrated along its first few rows. This time, the average coherence of $\SU'$ over $50$ trials was $19.1773\approx n/r$ and the average estimation error of $\alg$ was substantially worse at $0.4286$. 

\begin{figure}[ht!]
\begin{center}

\subfloat[\label{fig:variable prob}]{\protect\includegraphics[width=0.49\textwidth]{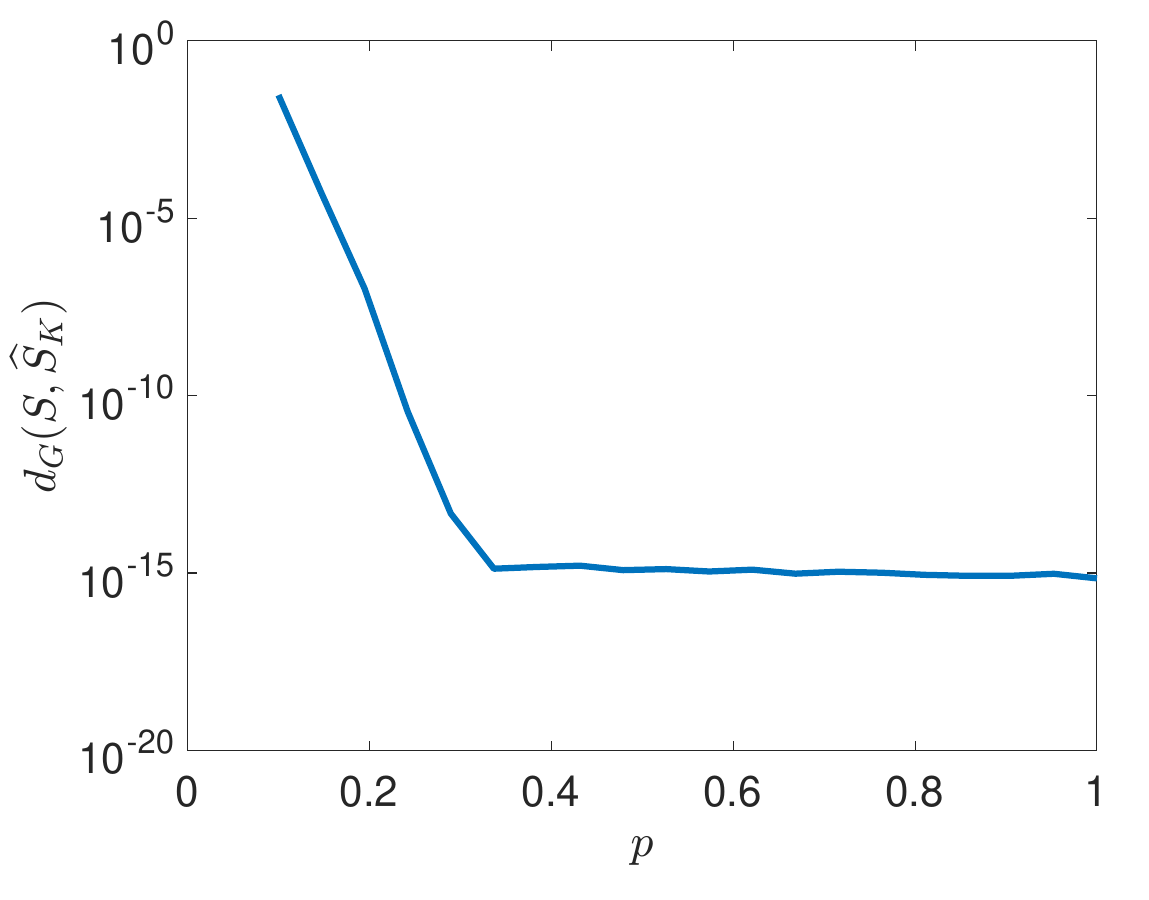}
}
\subfloat[\label{fig:variable rank}]{\protect\includegraphics[width=0.49\textwidth]{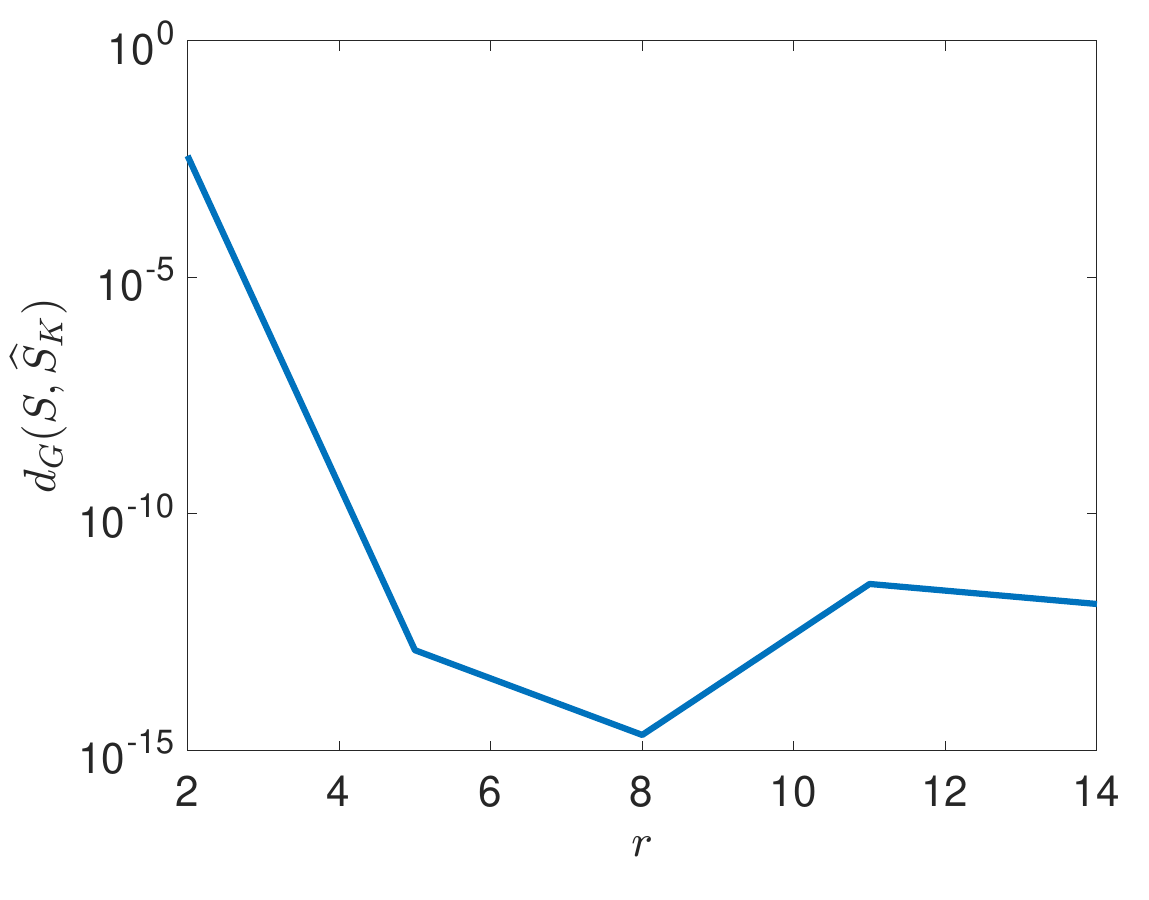}
}

\subfloat[\label{fig:variable ambient}]{\protect\includegraphics[width=0.49\textwidth]{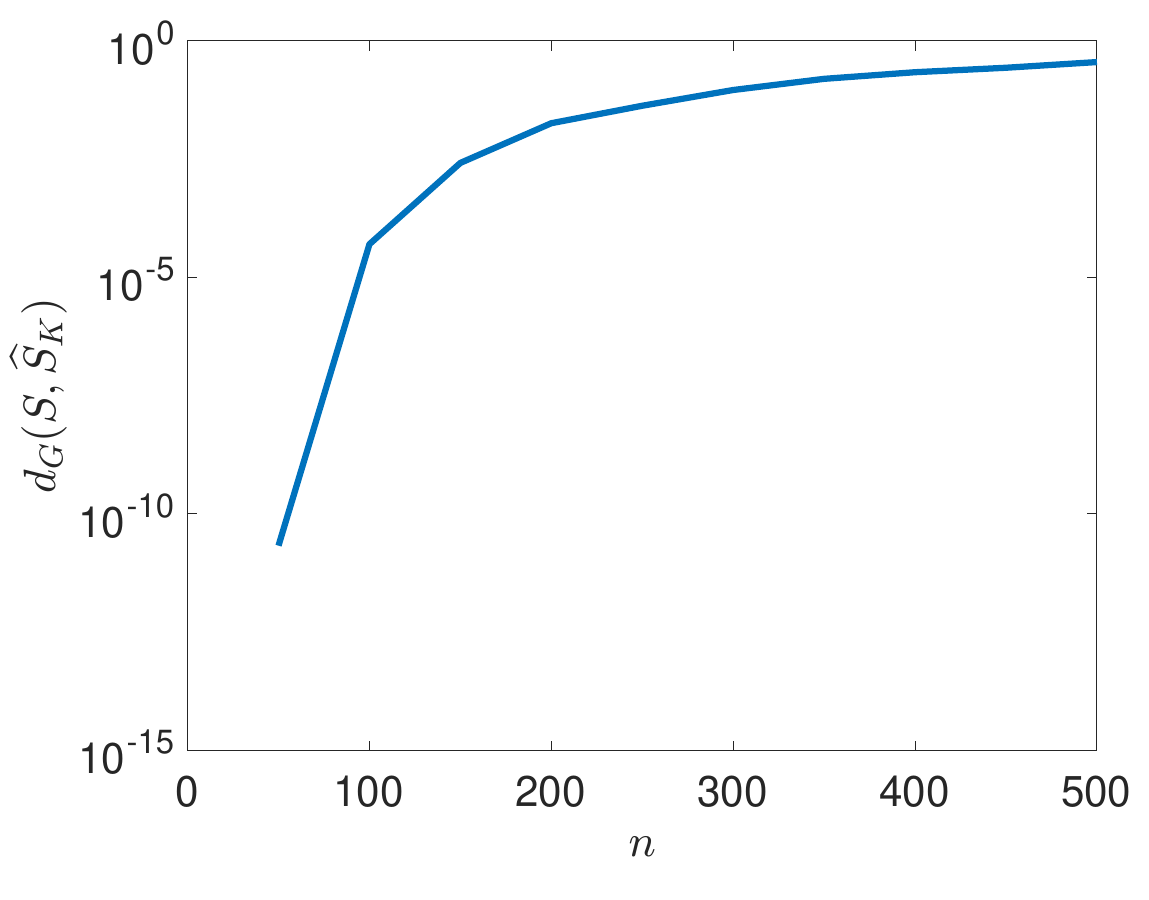}
}
\subfloat[\label{fig:variable block}]{\protect\includegraphics[width=0.49\textwidth]{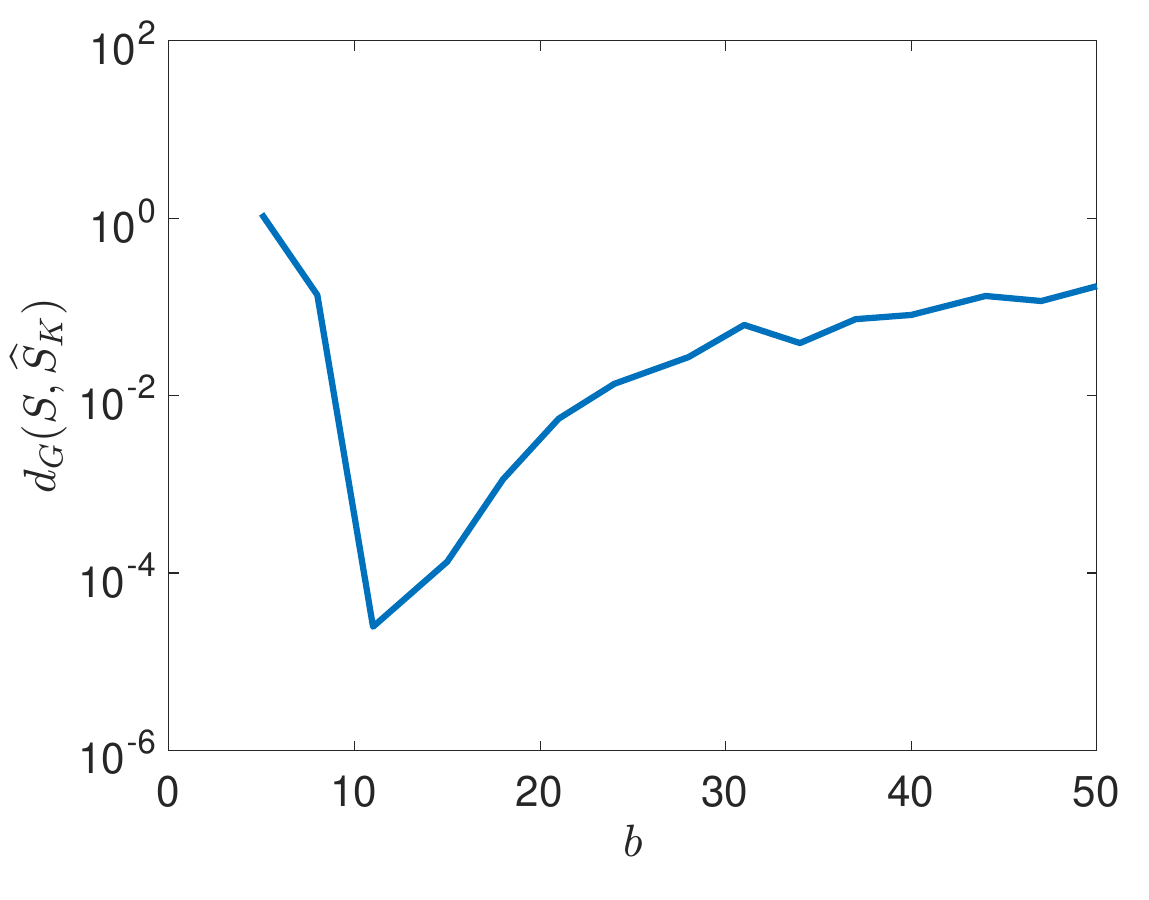}
}

\caption{Performance of $\alg$ as (a) sampling probability $p$, (b) subspace dimension $r$, (c) ambient dimension $n$, (d) block size $b$ vary. $\widehat{\SU}_K$ is the output of $\alg$ and $d_G(\SU,\widehat{\SU}_K)$ is its distance to the true subspace $\SU$, which generated the input of $\alg$. 
See Section \ref{sec:simulations} for details {and note that each panel is generated with a different and random subspace $\SU$.}}
\end{center}
\end{figure}

\paragraph{Comparisons}

Next we empirically compare $\alg$ with $\grouse$ \cite{balzano2010online,zhang2016global}, PETRELS \cite{chi2013petrels}, and the modified power method in \cite{mitliagkas2014streaming}. 
In addition to the version of $\alg$ given in Algorithm 1, we also include comparisons with a simple variant of $\alg$, which we call $\operatorname{SNIPE-overlap}$. Unlike $\alg$ which processes disjoint blocks,  $\operatorname{SNIPE-overlap}$ processes all overlapping blocks of data. More precisely, for a block size $b$, $\operatorname{SNIPE-overlap}$ first processes data columns $t=1,2,...,b$, followed by columns $t=2,3,...,b+1$, and so on, whereas regular $\alg$ processes columns $t=1,2,...,b$ followed by $t=b+1,b+2,...,2b$, etc. The theory developed in this paper does not hold for $\operatorname{SNIPE-overlap}$ because of lack of statistical independence between iterations, but we include the algorithm in the comparisons since it represents a minor modifications of the $\operatorname{SNIPE-overlap}$ framework and appears to have some empirical benefits, as detailed below.

In these experiments, we set $n=100$, $r=5$, $T=5000$, and take $\SU\subset\mathbb{R}^n$ to be a generic $r$-dimensional subspace and simulate noiseless data samples as before. In Figure \ref{fig:comparison} we compare the algorithms for three values of sampling probability $p$, which shows the average over $100$ trials of the estimation error of algorithms (with respect to the metric $d_{\mathbb{G}}$) relative to the number of revealed data columns. For $\alg$, we used the block size of $b=2r$. Having tried to get the best performance from $\grouse$, we used the ``greedy'' step-size as proposed in \cite{zhang2016global}. For \cite{mitliagkas2014streaming}, we set the block size as $b=1000$ which was found empirically to yield the lowest subspace error after $T=5000$ iterations. 

In Table \ref{table1} we also compare the average number of revealed columns needed to reach a given error tolerance for each algorithm (as measured by error metric $d_{\mathbb{G}}$) for various values of the sampling probability $p$. We omit the modified power method from the results since it was unable to reach the given error tolerances in all cases. For the medium/high sampling rates $p=0.45,0.60,0.75$, $\operatorname{SNIPE-overlap}$ is fastest to converge, while regular $\alg$ is competitive with $\grouse$ and $\operatorname{PETRELS}$. For the lower sampling rates $p=0.15,0.30$ we find $\grouse$ yields the fastest convergence, although $\operatorname{SNIPE-overlap}$ is also competitive with $\grouse$ for $p=0.30$.



\begin{table}
\begin{adjustbox}{width=\columnwidth}
\begin{tabular}{c||c|c||c|c||c|c||c|c||c|c}
 & \multicolumn{2}{|c||}{$p=0.15$} & \multicolumn{2}{|c||}{$p=0.30$} & \multicolumn{2}{|c||}{$p=0.45$} & \multicolumn{2}{|c||}{$p=0.60$}  & \multicolumn{2}{|c}{$p=0.75$} \\ 
 & $d_{\mathbb{G}} < 10^{-3}$ & $d_{\mathbb{G}} < 10^{-7}$ & $d_{\mathbb{G}} < 10^{-3}$ & $d_{\mathbb{G}} < 10^{-7}$ & $d_{\mathbb{G}} < 10^{-3}$ & $d_{\mathbb{G}} < 10^{-7}$ & $d_{\mathbb{G}} < 10^{-3}$ & $d_{\mathbb{G}} < 10^{-7}$ & $d_{\mathbb{G}} < 10^{-3}$ & $d_{\mathbb{G}} < 10^{-7}$\\ \hline
GROUSE & {\bf 878.0} (76.1) & {\bf 1852.4} (85.2) & {\bf 294.9} (20.6) & {\bf 646.1} (26.8) & 181.6 (13.8) & 391.2 (18.0) & 130.2 (10.3) & 277.2 (12.8) & 105.1 (11.7) & 213.5 (12.8)  \\ \hline
PETRELS & 1689.0 (1394.1) & 2853.7 (916.9) & 421.8 (31.3) & 1100.5 (64.2) & 262.1 (25.3) & 802.0 (31.1) & 181.8 (20.2) & 671.4 (22.8) & 133.3 (21.3) & 599.1 (24.0) \\ \hline
SNIPE & 1815.7 (137.9) & 3946.9 (182.4) & 537.4 (39.9) & 1236.4 (62.5) & 282.4 (25.8) & 649.6 (41.5) & 171.4 (17.2) & 391.4 (25.4) & 105.5 (9.1) & 241.6 (15.6) \\ \hline
SNIPE-overlap & 1588.3 (183.8) & 3319.5 (232.9) & 318.7 (27.1) & 704.6 (36.8) & {\bf 131.8} (11.6) & {\bf 296.4} (17.4) & {\bf 71.3} (5.7) & {\bf 155.6} (9.9) & {\bf 44.2} (4.2) & {\bf 91.8} (6.8)  \\ \hline
\end{tabular}
\end{adjustbox}
\caption{Average number of revealed data columns needed to reach the indicated subspace recovery error for various sampling probabilities $p$ over 100 random trials. Standard deviations are given in parenthesis.}
\label{table1}
\end{table}

\begin{figure}[h]
\begin{center}
\subfloat[$p=0.30$]{\protect\includegraphics[width=0.49\textwidth]{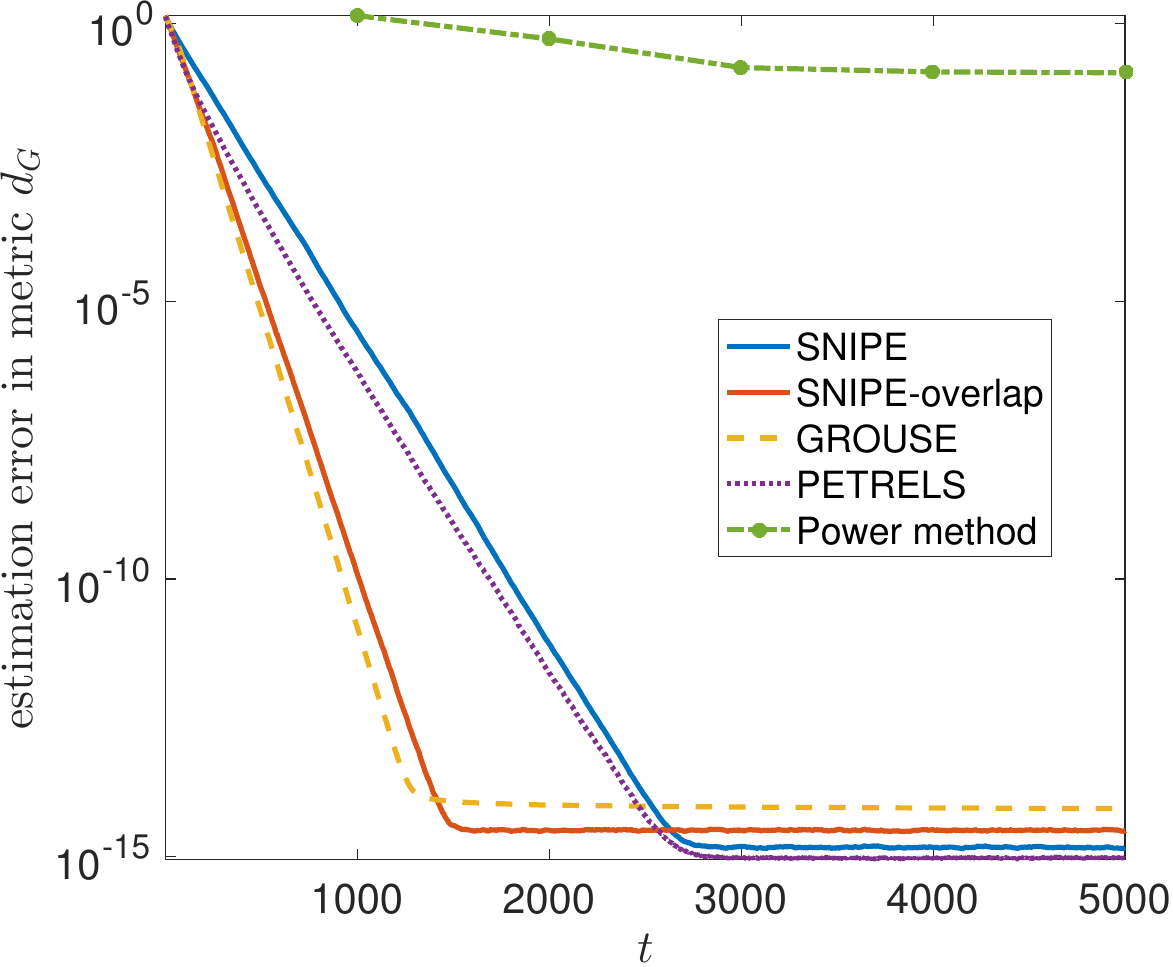}}~
\subfloat[$p=0.45$]{\protect\includegraphics[width=0.49\textwidth]{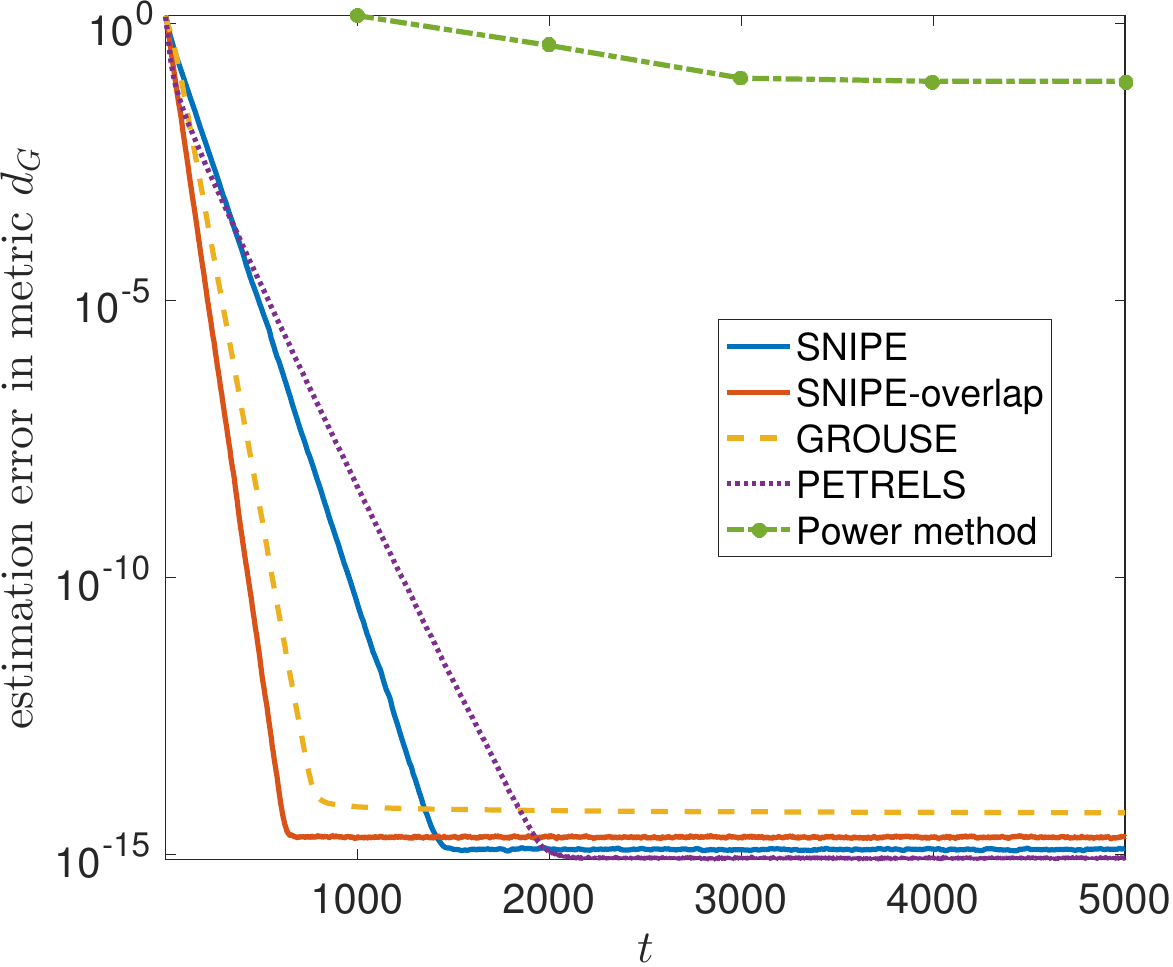}}~

\subfloat[$p=0.60$]{\protect\includegraphics[width=0.49\textwidth]{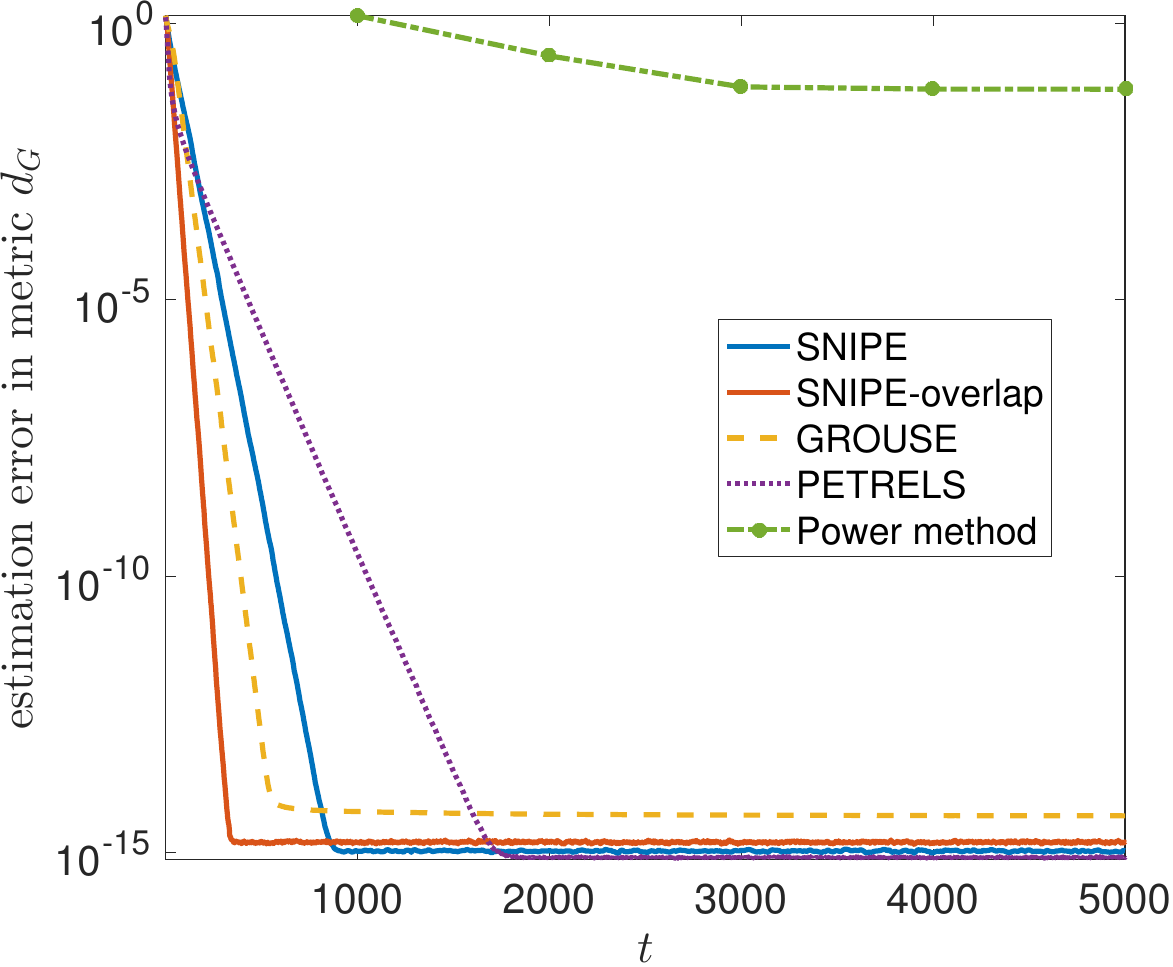}}

\caption{Average subspace estimation error versus number of revealed data columns at the specified sampling probability $p$, see Section \ref{sec:simulations} for details.}
\label{fig:comparison}
\end{center}
\end{figure}





\section{Theory}
\label{sec:analysis}

In this section, we  prove  the technical results presented in Section \ref{sec:theory}. 
A short word on notation is in order first. We will frequently use MATLAB's matrix notation so that, for example, $A[i,j]$ is the $[i,j]$th entry of $A$, and the row-vector $A[i,:]$ corresponds to the $i$th row of $A$. By $\{\epsilon_i\}_i \overset{\operatorname{ind.}}{\sim} \mbox{Bernoulli}(p)$, we mean that $\{\epsilon_i\}_i$ are independent Bernoulli random variables taking one with probability of $p$ and zero otherwise. Throughout, $E_{i,j}$ stands for the $[i,j]$th canonical matrix so that $E_{i,j}[i,j]=1$ is its only nonzero entry. The size of $E_{i,j}$ may be inferred from the context. As usual, $\|\cdot\|$ and $\|\cdot\|_F$ stand for the spectral and Frobenius norms. In addition, $\|A\|_\infty$ and $\|A\|_{2\rightarrow\infty}$ return the largest entry of a matrix $A$ (in magnitude) and the largest $\ell_2$ norm of the rows of $A$, respectively. Singular values of a matrix $A$ are denoted by $\sigma_1(A)\ge \sigma_2(A)\ge \cdots$. 
For purely aesthetic reasons, we will occasionally use the notation $a \vee b = \max(a,b)$ and $a \wedge b = \min(a,b)$.

\subsection{Convergence of SNIPE to a Stationary Point (Proof of Theorem \ref{prop:cvg to critic point}) \label{sec:cvg to crit pnt}}

Consider Program \eqref{eq:non cvx}, namely 
\begin{equation}
\begin{cases}
\min \,\,\, f_\Omega\l(X,\U\r) := \| P_{\U}^\perp X \|_F^2 + \lambda\| P_{\Omega^C} (X)\|_F^2 ,\\
P_{\Omega}(X) = Y,
\end{cases}
\label{eq:main pr}
\end{equation}
where the minimization is over matrix $X\in\mathbb{R}^{n\times T}$ and subspace $\U\in\GR(n,r)$. 
Before proceeding with the rest of the proof, let us for future reference record the partial derivatives of $f_\Omega$ below. Consider a small perturbation to $X$ in the form of  $X+\Delta$, where $\Delta\in\mathbb{R}^{n\times T}$. Let $U\in\mathbb{R}^{n\times r}$ and $U^\perp$ be orthonormal bases for $\U$ and its orthogonal complement $\U^\perp$, respectively. Consider also a small perturbation to $U$ in the form of $U+U^\perp \Delta'\in\mathbb{R}^{n\times r}$, where $\Delta'\in\mathbb{R}^{(n-r)\times r}$. The perturbation to $f_\Omega(X,\U) = f_\Omega(X,U)$ can be written as 
\begin{align}
f_\Omega\l(X+\Delta,U+U^\perp \Delta'\r)  = f_\Omega\l(X,\U \r) + 
\l\langle \Delta , \partial_X f_\Omega\l(X,\U\r) \r\rangle + \l\langle\Delta',\partial_{\U} f_\Omega\l(X,\U \r) \r\rangle +o(\|\Delta\|_F)+o(\|\Delta'\|_F),
\end{align}
where $o(\cdot)$ is the standard little-$o$ notation. The partial derivatives of $f_\Omega$ are listed below and derived in Appendix \ref{lem:derivatives}. 
\begin{lem}\label{lem:stationary pnts}
For $f_\Omega$ in Program (\ref{eq:main pr}), the first-order partial derivatives at $(X,\U)\in\mathbb{R}^{n\times T}\times \GR(n,r)$ are 
\begin{equation}
\partial_X f_\Omega\l(X,\U \r) = 2 P_{\U^\perp} X + 2\lambda P_{\Omega^C}(X) \in\mathbb{R}^{n\times T} ,
\qquad \partial_{\U} f_\Omega\l(X,\U\r) =- 2 (U^\perp)^*XX^*U \in\mathbb{R}^{(n-r)\times r},
\label{eq:first partials of f}
\end{equation}
where $U\in\mathbb{R}^{n\times r}$ and $U^\perp\in\mathbb{R}^{n\times (n-r)}$ are orthonormal bases for $\U$ and its orthogonal complement, respectively.  
\end{lem}

Recall  that $Q_k\in \R^{r\times b}$ is a random matrix with bounded expectation, namely $\E \|\Q_k\|_F< \infty$. As $K\rightarrow\infty$, $\{Q_k\}_{k=1}^K$ therefore has a bounded subsequence. To keep the notation simple and without any loss of generality, we assume that in fact the sequence $\{Q_k\}_k$ is itself bounded. 
As $K\rightarrow\infty$ and for an integer $l$, we can always find an interval of length $l$ over which  the same index set and nearly the same coefficient matrix repeats. More specifically, consider an index set $\h{\Omega}\subseteq [1:n]\times [1:b]$ and a matrix $\h{Q}\in\R^{r\times b}$ in the support of the distributions from which $\{\Omega_k\}_{k=1}^K$ and $\{Q_k\}_{k=1}^K$ are drawn.  For every integer $l$, {as a result of the second Borel-Cantelli lemma \cite[pg.~64]{durrett2010probability}, almost surely} there exists a contiguous interval 
\begin{equation}
\I_l := [k_l-l+1:k_l],
\label{eq:interval 0}
\end{equation}
such that 
\begin{equation}
\Omega_k = \h{\Omega},\qquad  k\in \I_l,
\label{eq:repeated pattern}
\end{equation}
\begin{equation}
\max_{k\in \I_l}\| Q_{k} - \h{Q} \|_F \le \frac{1}{l}. 
\label{eq:repeated coeff}
\end{equation}
As $l\rightarrow\infty$, the measurements corresponding to the interval $\I_{l}$ converge. To be specific, let  $\h{Y} :=  P_{\h{\Omega}}(S\h{Q})$ and note that 
\begin{align}
\lim_{l\rightarrow\infty}\max_{k\in \I_{l}}\,\,\,\| Y_k-\h{Y}\|_F & = \lim_{l\rightarrow\infty} \max_{k\in \I_{l}}\,\,\, \| P_{\Omega_k}(SQ_k )-\h{Y}\|_F \nonumber\\
& =\lim_{l\rightarrow\infty} \max_{k\in \I_{l}} \,\,\, \| P_{\h{\Omega}}(SQ_k)  - \h{Y}\|_F \nonumber\\
& = \lim_{l\rightarrow\infty} \max_{k\in \I_{l}}\,\,\, \|  P_{\h{\Omega}} (S(Q_k- \h{Q}) )\|_F \nonumber\\
& \le \lim_{l\rightarrow\infty} \max_{k\in \I_{l}}\,\,\, \|  Q_k- \h{Q} \|_F \nonumber\\
& = 0.
\label{eq:limit behaviour}
\end{align}
{The above observation encourages us to exchange $(\Omega_k,Y_k)$ with $(\h{\Omega},\h{Y})$ on the interval $\I_{l}$.} Let us therefore study the program 
\begin{equation}
\begin{cases}
\min\,\,\, f_{\h{\Omega}}(X,\U),\\
 P_{\h{\Omega}}(X) = \h{Y} ,
 \end{cases}
 \label{eq:main eps zero}
\end{equation}
where the minimization is over all matrices $X\in\R^{n\times b}$ and subspaces  $\U\in\GR(n,r)$. From a technical viewpoint, it is  in fact more convenient to relax the equality constraint above as 
\begin{equation}
\begin{cases}
\min\,\,\, f_{\h{\Omega}}(X,\U),\\
 \| P_{\h{\Omega}}(X) - \h{Y} \|_F\le \epsilon,
 \end{cases}
 \label{eq:main eps}
\end{equation}
for $\epsilon > 0$. We fix $\epsilon$ for now.  Let us next use alternative minimization to solve Program \eqref{eq:main eps}. More specifically, recall  \eqref{eq:interval 0} and consider the initialization $\h{\SU}_{k_l-l,\epsilon} := \h{\SU}_{k_l-l}$, where $\h{\SU}_{k_l-l}$ is the output of $\alg$ at iteration $k_l-l$, see Algorithm \ref{alg:Alg}. For every $k\in \I_{l}$, consider the program 
\begin{align}
\begin{cases}
\min\,\,\, f_{\h{\Omega}}(X,\h{\SU}_{k-1,\epsilon}),\\
\| P_{\h{\Omega}}(X) - \h{Y} \|_F \le \epsilon,
\end{cases}
\label{eq:def of R_eps}
\end{align}
and let $R_{k,\epsilon}$ be a minimizer of Program \eqref{eq:def of R_eps}. 
We then update the subspace by solving 
\begin{equation}
\underset{\U\in \GR(n,r)}{\min}\,\,\, f_{\h{\Omega}}(R_{k,\epsilon},\U),\\
\label{eq:alt proof subspace}
\end{equation}
and setting $\h{\SU}_{k,\epsilon}$ to be a minimizer of Program \eqref{eq:alt proof subspace}. Recalling the definition of $f_{\h{\Omega}}$ in Program \eqref{eq:main pr} and in light of  the Eckart-Young-Mirsky Theorem,  Program \eqref{eq:alt proof subspace} can be solved by computing top $r$ left singular vectors of $R_{k,\epsilon}$ \cite{eckart,mirsky}. 
For future reference, note that the optimality and hence stationarity of $\h{\SU}_{k,\epsilon}$ in Program~\eqref{eq:alt proof subspace} dictates  that 
\begin{equation}
\partial_{\U} f_{\h{\Omega}}(R_{k,\epsilon},\h{\SU}_{k,\epsilon}) = 0, \qquad k\in \I_l,
\label{eq:st leg 1}
\end{equation}
where $\partial_{\U}f$ was specified in Lemma \ref{lem:stationary pnts}. 
From the above construction of the sequence  $\{ (R_{k,\epsilon},\h{\SU}_{k,\epsilon})\}_{k\in \I_{l}}$, we also observe that 
\begin{equation}
0 \le f_{\h{\Omega}}(R_{k,\epsilon},\h{\SU}_{k,\epsilon}) \le f_{\h{\Omega}}(R_{k,\epsilon},\h{\SU}_{k-1,\epsilon}) \le f_{\h{\Omega}}(R_{k-1,\epsilon},\h{\SU}_{k-1,\epsilon}),
\label{eq:nest}
\end{equation}
for every $k\in [k_l-l+2:k_l] \subset \I_l$, see \eqref{eq:interval 0}. 
That is, $\{f_{\h{\Omega}}(R_{k,\epsilon},\h{\SU}_{k,\epsilon})\}_{k\in \I_{l}}$ is a nonincreasing and nonnegative sequence.
It therefore holds that 
\begin{align}
\lim_{l\rightarrow\infty}  \l| f_{\h{\Omega}}(R_{k_l-1,\epsilon},\h{\SU}_{k_l-1,\epsilon})  - f_{\h{\Omega}}(R_{k_l,\epsilon},\h{\SU}_{k_l-1,\epsilon})  \r| = 0.
\label{eq:other maximizer}
\end{align}
By the feasibility of $R_{k_l-1,\epsilon}$ in Program \eqref{eq:def of R_eps} and by the continuity of $f_{\h{\Omega}}(X,\U)$ in $X$, we conclude in light of \eqref{eq:other maximizer} that  $R_{k_l-1,\epsilon}$ too is a minimizer (and hence also a stationary point) of Program \eqref{eq:def of R_eps} and in the limit of $l\rightarrow\infty$. 
We therefore find  by writing the stationarity conditions of Program \eqref{eq:def of R_eps} at $R_{k_l,\epsilon}$ that 
\begin{equation}
\| P_{\h{\Omega}}(R_{k_l,\epsilon}) - \h{Y} \|_F \le \epsilon, 
\label{eq:st cnd 1.5}
\end{equation}
\begin{equation}
\lim_{l\rightarrow\infty} \l\| \partial_X f_{\h{\Omega}}(R_{k_l,\epsilon},\h{\SU}_{k_l,\epsilon}) + \lambda_{k_l,\epsilon} (P_{\h{\Omega}}(R_{k_l,\epsilon})-\h{Y}) \r\|_F = 0. 
\label{eq:st of fixed Y cnd 2}
\end{equation}
for nonnegative $\lambda_{k_l,\epsilon}$.
Recalling the definition of $f_{\h{\Omega}}$ and that $\lambda>0$ by assumption, we observe that Program~\eqref{eq:def of R_eps} is strongly convex in $P_{\h{\Omega}^C}(X)$ and consequently any pair of minimizers of Program \eqref{eq:def of R_eps} must agree on the index set $\h{\Omega}^C$. Optimality of $R_{k_l,\epsilon}$ and limit optimality of $R_{k_l-1,\epsilon}$ in Program~\eqref{eq:def of R_eps} therefore imply that 
\begin{equation}
\lim_{l\rightarrow\infty } \l\| P_{\h{\Omega}^C}(R_{k_l-1,\epsilon}- R_{k_l,\epsilon})\r \|_F = 0. 
\label{eq:cauchy 1}
\end{equation}
On the index set $\h{\Omega}$, on the other hand, the feasibility of both $R_{k_l-1,\epsilon}$ and $R_{k_l,\epsilon}$ in Program \eqref{eq:def of R_eps} implies that 
\begin{align}
\| P_{\h{\Omega}}(R_{k_l-1,\epsilon}- R_{k_l,\epsilon} ) \|_F & 
\le  \| P_{\h{\Omega}}(R_{k_l-1,\epsilon})- \h{Y}  \|_F + \| P_{\h{\Omega}}(R_{k_l,\epsilon})- \h{Y}  \|_F 
\qquad \mbox{(triangle inequality)}
\nonumber\\
& \le 2\epsilon.
\label{eq:cauchy 2}
\end{align} 
Combining \eqref{eq:cauchy 1} and \eqref{eq:cauchy 2} yields that 
\begin{align}
& \lim_{l\rightarrow\infty } \| R_{k_l-1,\epsilon}- R_{k_l,\epsilon} \|_F   \nonumber\\
& \le \lim_{l\rightarrow\infty } \| P_{\h{\Omega}}(R_{k_l-1,\epsilon}- R_{k_l,\epsilon}) \|_F + \lim_{l\rightarrow\infty } \| P_{\h{\Omega}^C}(R_{k_l-1,\epsilon}- R_{k_l,\epsilon} )\|_F 
\qquad \mbox{(triangle inequality)}
\nonumber\\
& 
\le 2\epsilon, \qquad \mbox{(see (\ref{eq:cauchy 1},\ref{eq:cauchy 2}))}
\label{eq:Rkeps cvgs pre}
\end{align}
In light of \eqref{eq:def of R_eps}, $\{R_{k_l,\epsilon}\}_l$ is bounded and consequently has a convergent subsequence. Without loss of generality and to simplify the notation, we assume that $\{R_{k_l,\epsilon}\}_l$ is itself convergent, namely that there exists $R_\epsilon\in\R^{n\times b}$ for which
\begin{equation}
\lim_{l\rightarrow\infty } \| R_{k_l,\epsilon}- R_{\epsilon} \|_F 
\le 2\epsilon .
\label{eq:implication of lemma}
\end{equation}
Let us now send $\epsilon$ to zero in \eqref{eq:implication of lemma} to obtain that 
\begin{equation}
\lim_{\epsilon\rightarrow 0}\lim_{l\rightarrow\infty } \| R_{k_l,\epsilon}- R_{\epsilon} \|_F 
\le \lim_{\epsilon\rightarrow 0}2\epsilon = 0.
\label{eq:one limit}
\end{equation}
We next show that it is possible to essentially change the order of limits above and also conclude that $(R,\h{\SU})$ coincides with the output of $\alg$ in limit. The following result is proved in Appendix \ref{sec:proof of lemma exchange}.
\begin{lem}\label{lem:exchange order of lims} With the setup above, there exist a sequence $\{\epsilon_i\}_i$ with $\lim_{i\rightarrow\infty} \epsilon_i = 0$ and a matrix $R\in\R^{n\times b}$ such that 
\begin{align}
\lim_{l,i\rightarrow\infty}  \|R_{k_l,\epsilon_i}-R\|_F & 
= \lim_{l\rightarrow\infty}\lim_{i\rightarrow\infty}  \|R_{k_l,\epsilon_i}-R\|_F \nonumber\\
& = \lim_{i\rightarrow\infty}\lim_{l\rightarrow\infty}  \|R_{k_l,\epsilon_i}-R\|_F \nonumber\\
& = 0.
\label{eq:final lim of R_k,i}
\end{align}
Moreover, suppose that the output of $\alg$ in every iteration has a spectral gap in the sense that  there exists $\tau>0$ such that 
\begin{equation}
\frac{\sigma_r(R_{k})}{\sigma_{r+1}(R_{k})} \ge  1+\tau, 
\label{eq:gap designed}
\end{equation}
for every $k$. Let $\h{\SU}_{k_l,\epsilon_i}$ and $\h{\SU}$ be the span of  top $r$ left singular vectors of $R_{k_l,\epsilon_i}$ and $R$, respectively. Then it holds that 
\begin{equation}
\lim_{l,i\rightarrow\infty}  d_{\GR}(\h{\SU}_{k_l,\epsilon_i},\SU) = 0.
\label{eq:final lim of S_ki}
\end{equation}
Lastly, in the limit of $l\rightarrow\infty$, $\alg$ produces $(R,\h{\SU})$ in every iteration, namely 
\begin{equation}
 \lim_{l\rightarrow\infty }   \| R_{k_l} - R\|_F = 
  \lim_{l\rightarrow\infty } d_{\GR}(\h{\SU}_{k_l},\h{\SU}) = 0,
\end{equation}
where $(R_{k_l},\h{\SU}_{k_l})$ is the output of $\alg$ in iteration $k_l$, see Algorithm \ref{alg:Alg}. 
\end{lem}
In fact, the pair $(R,\h{S})$ from Lemma \ref{lem:exchange order of lims} is stationary in limit in the sense described next and proved in Appendix \ref{sec:proof of final blockwise st}.
\begin{lem}\label{lem:final blockwise st}
The pair $(R,\h{S})$ in Lemma \ref{lem:exchange order of lims} is a stationary point of the program 
\begin{equation}
\begin{cases}
\min\,\,\, f_{\Omega_{k_l}}(X,\U),\\
P_{\Omega_{k_l}}(X) = Y_{k_l},
\end{cases}
\label{eq:ideal block}
\end{equation}
as $l\rightarrow\infty$. The minimization above is over all matrices $X\in\R^{n\times b}$ and subspaces $\U\in\GR(n,r)$. More specifically, it holds that 
\begin{equation}
\lim_{l\rightarrow\infty}  \| \partial_{\U} f_{\Omega_{k_l}}(R,\h{\SU}) \|_F = 0,
\label{eq:st cnd pre final 1}
\end{equation}
\begin{equation}
\lim_{l\rightarrow\infty}\| P_{{\Omega}_{k_l}}(R) - {Y}_{k_l} \|_F =0, 
\label{eq:st cnd pre final 2}
\end{equation}
\begin{equation}
\lim_{l\rightarrow\infty}  \| P_{{\Omega}_{k_l}^C}( \partial_X f_{\Omega_{k_l}}(R,\h{\SU}) )\|_F = 0. 
\label{eq:st cnd pre final 3}
\end{equation}
\end{lem}
In words, Lemmas \ref{lem:exchange order of lims} and \ref{lem:final blockwise st} together imply that the output of $\alg$ in limit is a stationary point of Program \eqref{eq:ideal block}. 
This completes the proof of Theorem \ref{prop:cvg to critic point}.

\subsection{Convergence of SNIPE (Proof of Proposition \ref{prop:if lim exists}) \label{sec:proof of if converges}}

In iteration $k$ of $\alg$, we partially observe the data block $S Q_k \in \R^{n\times b}$ on a random index set $\Omega_k\subset[1:n]\times [1:b]$, where $Q_k\in\R^{r\times b}$ is a random coefficient matrix. We collect the observations in $Y_k = P_{\Omega_k}(SQ_k) \in\R^{n\times b}$, see Sections \ref{sec:proposed alg} and \ref{sec:interp} for the detailed setup.  
Note that $R_k$ in \eqref{eq:Rk} can be written as 
\begin{equation}
R_k = Y_k + P_{\Omega_k^C}(\h{S}_{k-1} Q'_k) = P_{\Omega_k}(SQ_k) + P_{\Omega_k^C}(\h{S}_{k-1} Q'_k),
\end{equation}
where 
\begin{equation}
Q'_k := \l[
\begin{array}{ccc}
\cdots & \l( \h{S}_{k-1}^* P_{\omega_t} \h{S}_{k-1} + \lambda I_r  \r)^{\dagger} y_t & \cdots 
\end{array}
\r] \in\R^{r\times b}.
\label{eq:def of Qp}
\end{equation}
By \eqref{eq:limExists}, there exists $Q''_k\in\mathbb{R}^{r\times b}$ and 
\begin{equation}
R'_k :=  P_{\Omega_k}(SQ_k) + P_{\Omega_k^C}(\h{S} Q''_k),
\label{eq:def of Rprime}
\end{equation}
such that 
\begin{equation}
\lim_{k\rightarrow\infty}\|R_k-R'_k\|_F=0. 
\label{eq:RnRprime}
\end{equation}
 In \eqref{eq:def of Rprime} above, $\h{S}\in\mathbb{R}^{n\times r}$ is an orthonormal basis for the subspace $\h{\SU}$. 
 In Algorithm \ref{alg:Alg}, the rank-$r$ truncated SVD of $R_k$ spans $\h{\SU}_k\in\GR(n,r)$, namely the output of $\alg$ in iteration $k$. Let also $\h{\SU}'_k\in\GR(n,r)$ denote the span of rank-$r$ truncated SVD of $R'_k$.
The existence of the reject option in Algorithm \ref{alg:Alg} with positive $\tau$ implies that $R_k$ has a spectral gap and therefore $\widehat{\SU}_k$ is uniquely defined. Combining this with \eqref{eq:RnRprime}, we find that $\widehat{\SU}_k'$ too is uniquely defined in the limit of $k\rightarrow\infty$. Therefore another consequence of \eqref{eq:RnRprime} is that 
\begin{equation}
\lim_{k\rightarrow\infty} d_{\GR}(\h{\SU}_k,\h{\SU}_k') = 0.
\label{eq:RnRprime subspaces}
\end{equation}
  Then we have that 
\begin{align}
\lim_{k\rightarrow\infty}d_{\GR}(\h{\SU}'_k,\h{\SU}) & \le  \lim_{k\rightarrow \infty}d_{\GR}(\h{\SU}_k',\h{\SU}_k) + \lim_{k\rightarrow \infty} d_{\GR}(\h{\SU}_{k},\h{\SU}) \nonumber\\
& \le 0+0, 
\qquad \mbox{(see (\ref{eq:RnRprime subspaces},\ref{eq:limExists}))}
\label{eq:true and Skprime}
\end{align}
namely, $\h{\SU}'_k$ converges to $\h{\SU}$ in the limit too. 
Let us now rewrite $R'_k$ as 
\begin{equation}
R'_k =  P_{\Omega_k}(SQ_k- \h{S} Q''_k) + \h{S} Q''_k,
\qquad \mbox{(see \eqref{eq:def of Rprime})}
\end{equation}
which, together with \eqref{eq:true and Skprime}, implies that 
\begin{equation}
\lim_{k\rightarrow\infty } \| \h{S}^* P_{\Omega_k}( S Q_k - \h{S} Q''_k  ) \|_F = 0. 
\end{equation}
We can rewrite the above limit in terms of the data vectors (rather than data blocks) to obtain that 
\begin{equation}
\lim_{t\rightarrow\infty } \| \h{S}^* P_{\omega_t}( S q_t - \h{S} q''_t  ) \|_F = 0,
\label{eq:limRewritten}
\end{equation}
where $\{q_t,q''_t\}_t$ form the columns of the blocks $\{Q_k,Q''_k\}_k$, and the index sets $\{\omega_t\}_t\subseteq [1:n]$ form $\{\Omega_k\}_k$.  There almost surely exists a subsequence $\{t_i\}_i$ over which $\omega_{t_i} = \{1\}$, namely there is a subsequence where we only observe the first entry of the incoming data vector. Consider a vector $q^1\in\R^r$ in the support of the distribution from which $\{q_t\}_t$ are drawn. Then there also exists a subsequence of $\{t_i\}_i$, denoted by $\{t_{i_j}\}_{i_j}$, such that $\lim_{j\rightarrow\infty}\|q_{t_{i_j}}-q^1\|_2=0$.  Restricted to the subsequence $\{t_{i_j}\}_j$, \eqref{eq:limRewritten} reads as 
\begin{align}
0 & = \lim_{j\rightarrow\infty} \| \h{S}^* P_{\omega_{t_{i_j}}}( S q_{t_{i_j}} - \h{S} q''_{t_{i_j}}  ) \|_F \nonumber\\
& = \lim_{j\rightarrow\infty} \| \h{S}^* P_{\{1\}}( S q_{t_{i_j}} - \h{S} q''_{t_{i_j}}  ) \|_F \nonumber\\
& = \| \h{S}^* P_{\{1\}}( S q^1 - \h{S} q''^1  ) \|_F, 
\end{align} 
where we set $q''^1 := \lim_{j\rightarrow\infty} q''_{t_{i_j}}$; the limit exists by \eqref{eq:def of Qp}.  Likewise, we can show that 
\begin{equation}
 v^l := P_{\{l\}}( S q^l - \h{S} q''^l  ) \in \SU^\perp , 
\qquad l\in [1:n],
\end{equation}
where $\{q''^l\}_l$ are defined similarly. 
Because $\dim(\SU^\perp)=n-r$, at most $n-r$ of the vectors $\{v^l\}_{l=1}^n$ are linearly independent. Because the supports of $\{v^l\}_l$ are disjoint, it follows that there are at most $n-r$ of the vectors $\{v^l\}_{l=1}^n$ are nonzero. Put differently, there exists an index set $I\subset[1:n]$ of size at least $r$ such that 
\begin{equation}
Sq^l = \h{S} q''^l,
\qquad l\in I.
\end{equation}
Almost surely, $\{q^l\}_{l\in I}\subset\R^r$ form a basis for $\R^r$, 
and therefore $\SU \subseteq \h{\SU} $. Because $\h{\SU}\in\GR(n,r)$ by assumption, it follows that $\h{\SU}=\SU$, which completes the proof of Proposition \ref{prop:if lim exists}.

\subsection{Locally Linear Convergence of SNIPE (Proof of Theorems \ref{thm:local cvg expectation} and \ref{thm:main result}) \label{sec:Refinement}}

At iteration $k\in[2:K]$, $\alg$ uses the current estimate $\h{\SU}_{k-1}$ and the new incomplete block $Y_k$ to produce a new estimate $\h{\SU}_k$ of the true subspace $\SU$.  The main challenge here is to compare the new and old principal angles with $\SU$, namely compare $d_{\GR}(\SU,\h{\SU}_k)$ and $d_{\GR}(\SU,\h{\SU}_{k-1})$. Lemma \ref{lem:whp main} below, proved in Appendix~\ref{sec:Proof-of-Lemma whp main},  loosely speaking states that  $d_{\GR}(\SU,\h{\SU}_k)$ reduces by a factor of $1-O(p)$ in expectation in every iteration, when $d_{\GR}(\SU,\h{\SU}_k)\lesssim p^{\frac{5}{2}}$ and ignoring all other parameters in this qualitative discussion. In other words, when sufficiently small, the estimation error of $\alg$ reduces in every iteration, but in expectation. The actual behavior of $\alg$ is  more nuanced. Indeed, Lemma \ref{lem:whp main} below also adds that the  estimation error $d_{\GR}(\SU,\h{\SU}_{k})$ in fact contracts in \emph{some} iterations by a factor of  ${1-Cp}$, namely 
$$
d_{\GR}(\SU,\h{\SU}_{k}) \lesssim  ({1-Cp}) \cdot d_{\GR}(\SU,\h{\SU}_{k-1}),
$$
provided that $d_{\GR}(\SU,\h{\SU}_{k-1})\lesssim p^{\frac{5}{2}}$. That is, when sufficiently small, the estimation error of $\alg$ reduces in some but not all iterations.
In the rest of iterations,  the error does not increase by much, namely  
$$
d_{\GR}(\SU,\h{\SU}_{k}) \approx  d_{\GR}(\SU,\h{\SU}_{k-1}),
$$
with  high probability and provided that $d_{\GR}(\SU,\h{\SU}_{k-1})\lesssim p^{\frac{5}{2}}$. 
\begin{lem}
\label{lem:whp main}
Fix $k\in[2:K]$, $\alpha,\nu\ge 1$, and $c>0$. 
 Let $\ev_{k-1}$ be the event where 
 \begin{equation}
p\gtrsim\alpha^{2}\log^{2} b \log n  \frac{\eta(\widehat{\SU}_{k-1})r}{n},
\label{eq:p large enough lemma}
\end{equation}
\begin{equation}
d_{\GR} (\SU,\h{\SU}_{k-1})\lesssim  
 \frac{p^{\frac{7}{2}}nb}{\alpha   c \log b \sqrt{r \log n}},
\label{eq:activate}
\end{equation} 
and let $\ev'_k$ be the event where $\|Q_k\|\le \nu\cdot \sigma_{\min}$, where  $\sigma_{\min}$ is the reject threshold in $\alg$, see Algorithm \ref{alg:Alg}.
Then it holds that
\begin{align}
\E\l[  d_{\GR} (\SU,\h{\SU}_{k})\, |\, \h{\SU}_{k-1},\ev_{k-1},\ev'_{k} \r]
& \le   \nu \l(1-\frac{p}{2}+\frac{p^3nb}{c}\r)d_{\GR} (\SU,\h{\SU}_{k-1})+ \frac{b^{-C\alpha}}{\sqrt{r}}.
\label{eq:dragon in exp}
\end{align}
Moreover, conditioned on $\h{\SU}_{k-1}$ and  the event $\ev_{k-1}\cap\ev'_k$, it   holds that 
\begin{equation}
d_{\GR} (\SU,\h{\SU}_{k})
\le 
\nu
  \l(1+ \frac{p^3nb}{c } \r)
d_{\GR} (\SU,\h{\SU}_{k-1}),
 \label{eq:nearly contraction}
\end{equation}
except with a probability of at most $b^{-C\alpha}$.
Lastly, a  stronger bound holds conditioned on $\h{\SU}_{k-1}$ and the event $\ev_{k-1}\cap\ev'_k$, namely 
\begin{equation}
d_{\GR} (\SU,\h{\SU}_{k})
\le 
\nu
 \l(1-\frac{p}{4}+ \frac{p^3nb}{c}  \r)
d_{\GR} (\SU,\h{\SU}_{k-1})
 \label{eq:geo fast}
\end{equation}
with a probability of at least 
\begin{equation}
\fail_k(\alpha) := 1-\exp\l(-\frac{\Cl{fail}p^2nb}{\wt{\eta}_{k-1}}\r)-b^{-C\alpha},
\label{eq:failk}
\end{equation}
where 
\begin{equation}
\wt{\eta}_{k} = \wt{\eta}(P_{\h{\SU}_{k-1}^\perp}S_k) := nb\cdot \frac{\|P_{\h{\SU}_{k-1}^\perp} S_k\|^2_{\infty}}{\|P_{\h{\SU}_{k-1}^\perp}S_k\|_F^2}.
\label{eq:eta tilda}
\end{equation}
\end{lem}

 Let us now use Lemma \ref{lem:whp main} to complete the proofs of  Theorems \ref{thm:local cvg expectation} and \ref{thm:main result}.

\subsubsection{Proof of Theorem \ref{thm:local cvg expectation}}

With the choice of $c=4p^2nb$ and $\nu = 1/\sqrt{1-p/4}$, 
\eqref{eq:dragon in exp} reads as
\begin{align}
\E\l[  d_{\GR}(\SU,\h{\SU}_k) \,|\, \h{\SU}_{k-1},\ev_{k-1},\ev'_k \r]
& \le   \nu \l(1-\frac{p}{2}+\frac{p^3nb}{c}\r)
d_{\GR}(\SU,\h{\SU}_{k-1}) + \frac{b^{-C\alpha}}{\sqrt{r}} 
\qquad \mbox{(see \eqref{eq:dragon in exp})}
\nonumber\\
& =  \sqrt{ 1-\frac{p}{4} } d_{\GR}(\SU,\h{\SU}_{k-1})+ \frac{b^{-C\alpha}}{\sqrt{r}}  \nonumber\\
& \le \l( 1- \frac{p}{8}\r) d_{\GR}(\SU,\h{\SU}_{k-1})+ \frac{b^{-C\alpha}}{\sqrt{r}}.
\label{eq:exp proof pre}
\end{align}
With the choice of 
\begin{equation}
\alpha = -\frac{C\log \l( p \sqrt{r} d_{\GR}(\SU,\h{\SU}_{k-1})/16\r)}{\log b},
\label{eq:choice of alpha exp}
\end{equation}
for an appropriate constant $C$ above, the bound in \eqref{eq:exp proof pre} simplifies to 
\begin{align}
\E\l[  d_{\GR}(\SU,\h{\SU}_{k}) \,|\, \h{\SU}_{k-1},\ev_{k-1},\ev'_k \r] & 
\le  \l( 1- \frac{p}{8}\r) d_{\GR}(\SU,\h{\SU}_{k-1}) + \frac{b^{-C\alpha}}{\sqrt{r}} 
\qquad \mbox{(see \eqref{eq:exp proof pre})}
\nonumber\\
& \le  \l( 1- \frac{p}{8}\r) d_{\GR}(\SU,\h{\SU}_{k-1}) +\frac{p}{16} d_{\GR}(\SU,\h{\SU}_{k-1})\nonumber\\
& \le  \l( 1- \frac{p}{16} \r)  d_{\GR}(\SU,\h{\SU}_{k-1}).
\label{eq:base cond}
\end{align}
Lastly we remove the conditioning on $\ev'_k$ above. Using the law of total expectation, we write that 
\begin{align}
& \E\l[ d_{\GR}(\SU,\h{\SU}_k) \,|\, \h{\SU}_{k-1},\ev_{k-1} \r]  \nonumber\\
& = \E\l[  d_{\GR}(\SU,\h{\SU}_k) \,|\, \h{\SU}_{k-1},\ev_{k-1},\ev'_k \r] \cdot \Pr[\ev'_k] +
\E\l[  d_{\GR}(\SU,\h{\SU}_k) \,|\, \h{\SU}_{k-1},\ev_{k-1},\ev'^C_k \r] \cdot \Pr[\ev'^C_k]  \nonumber\\
& \le \E\l[  d_{\GR}(\SU,\h{\SU}_k) \,|\, \h{\SU}_{k-1},\ev_{k-1},\ev'_k \r] +
 \Pr[\ev'^C_k]  
\qquad \mbox{(see \eqref{eq:err metric})} \nonumber\\
& \le  \l( 1- \frac{p}{16} \r)  d_{\GR}(\SU,\h{\SU}_{k-1})+ \Pr[\ev'^C_k]   
\qquad \mbox{(see \eqref{eq:base cond})} \nonumber\\
& \le  \l( 1- \frac{p}{32} \r) d_{\GR}(\SU,\h{\SU}_{k-1}), 
\end{align}
where the last line holds if 
$$
\Pr[\ev'^C_k] \le  \frac{p }{32 } d_{\GR}(\SU,\h{\SU}_{k-1}).
$$
With the choice of $c,\nu,\alpha$ above, let us also rewrite the event $\ev_{k-1}$ in Lemma \ref{lem:whp main}. First, we rewrite \eqref{eq:activate}  as 
\begin{align}
d_{\GR}(\SU,\h{\SU}_{k-1}) & \lesssim  
  \frac{p^{\frac{7}{2}}nb}{\alpha c  \log b \sqrt{r \log n}} \nonumber\\
& =  \frac{ Cp^{\frac{3}{2}}}{\alpha  \log b \sqrt{r\log n}}
\nonumber\\
& = -\frac{Cp^{\frac{3}{2}}}{ \log \l(p\sqrt{r} d_{\GR}(\SU,\h{\SU}_{k-1}) /16 \r)  \sqrt{r \log n}}.
\qquad \mbox{(see \eqref{eq:choice of alpha exp})}
\label{eq:activate exp}
\end{align} 
Second, we replace the coherence $\eta(\h{\SU}_{k-1})$ in \eqref{eq:p large enough lemma} with the simpler quantity $\eta(\SU)$. We can do so thanks to Lemma \ref{lem:coherence perturbation} which roughly speaking states that a pair of subspaces $\A$ and $\B$ with a  small principal angle  have similar coherences, namely $ \theta_1(\A,\B) \approx 0 \Longrightarrow \eta(\A)\approx \eta(\B)$. More concretely, note that 
\begin{align}
\sqrt{\eta(\h{\SU}_{k-1} )} & \le \sqrt{\eta\l(\SU \r)}+ d_{\GR}(\SU,\h{\SU}_{k-1})   \sqrt{n}
\qquad \mbox{(see Lemma \ref{lem:coherence perturbation})}
\nonumber\\
& \le \sqrt{\eta(\SU)} + C p^{\frac{3}{2}} \sqrt{\frac{n}{r\log n}} \qquad \mbox{(see \eqref{eq:activate exp})} \nonumber\\
& \le \sqrt{\eta(\SU)} + 1
\qquad \l( \mbox{if } p \lesssim \frac{1}{\sqrt{nb}} \r) \nonumber\\
& \le 2\sqrt{\eta(\SU)}. \qquad \mbox{(see \eqref{eq:coh is bounded})}
\label{eq:coh relation 1}
\end{align}
This completes the proof of Theorem \ref{thm:local cvg expectation}. 

\subsubsection{Proof of Theorem \ref{thm:main result}}

For $K_0\in[1:K]$, we condition on $\h{\SU}_{K_0}$. For positive  $c$ to be set later,  suppose that 
\begin{equation}
d_{\GR}(\SU,\h{\SU}_{K_0}) \lesssim  
 \frac{ e^{-\frac{Cp^3 n b}{\widetilde{\eta}}} p^{\frac{7}{2}}nb}{\alpha c  \log b \sqrt{\log n}}.
\label{eq:close enough}
\end{equation}
In particular, \eqref{eq:close enough} implies that  the error at iteration $K_0$ is small enough to activate Lemma \ref{lem:whp main}, see \eqref{eq:activate}. For $\nu\ge 1$ to be set later, we condition for now on the event 
\begin{equation}
\ev' := \cap_{k=K_0+1}^K \ev'_k,
\label{eq:cap of all evs}
\end{equation}
where the event $\ev'_k$ was defined in Lemma \ref{lem:whp main}.  
Suppose also that \eqref{eq:p large enough lemma} holds for every $k\in [K_0+1:K]$, namely 
\begin{equation}
p\gtrsim \max_{k\in [K_0:K-1]} \eta(\widehat{\SU}_{k}) \cdot \frac{r\log^2 b \log n }{n},
\label{eq:lwr bnd on p complete}
\end{equation}
which will next allow us to apply Lemma \ref{lem:whp main} repeatedly to all iterations in the interval $[K_0+1:K]$. With the success probability $\phi_k(\alpha)$ defined in Lemma \ref{lem:whp main}, let us also define 
\begin{equation}
\phi(\alpha) := \min_{k\in [K_0+1:K]} \phi_k(\alpha) = 1- \exp\l(-\frac{\Cr{fail} p^2 nb}{\wt{\eta}} \r) -b^{-C\alpha},
\qquad \wt{\eta} := \max_{k\in [K_0+1:K]} \wt{\eta}_{k} \ge 1,
\label{eq:def of phi smallest}
\end{equation}
where the inequality above follows because $\wt{\eta}_k\ge 1$ for every $k$, see \eqref{eq:eta tilda}.
We now partition $[K_0+1:K]$ into (non-overlapping) intervals $\{\Int_i\}_i$,  each with the length 
\begin{equation}
\len = \frac{\Cl{len}\log b \log(K-K_0) }{\phi(\alpha)},
\label{eq:length defined}
\end{equation}
except possibly the last interval which might be shorter. 
Consider one of these intervals, say $\Int_1$. Then by Lemma \ref{lem:whp main} and the union bound,  \eqref{eq:nearly contraction} holds for every iteration $k\in\Int_1$ except with a probability of at most $\len \cdot b^{-C\alpha}$ because the length of $\Int_1$ is $l$. That is, the estimation error does not increase by much in every iteration in the interval $\Int_1$. In some of these iterations, the error in fact reduces. More specifically, 
\eqref{eq:geo fast} holds in iteration $k$ with a probability of at least $\fail_k(\alpha) $, see \eqref{eq:failk}. While $b^{-C\alpha}$ in \eqref{eq:failk} can be made arbitrary small by increasing the tuning parameter $\alpha$, this of course would  not necessarily make $\fail_k(\alpha)$ arbitrary close to one. That is, there is a sizable chance that the estimation error does not contract in iteration $k$.
However, \eqref{eq:geo fast} holds at least in one iteration in the interval $\Int_1$ except with a probability of at most 
$$
(1-\phi(\alpha))^{\len}\le e^{-\phi(\alpha) \len} = b^{-\Cr{len}\log (K-K_0)}.
\qquad \l(  1+a \le e^a\r)
$$
Therefore, except with a probabilty of at most $lb^{-C\alpha}+b^{-\Cr{len}\log(K-K_0)}$, \eqref{eq:geo fast} holds at least once and \eqref{eq:nearly contraction} holds for all iterations in the interval $\Int_1$. 
It immediately follows that 
\begin{align}
& \frac{ d_{\GR}(\SU,\widehat{\SU}_{K_0+l}) }{ d_{\GR}(\SU,\widehat{\SU}_{K_0}) }
\nonumber\\
& \le   \nu \l(1-\frac{p}{4} +\frac{p^3nb}{c }\r) \cdot \nu^{l-1} \l(1+\frac{p^3nb}{c } \r)^{\len-1} 
\qquad \mbox{(see (\ref{eq:nearly contraction},\ref{eq:geo fast}))}
\nonumber\\
&  \le \nu^l \exp\l(   -\frac{p}{4}+\frac{p^3np}{c } + (l-1) \frac{p^3nb}{c }   \r) 
\qquad \l( 1+a \le e^a\r)
\nonumber\\
& =  \nu^l\exp\l(  -\frac{p}{4}+ \frac{l p^3nb}{c }  \r),
\label{eq:first interval cvg rate}
 \end{align}
except with a probability of at most 
\begin{equation}
lb^{-C\alpha}+b^{-\Cr{len}\log(K-K_0)}. 
\label{eq:fail prob for one int}
\end{equation}
In particular, suppose that 
\begin{equation}
c \ge 4lp^2 nb,
\label{eq:exp will be neg}
\end{equation}
so that the exponent in the last line of \eqref{eq:first interval cvg rate} is negative. Let 
\begin{equation*}
i_{\max}:= \l\lfloor \frac{K-K_0}{l} \r\rfloor,
\end{equation*}
and note that 
\begin{equation}
\frac{ d_{\GR}(\SU,\widehat{\SU}_{K}) }{d_{\GR}(\SU,\widehat{\SU}_{K_0}) }
 = \prod_{i=1}^{i_{\max}} \frac{ d_{\GR}(\SU,\widehat{\SU}_{K_0+il}) }{  d_{\GR}(\SU,\widehat{\SU}_{K_0+(i-1)l}) }  \cdot \frac{ d_{\GR}(\SU,\widehat{\SU}_{K}) }{ d_{\GR}(\SU,\widehat{\SU}_{K_0+i_{\max}l}) }. 
 \label{eq:prod telescope}
\end{equation}
By applying the  bound in \eqref{eq:first interval cvg rate} to all intervals $\{I_i\}_i$, we then conclude that
\begin{align}
& \frac{ d_{\GR}(\SU,\widehat{\SU}_{K}) }{d_{\GR}(\SU,\widehat{\SU}_{K_0}) }
\nonumber\\
& = \prod_{i=1}^{i_{\max}} \frac{ d_{\GR}(\SU,\widehat{\SU}_{K_0+il}) }{  d_{\GR}(\SU,\widehat{\SU}_{K_0+(i-1)l}) }  \cdot \frac{ d_{\GR}(\SU,\widehat{\SU}_{K}) }{ d_{\GR}(\SU,\widehat{\SU}_{K_0+i_{\max}l}) }
\qquad \mbox{(see \eqref{eq:prod telescope})} 
\nonumber\\
& \le \l(\nu^{l}\exp\l(-\frac{p}{4}+\frac{lp^3nb}{c} \r) \r)^{\l\lfloor\frac{K-K_0}{\len} \r\rfloor} \cdot \nu^l\l(1+ \frac{p^3nb}{c} \r)^{l}
\qquad \mbox{(see (\ref{eq:first interval cvg rate},\ref{eq:nearly contraction}))}
\nonumber\\
& \le \l(  \nu^{l}\exp\l(-\frac{p}{4}+\frac{lp^3nb}{c} \r) \r)^{\l\lfloor\frac{K-K_0}{\len} \r\rfloor} \cdot \nu^l \exp\l( \frac{lp^3nb}{c} \r)
\qquad \l(1+a\le e^a \r)
\nonumber\\
& \le \nu^{K-K_0} \exp\l( \l\lfloor \frac{K-K_0}{2l} \r\rfloor \l(-\frac{p}{4}+\frac{lp^3nb}{c} \r) \r) 
\cdot \exp\l( \frac{lp^3nb}{c} \r) \nonumber\\
& \le \nu^{K-K_0} \exp\l( \frac{K-K_0}{2l} \l(-\frac{p}{4}+\frac{lp^3nb}{c} \r) \r) 
\cdot \exp\l( \frac{lp^3nb}{c} \r)
\qquad \l(\mbox{if } K-K_0 \ge l \mbox{ and \eqref{eq:exp will be neg} holds} \r) 
\nonumber\\
& 
\le  \nu^{K-K_0} \exp\l( \frac{K-K_0}{2l} \l(-\frac{p}{4}+\frac{3lp^3nb}{c} \r) \r),
\qquad \l(\mbox{if } K-K_0 \ge l  \r) 
\label{eq:cvg rate 1}
 \end{align}
except with a probability of at most
\begin{align}
\l\lceil\frac{K-K_0}{l}\r\rceil \l( lb^{-C\alpha}+b^{-\Cr{len}\log(K-K_0)}\r) & \le \frac{2(K-K_0)}{l} \l( lb^{-C\alpha}+b^{-\Cr{len}\log(K-K_0)}\r) 
\qquad \l(\mbox{if } K-K_0 \ge l \r)
\nonumber\\
& \le (K-K_0)e^{-C\alpha}+ b^{-C\Cr{len}\log (K-K_0)},
\label{eq:simplified failure prob -1}
\end{align}
which follows from an application of the union bound to the failure probability in \eqref{eq:fail prob for one int}. With the choice of $\alpha = \alpha' \log b \log(K-K_0)$ with sufficiently large $\alpha'$, the failure probability in \eqref{eq:simplified failure prob -1} simplifies to 
\begin{align}
(K-K_0)e^{-C\alpha}+b^{-C\log(K-K_0)} &
 = (K-K_0)\cdot (K-K_0)^{-C\alpha' \log b}+b^{-C\log(K-K_0)} \nonumber\\
 &\le (K-K_0)^{-C\alpha' \log b }+b^{-C\log(K-K_0)} \nonumber\\
& = b^{-C\alpha'\log (K-K_0)}+b^{-C\log(K-K_0)} \nonumber\\
& 
\le b^{-C\log(K-K_0)}. 
\label{eq:simplified failure prob}
\end{align}
The next step involves elementary bookkeeping to upper-bound the last line of \eqref{eq:cvg rate 1}. Suppose that 
\begin{equation}
\alpha \gtrsim \frac{\log\l( \frac{C \wt{\eta}}{p^2 n b} \r)}{\log b},
\label{eq:cnd on alpha}
\end{equation} 
\begin{equation}
p \lesssim \frac{1}{\sqrt{nb}}.
\label{eq:upp bnd on p}
\end{equation}
Using \eqref{eq:cnd on alpha} and \eqref{eq:upp bnd on p} with appropriate constants replacing $\gtrsim$ and $\lesssim$ above, we may verify that 
\begin{equation}
b^{-C\alpha} \le \frac{\Cr{fail}p^2 nb }{4 \wt{\eta}},
\qquad \mbox{(see \eqref{eq:cnd on alpha})}
\label{eq:bnd on b}
\end{equation}
\begin{equation}
\frac{ \Cr{fail} p^2 nb}{ \wt{\eta}} \lesssim 2,
\qquad \l( \mbox{\eqref{eq:upp bnd on p} and } \widetilde{\eta} \ge 1 \r)
\label{eq:exp trivial bnd}
\end{equation}
\begin{align}
\phi(\alpha) & = 1- \exp\l(-\frac{\Cr{fail} p^2 nb}{\wt{\eta}} \r) -b^{-C\alpha} 
\qquad \mbox{(see \eqref{eq:def of phi smallest})}
\nonumber\\
& \ge \frac{1}{2}\cdot \frac{\Cr{fail}p^2 nb }{ \wt{\eta}} -b^{-C\alpha}
\qquad \l(\mbox{\eqref{eq:exp trivial bnd} and } e^{-a}\le 1- \frac{a}{2}\mbox{ for } a\lesssim 2\r)
\nonumber\\
& 
\ge \frac{\Cr{fail} p^2nb}{4 \wt{\eta}}.
\qquad \mbox{(see (\ref{eq:bnd on b}))}
\label{eq:lwr bnd on phi}
\end{align}
\begin{align}
l & = \frac{\Cr{len} \log b \log(K-K_0) }{\phi(\alpha)}
\qquad \mbox{(see \eqref{eq:length defined})} \nonumber\\
& \le \frac{4\Cr{len} \wt{\eta} \log b \log(K-K_0)}{\Cr{fail}p^2 nb }.
\qquad \mbox{(see \eqref{eq:lwr bnd on phi})}
\label{eq:upp bnd on l}
\end{align}
Now with the choice of 
\begin{equation}
c = \frac{96 \Cr{len} }{\Cr{fail}} \wt{\eta} \log b \log(K-K_0),
\label{eq:choice of c factor}
\end{equation}
we may verify that 
\begin{equation}
-\frac{p}{4} + \frac{3l p^3nb}{c} \le -\frac{p}{8},
\qquad \mbox{(see (\ref{eq:upp bnd on l},\ref{eq:choice of c factor}))}
\label{eq:exponent bounded}
\end{equation}
and, revisiting \eqref{eq:cvg rate 1}, we find that 
\begin{align}
& \frac{ d_{\GR}(\SU,\h{\SU}_{K} )}{  d_{\GR}(\SU,\h{\SU}_{K_0} )}
\nonumber\\
& \le   \nu^{K-K_0} \exp\l( \frac{K-K_0}{2\len } \l( -\frac{p}{4}+ \frac{3l p^3nb}{c } \r)  \r) \qquad \mbox{(see \eqref{eq:cvg rate 1})} \nonumber\\
& \le  \nu^{K-K_0}\exp\l( -\frac{(K-K_0)p}{16l}  \r) 
\qquad \mbox{(see \eqref{eq:exponent bounded})}
\nonumber\\
& = \l( \nu \exp\l(-\frac{p}{16l}\r) \r)^{K-K_0} \nonumber\\
& \le  \l(\nu \exp\l(-\frac{Cp^3 nb}{\wt{\eta}  \log b\log(K-K_0)}\r) \r)^{K-K_0} 
\qquad \mbox{(see (\ref{eq:upp bnd on l}))} \nonumber\\
& \le \nu^{K-K_0}\l( 1- \frac{Cp^3 nb}{\wt{\eta}  \log b\log(K-K_0)}   \r)^{K-K_0} 
\qquad \l(\mbox{\eqref{eq:upp bnd on p} and } \wt{\eta} \ge 1 \r)  \nonumber\\
& \le \l( 1- \frac{Cp^3 nb}{\wt{\eta}  \log b\log(K-K_0)}   \r)^{K-K_0}, 
\label{eq:final bnd agg cvg rate}
 \end{align}
where we set
 \begin{equation}
 \nu =  1+ \frac{Cp^3 nb}{\wt{\eta}  \log b\log(K-K_0)}  ,
 \label{eq:choice of nu}
 \end{equation}
 for an appropriate choice of constant $C$. 
To reiterate, conditioned on the event $\ev'$ in \eqref{eq:cap of all evs}, \eqref{eq:final bnd agg cvg rate} is valid provided that (\ref{eq:close enough},\ref{eq:lwr bnd on p complete},\ref{eq:cnd on alpha},\ref{eq:upp bnd on p}) hold and except with the probability of at most $ b^{-C\log(K-K_0)}$, see \eqref{eq:simplified failure prob}.
In particular, to better interpret \eqref{eq:lwr bnd on p complete}, we next replace the coherence $\eta(\h{\SU}_k)$ therein with the simpler quantity $\eta(\SU)$. We can do so thanks to Lemma \ref{lem:coherence perturbation} which roughly speaking states that a pair of subspaces $\A$ and $\B$ with a  small principal angle  have similar coherences, namely $ \theta_1(\A,\B) \approx 0 \Longrightarrow \eta(\A)\approx \eta(\B)$. More concretely, Lemma \ref{lem:coherence perturbation} implies that 
\begin{align}
\sqrt{\eta(\h{\SU}_{k} )} & \le \sqrt{\eta\l(\SU \r)}+ d_{\GR}(\SU,\h{\SU}_k)  \sqrt{n}.
\label{eq:coh relation 1}
\end{align}
for every $k$. To bound the distance in the last line above, we observe that \eqref{eq:final bnd agg cvg rate} holds also after replacing $K$ with any $k\in [K_0+l:K]$, implying in particular that  
\begin{equation}
d_{\GR}(\SU,\h{\SU}_k) \le d_{\GR}(\SU,\h{\SU}_{K_0}),
\qquad k\in [K_0+l:K]. 
\label{eq:far}
\end{equation}
When $k\in [K_0+1:K_0+l-1]$ however, we cannot guarantee that the error reduces and all we can say is that that the error does not increase by much. That is, for every $k\in [K_0+1:K_0+l-1]$, we have that 
\begin{align}
d_{\GR}(\SU,\h{\SU}_k) & \le \nu^{k-K_0} \l(1+\frac{p^3nb}{c} \r)^{k-K_0} d_{\GR}(\SU,\h{\SU}_{K_0})\qquad \mbox{(see \eqref{eq:nearly contraction})} \nonumber\\
& \le \nu^{l} \l(1+\frac{p^3nb}{c} \r)^{l} d_{\GR}(\SU,\h{\SU}_{K_0})
 \nonumber\\
& = \nu^{l} 
\l(1+ \frac{Cp^3nb}{\wt{\eta}\log b \log(K-K_0)} \r)^l
d_{\GR}(\SU,\h{\SU}_{K_0}) \nonumber\\
& \le 
\l(1+ \frac{Cp^3nb}{\wt{\eta}\log b \log(K-K_0)} \r)^{2l}
d_{\GR}(\SU,\h{\SU}_{K_0}), 
\qquad \mbox{(see \eqref{eq:choice of nu})}
\label{eq:near pre}
\end{align}
with an appropriate choice of $C$ in \eqref{eq:choice of nu}. We continue by writing that 
\begin{align}
d_{\GR}(\SU,\h{\SU}_k) 
& \le \l(1+ \frac{Cp^3nb}{\wt{\eta}\log b \log(K-K_0)} \r)^{2l}
d_{\GR}(\SU,\h{\SU}_{K_0}) 
\qquad \mbox{(see \eqref{eq:near pre})}
\nonumber\\
& \le \exp \l( \frac{Cp^3nb}{\wt{\eta}\log b \log(K-K_0)} \cdot l \r)  
 d_{\GR}(\SU,\h{\SU}_{K_0})
\qquad \l(1+a\le e^a \r) 
 \nonumber\\
 & \le e^{Cp} d_{\GR}(\SU,\h{\SU}_{K_0}) 
 \qquad \mbox{(see \eqref{eq:upp bnd on l})} \nonumber\\
 & \lesssim d_{\GR}(\SU,\h{\SU}_{K_0}). 
 \qquad \l( p\le 1\r)
\label{eq:near}
\end{align}
Combining \eqref{eq:far} and \eqref{eq:near}, we arrive at 
\begin{equation}
d_{\GR}(\SU,\h{\SU}_{k}) \lesssim   d_{\GR}(\SU,\h{\SU}_{K_0}),
\label{eq:near n far together}
\end{equation}
for every $k\in [K_0+1:K]$, provided that (\ref{eq:close enough},\ref{eq:lwr bnd on p complete},\ref{eq:cnd on alpha},\ref{eq:upp bnd on p}) hold and except with a probability of at most $b^{-C\log(K-K_0)}$, see \eqref{eq:simplified failure prob}. Substituting the above bound into \eqref{eq:coh relation 1} yields that 
\begin{align}
\sqrt{\eta(\h{\SU}_{k} )} 
& \le \sqrt{\eta\l(\SU\r)} +d_{\GR}(\SU,\h{\SU}_{k}) \sqrt{n}\qquad \mbox{(see \eqref{eq:coh relation 1})} \nonumber\\
& \le \sqrt{\eta\l(\SU\r)} + C   d_{\GR}(\SU,\h{\SU}_{K_0}) \sqrt{n}  \qquad \mbox{(see \eqref{eq:near n far together})} \nonumber\\
& \le  
\sqrt{\eta\l(\SU\r)} + \frac{Cp^{\frac{7}{2}}n^{\frac{3}{2}}b}{\alpha c \log b \sqrt{\log n}} 
\qquad \mbox{(see \eqref{eq:close enough})} \nonumber\\
& \le  
\sqrt{\eta\l(\SU\r)} + \frac{Cp^{\frac{7}{2}}n^{\frac{3}{2}}b}{\alpha \wt{\eta} \log^2 b  \sqrt{\log n} \log(K-K_0)} 
\qquad \mbox{(see \eqref{eq:choice of c factor})} \nonumber\\
& \le  \sqrt{\eta\l(\SU\r)} + 1.\qquad \l(\mbox{\eqref{eq:upp bnd on p} and } \wt{\eta} \ge1 \r) \nonumber\\
& \le 2 \sqrt{\eta(\SU)}.
\qquad \l(\eta(\SU) \ge 1 \r)
\label{eq:coh relation 2}
\end{align}
%
%
%
Plugging back the bound above into \eqref{eq:lwr bnd on p complete} yields that 
\begin{equation}
p \gtrsim  \log^2 b \log n  \frac{\eta(\SU) r}{n}.
\label{eq:req p final}
\end{equation}
To summarize, conditioned on the event $\ev'$, we established that \eqref{eq:final bnd agg cvg rate} is valid under (\ref{eq:close enough},\ref{eq:cnd on alpha},\ref{eq:upp bnd on p},\ref{eq:req p final})  and except with a probability of at most $ b^{-C\log(K-K_0)}$. The event $\ev'=\cap_{k=K_0+1}^K \ev'_k$ itself holds except with a probability of at most $\sum_{k=K_0+1}^K \Pr[\ev'^C_k]$ by the union bound. With an application of the law of total probability, \eqref{eq:final bnd agg cvg rate} is therefore valid except with a probability of at most $b^{-C\log(K-K_0)}+ \sum_{k=K_0+1}^K \Pr[\ev'^C_k]$. This completes the proof of Theorem \ref{thm:main result}.

\section*{Acknowledgements}
AE would like to thank Anand Vidyashankar and Chris Williams for separately pointing out the possibility of a statistical interpretation of $\alg$, as discussed at the end of Section \ref{sec:interp}.  AE is also extremely grateful to Raphael Hauser for his brilliant insights. Lastly, the authors would like to acknowledge and thank Dehui Yang for his involvement in the early phases of this project. 
AE is supported by the Alan
Turing Institute under the EPSRC grant EP/N510129/1 and partially by the Turing Seed Funding grant SF019. GO and LB are supported by ARO Grant W911NF-14-1-0634. MBW is  partially supported by NSF grant CCF-1409258 and NSF CAREER grant CCF-1149225.

\bibliographystyle{unsrt}
\bibliography{ReferencesSNIPE}

\appendix

\section{Toolbox}
\label{sec:Toolbox}

This section collects a number of standard results for the reader's convenience. We begin with the following large-deviation bounds that are repeatedly used in the rest of the appendices
\cite{gross2011recovering,tropp2012user}. Throughout, $C$ is a universal constant the value of which might change in every appearance. 
\begin{lem}
\textbf{\textup{{[}Hoeffding inequality{]}
}}\textup{\emph{Let $\{z_{i}\}_i$ be a
finite sequence of zero-mean independent random variables and assume that almost surely every $z_i$ belongs to a compact interval of length $l_i$ on the real line. Then, for positive $\alpha$ and except with a probability of at most 
$
e^{{-C\alpha^2}/{\sum_i l_i^2}},
$ 
it holds that $\sum_i z_i \le \alpha$.}}
\label{lem:Bernie for fro}
\end{lem}

\begin{lem}
\textbf{\emph{{[}Matrix Bernstein inequality for spectral norm{]} }}\textup{\emph{Let
$\{Z_{i}\}_i\subset\mathbb{R}^{n\times b}$ be a finite sequence of
zero-mean independent random matrices, and set
\[
\beta:=\max_{i}\|Z_{i}\|,
\]
\[
\sigma^{2}:=\left\Vert \sum_{i}\mathbb{E}\left[Z_{i}^{*}Z_{i}\right]\right\Vert \vee\left\Vert \sum_{i}\mathbb{E}\left[Z_{i}Z_{i}^{*}\right]\right\Vert .
\]
Then, for $\alpha\ge1$ and except with a probability of at most $e^{-C\alpha}$,
it holds that
\[
\left\Vert \sum_{i}Z_{i}\right\Vert \lesssim\alpha\cdot \max\left(\log\left(n\vee b\right)\cdot \beta,\sqrt{\log\left(n\vee b\right)}\cdot\sigma\right).
\]
}}
\label{lem:Bernie for spec}
\end{lem}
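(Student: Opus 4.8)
The statement is a standard corollary of the Hermitian matrix Bernstein inequality, so the plan is to pass to the Hermitian case by dilation and then convert the usual Bernstein-type tail into the $\alpha$-parametrized form claimed here. First I would, for each $i$, form the self-adjoint dilation
\[
\widetilde{Z}_i := \begin{bmatrix} 0 & Z_i \\ Z_i^* & 0 \end{bmatrix}\in\mathbb{R}^{(n+b)\times(n+b)},
\]
which is zero-mean and independent across $i$. Using the elementary identities $\|\widetilde{Z}_i\|=\|Z_i\|$, $\|\sum_i\widetilde{Z}_i\|=\|\sum_i Z_i\|$ (the nonzero eigenvalues of such a block dilation are $\pm$ the singular values of the off-diagonal block), and the fact that $\widetilde{Z}_i^2$ is block-diagonal with diagonal blocks $Z_iZ_i^*$ and $Z_i^*Z_i$, one sees that $\beta$ and $\sigma^2$ in the statement are exactly the almost-sure norm bound and the matrix variance of the dilated sequence $\{\widetilde{Z}_i\}$ in ambient dimension $d=n+b$.

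Next I would invoke the classical Hermitian matrix Bernstein inequality (proved via the matrix Laplace-transform method together with Lieb's concavity theorem; see \cite{tropp2012user}): for zero-mean independent Hermitian $\{X_i\}$ of size $d$ with $\|X_i\|\le\beta$ almost surely and $\|\sum_i\mathbb{E}X_i^2\|\le\sigma^2$,
\[
\Pr\!\left[\Big\|\sum_i X_i\Big\|\ge t\right]\ \le\ 2d\,\exp\!\left(\frac{-t^2/2}{\sigma^2+\beta t/3}\right),\qquad t\ge 0.
\]
Applying this with $t=c\,\alpha\max\!\big(\log(n\vee b)\cdot\beta,\ \sqrt{\log(n\vee b)}\cdot\sigma\big)$ for a large absolute constant $c$, I would split into the variance-dominated regime $\sqrt{\log(n\vee b)}\,\sigma\ge\log(n\vee b)\,\beta$ and its complement. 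In the first regime $\beta t\lesssim\alpha\sigma^2$, so the denominator is $\lesssim\alpha\sigma^2$ and the exponent is $\lesssim-c^2\alpha\log(n\vee b)$; in the second regime $\sigma^2\lesssim\beta t$, so the denominator is $\lesssim\beta t$ and the exponent is again $\lesssim-c^2\alpha\log(n\vee b)$. Choosing $c$ large enough that $c^2\alpha\log(n\vee b)\ge\alpha+\log(2d)$ — valid whenever $n\vee b\ge2$, the degenerate scalar case reducing to ordinary Bernstein — the prefactor $2d$ is absorbed and the tail is at most $e^{-\alpha}$, which is the claim once $c$ is hidden inside $\lesssim$.

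I do not expect a genuine obstacle: the result is essentially a restatement of a textbook inequality. The one place to be careful is the final conversion step, where the two regimes must be treated separately and the absolute constant $c$ chosen so that, uniformly, the exponent dominates both the $\alpha$ term and the union-bound term $\log(2d)$; this uses only $\alpha\ge1$ and the fact that $\log(n\vee b)$ is bounded away from $0$.
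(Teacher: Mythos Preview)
Your proof is correct and follows the standard route (Hermitian dilation, then Tropp's matrix Bernstein, then tail conversion). The paper itself does not prove this lemma: it is stated in the Toolbox appendix as a known result, with citations to \cite{gross2011recovering,tropp2012user}, so there is no proof in the paper to compare against beyond noting that your argument is precisely the one underlying those references. One small remark on the conversion step: when you write ``the denominator is $\lesssim\alpha\sigma^2$'' and then ``the exponent is $\lesssim -c^2\alpha\log(n\vee b)$,'' the hidden constant in the first $\lesssim$ actually depends on $c$ (it is $1+c/3$), so the exponent magnitude scales like $c\alpha\log(n\vee b)$ rather than $c^2\alpha\log(n\vee b)$; this does not affect the conclusion, since it still diverges with $c$ and can absorb both $\alpha$ and $\log(2d)$.
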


For two $r$-dimensional subspaces $\A$ and $\B$ with principal angles $\theta_1(\A,\B)\ge \theta_2(\A,\B)\ge \cdots \ge \theta_r(\A,\B)$, recall the following useful identities about the principal angles between them: 
\begin{equation}
\sin\l(\theta_1\l(\A,\B\r)\r) = \l\| P_{\A^\perp} P_{\B} \r\| = \l\| P_{\A}-P_{\B} \r\|,
\label{eq:basic id about principal angle 1}
\end{equation}
\begin{equation}
\sqrt{\sum_{i=1}^r \sin^2\l(\theta_i\l(\A,\B\r)\r)}
 =   \l\| P_{\A^\perp} P_{B} \r\|_F = \frac{1}{\sqrt{2}}\l\| P_{\A}-P_{\B} \r\|_F.
 \label{eq:basic id about principal angle 2}
\end{equation}
Note also the following perturbation bound that is slightly stronger 
than the standard ones, but proved similarly nonetheless \cite{wedin1972perturbation}.
\begin{lem}
\textbf{\emph{{[}Perturbation bound{]} }}\textup{\label{lem:perturbation lemma}
}\textup{\emph{Fix a rank-$r$ matrix $A$ and let  $\A=\Span(A)$.
For matrix $B$, let $B_{r}$ be a rank-$r$ truncation of $B$ obtained via SVD and set $\B_{r}=\Span(B_{r})$.
Then, it holds that
\[
\left\Vert P_{\A}-P_{\B_{r}}\right\Vert =\left\Vert P_{\A^{\perp}}P_{\B_{r}}\right\Vert \le\frac{\left\Vert P_{\A^{\perp}}B\right\Vert }{\sigma_{r}\left(B\right) }\le\frac{\left\Vert B-A\right\Vert }{\sigma_{r}\left(B\right)}
\le \frac{\|B-A\|}{\sigma_r(A)-\|B-A\|}
,
\]
}}\textup{
\[
\frac{1}{\sqrt{2}}\left\Vert P_{\A}-P_{\B_{r}}\right\Vert _{F}=\left\Vert P_{\A^{\perp}}P_{\B_{r}}\right\Vert _{F}\le
\frac{\left\Vert P_{\A^{\perp}}B\right\Vert _{F}}{\sigma_{r}\left(B\right)}
\le\frac{\left\Vert B-A\right\Vert _{F}}{\sigma_{r}\left(B\right)}
,
\]
where $\sigma_{r}(A)$ is the $r$ largest singular value of 
$A$. }\end{lem}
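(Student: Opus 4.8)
The statement is a routine variant of the standard singular-subspace perturbation bounds (Wedin's $\sin\theta$ theorem, \cite{wedin1972perturbation}), so I do not expect a genuine obstacle; the only point needing a little care is the bookkeeping around the degenerate case $\mbox{rank}(B) < r$. The plan is to reduce both displayed inequalities to a single estimate on $\|P_{\operatorname{A}^{\perp}} P_{\operatorname{B}_{r}}\|$ (and its Frobenius analog), and then to peel off one factor at a time using the SVD of $B$.

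First I would record that, by \eqref{eq:basic id about principal angle 1} and \eqref{eq:basic id about principal angle 2}, $\|P_{\operatorname{A}}-P_{\operatorname{B}_{r}}\| = \|P_{\operatorname{A}^{\perp}}P_{\operatorname{B}_{r}}\|$ and $\tfrac{1}{\sqrt2}\|P_{\operatorname{A}}-P_{\operatorname{B}_{r}}\|_{F} = \|P_{\operatorname{A}^{\perp}}P_{\operatorname{B}_{r}}\|_{F}$, valid when $\operatorname{A}$ and $\operatorname{B}_{r}$ are both $r$-dimensional. To secure this I would first dispose of the degenerate case: if $\sigma_{r}(A) \le \|B-A\|$ the right-hand sides of the claimed bounds are nonpositive and there is nothing to prove, so we may assume $\sigma_{r}(A) > \|B-A\|$; by Weyl's inequality this forces $\sigma_{r}(B) \ge \sigma_{r}(A)-\|B-A\| > 0$, hence $B$ has rank at least $r$, $B_{r}$ has rank exactly $r$, $\operatorname{B}_{r}$ is a genuine $r$-dimensional subspace, and the two identities above apply.

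Next I would unwind the SVD $B = U_{r}\Sigma_{r}V_{r}^{*} + (\text{tail})$, with $U_{r},V_{r}$ having orthonormal columns and $\Sigma_{r} = \operatorname{diag}(\sigma_{1}(B),\dots,\sigma_{r}(B))$ invertible. Then $B_{r} = U_{r}\Sigma_{r}V_{r}^{*}$, $P_{\operatorname{B}_{r}} = U_{r}U_{r}^{*}$, and, crucially, $U_{r} = B\,V_{r}\Sigma_{r}^{-1}$. Since $U_{r}^{*}$ has orthonormal rows, right multiplication by it leaves both the spectral and the Frobenius norm unchanged, so writing $\|\cdot\|_{\star}$ for either norm,
\begin{align*}
\l\| P_{\operatorname{A}^{\perp}} P_{\operatorname{B}_{r}} \r\|_{\star}
 &= \l\| P_{\operatorname{A}^{\perp}} U_{r} \r\|_{\star}
 = \l\| \l( P_{\operatorname{A}^{\perp}} B \r) V_{r}\Sigma_{r}^{-1} \r\|_{\star} \\
 &\le \l\| P_{\operatorname{A}^{\perp}} B \r\|_{\star} \, \l\| V_{r}\Sigma_{r}^{-1} \r\|
 = \frac{\l\| P_{\operatorname{A}^{\perp}} B \r\|_{\star}}{\sigma_{r}(B)},
\end{align*}
where I used submultiplicativity against the spectral norm together with $\|V_{r}\Sigma_{r}^{-1}\| = 1/\sigma_{r}(B)$ (as $V_{r}$ is an isometry).

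Finally I would close the argument with two elementary facts. Weyl's inequality $\sigma_{r}(B) \ge \sigma_{r}(A) - \|B-A\|$ lets me replace $\sigma_{r}(B)$ in the denominator by the smaller, still positive, quantity $\sigma_{r}(A)-\|B-A\|$. And $P_{\operatorname{A}^{\perp}}A = 0$, because $\operatorname{A} = \mbox{span}(A)$ is precisely the column space of $A$; hence $P_{\operatorname{A}^{\perp}}B = P_{\operatorname{A}^{\perp}}(B-A)$ and $\|P_{\operatorname{A}^{\perp}}B\|_{\star} \le \|B-A\|_{\star}$. Chaining these with the display above yields both chains of inequalities in the statement at once, for $\|\cdot\|_{\star}=\|\cdot\|$ and $\|\cdot\|_{\star}=\|\cdot\|_{F}$ simultaneously. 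As noted, the only subtle point is the rank check on $B$, which is why it is carried out at the very start.
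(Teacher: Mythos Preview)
Your proof is correct and follows essentially the same route as the paper: both arguments write $P_{\operatorname{B}_r}$ in terms of the SVD of $B$ (the paper uses $B_r B_r^\dagger$, you use $U_r U_r^*$ with $U_r = B V_r \Sigma_r^{-1}$), kill the tail via orthogonality of the right singular vectors, apply submultiplicativity, and finish with Weyl's inequality and $P_{\operatorname{A}^\perp}A = 0$. Your version is slightly more careful in explicitly disposing of the degenerate case $\sigma_r(A)\le\|B-A\|$ and in treating both norms at once, but the substance is the same.
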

\begin{proof}
Let $B_{r^{+}}:=B-B_{r}$ denote the residual and note that
\begin{align*}
\left\Vert P_{\A^{\perp}}P_{\B_{r}}\right\Vert  & =\left\Vert P_{\A^{\perp}}B_{r}B_{r}^{\dagger}\right\Vert \qquad\left(\B_{r}=\mbox{span}(B_{r})\right)\\
 & =\left\Vert P_{\A^{\perp}}\left(B-B_{r^{+}}\right)B_{r}^{\dagger}\right\Vert \qquad\left(B=B_{r}+B_{r^{+}}\right)\\
 & =\left\Vert P_{\A^{\perp}}BB_{r}^{\dagger}\right\Vert \qquad\left(\mbox{span}\left(B_{r^{+}}^{*}\right)\perp\mbox{span}\left(B_{r}^{*}\right)=\mbox{span}\left(B_{r}^{\dagger}\right)\right)\\
 & \le\left\Vert P_{\A^{\perp}}B\right\Vert \cdot\Vert B_{r}^{\dagger}\Vert \\
 & =\frac{\left\Vert P_{\A^{\perp}}B\right\Vert }{\sigma_{r}\left(B_{r}\right)} \\
 & = \frac{\left\Vert P_{\A^{\perp}}(B-A)\right\Vert }{\sigma_{r}\left(B_{r}\right)} \\
 & \le 
 \frac{\left\Vert B-A\right\Vert }{\sigma_{r}\left(B_{r}\right)}
\\
 & \le\frac{\|B-A\|}{\sigma_{r}\left(A\right)-\left\Vert B-A\right\Vert }.\qquad\mbox{(Weyl's inequality)}
\end{align*}
The proof is identical for the claim with the Frobenius norm and is therefore omitted.
\end{proof}

Lastly, let us record what happens to the coherence of a subspace under a small perturbation, see \eqref{eq:def of coh}.
\begin{lem}\textbf{\emph{[Coherence under perturbation]}}
\label{lem:coherence perturbation}
Let $\A,\B$ be  two $r$-dimensional subspaces in $\mathbb{R}^n$, and let $d_{\GR}(\A,\B)$ denote their distance, see (\ref{eq:err metric}). Then their coherences are related as 
\begin{equation*}
\sqrt{\eta\l( \B \r)}
\le 
\sqrt{\eta\l( \A \r)} + d_{\GR}(\A,\B) \sqrt{n}. 
\end{equation*}
\end{lem}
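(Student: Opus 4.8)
The plan is to bound the row norms of an orthonormal basis of $\operatorname{B}$ directly by writing each such basis vector as a vector in $\A$ plus a correction, and then controlling the size of that correction by the principal-angle distance. Let $A\in\mathbb{R}^{n\times r}$ and $B\in\mathbb{R}^{n\times r}$ be orthonormal bases of $\A$ and $\operatorname{B}$, so that $P_{\A}=AA^*$ and $P_{\B}=BB^*$. Fix a row index $i\in[1:n]$. Then the $i$th row of $B$ satisfies
\begin{align*}
\l\|B[i,:]\r\|_2 = \l\|e_i^* B\r\|_2 &= \l\|e_i^* (P_{\A}+P_{\A}^{\perp}) B\r\|_2 \\
&\le \l\|e_i^* P_{\A} B\r\|_2 + \l\|e_i^* P_{\A}^{\perp} B\r\|_2,
\end{align*}
where $e_i$ is the $i$th canonical vector. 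The second term is at most $\l\|e_i\r\|_2 \cdot \l\|P_{\A}^{\perp} B\r\| = \l\|P_{\A}^{\perp} P_{\B}\r\|$, using $P_{\A}^{\perp}B = P_{\A}^{\perp}P_{\B}B$ and $\l\|B\r\|=1$. For the first term, $e_i^* P_{\A} B = (e_i^* A)(A^* B)$, and since $\l\|A^* B\r\|\le 1$ this is bounded by $\l\|e_i^* A\r\|_2 = \l\|A[i,:]\r\|_2$.

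Combining, for every $i$ we get $\l\|B[i,:]\r\|_2 \le \l\|A[i,:]\r\|_2 + \l\|P_{\A}^{\perp}P_{\B}\r\|$, and in particular
\[
\max_i \l\|B[i,:]\r\|_2 \le \max_i \l\|A[i,:]\r\|_2 + \l\|P_{\A}^{\perp}P_{\B}\r\|.
\]
Now recall the definition of coherence \eqref{eq:def of coh}: $\max_i \l\|A[i,:]\r\|_2 = \sqrt{r\,\eta(\A)/n}$ and likewise for $\operatorname{B}$. Substituting and multiplying through by $\sqrt{n/r}$ yields
\[
\sqrt{\eta(\operatorname{B})} \le \sqrt{\eta(\A)} + \l\|P_{\A}^{\perp}P_{\B}\r\|\sqrt{\frac{n}{r}},
\]
which is exactly the claimed bound. (Here I am implicitly using that $\eta$ is basis-independent, as noted after \eqref{eq:def of coh}, so it does not matter which orthonormal bases $A$, $B$ we picked.)

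There is no real obstacle here; the only point requiring a moment's care is the splitting $e_i^* B = e_i^* P_{\A} B + e_i^* P_{\A}^{\perp} B$ together with the two operator-norm estimates $\l\|A^* B\r\|\le 1$ and $\l\|P_{\A}^{\perp}B\r\| = \l\|P_{\A}^{\perp}P_{\B}\r\|$, the latter because $B$ has orthonormal columns spanning $\operatorname{B}$ so $P_{\A}^{\perp}B = P_{\A}^{\perp}P_{\B}B$ and right-multiplication by $B$ does not increase the spectral norm. Everything else is the definition of coherence and the triangle inequality for the Euclidean norm, applied row by row and then taking the maximum over rows.
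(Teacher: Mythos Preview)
Your proof is correct and follows essentially the same approach as the paper: decompose each row of an orthonormal basis $B$ into its $\A$-component and its $\A^\perp$-component, bound the first by $\|A[i,:]\|_2$ (since $\|A^*B\|\le 1$) and the second by $\sin\theta_1=\|P_{\A^\perp}P_{\B}\|$, then take the maximum over rows. The only cosmetic difference is that the paper carries this out through the explicit principal-angle decomposition $B=A\Gamma+A'\Sigma$ with $\|\Gamma\|\le1$ and $\|\Sigma\|=\sin\theta_1$, whereas you work directly with $P_{\A}B$ and $P_{\A^\perp}B$; your route is a touch more elementary since it avoids invoking the existence of the special bases $A,A'$, but the underlying splitting and the two norm estimates are identical.
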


\begin{proof}
Let $\theta_i=\theta_i(\A,\B)$ be the shorthand for the $i$th principal angle between the subspaces $\A$ and $\B$. It is well-known \cite{golub2013matrix} that there exist orthonormal bases $A,B\in\mathbb{R}^{n\times r}$ for the subspaces $\A$ and $\B$, respectively, such that
\begin{equation}
A^* B = \mbox{diag}\l(
\l[
\begin{array}{cccc}
\cos \theta_1 &
\cos \theta_2 &
\cdots &
\cos \theta_r
\end{array}
\r]
  \r)
=: \Gamma \in \mathbb{R}^{r\times r},
\label{eq:cosines}
\end{equation}
where $\mbox{diag}(a)$ is the diagonal matrix formed from vector $a$.
There also exists $A'\in\mathbb{R}^{n\times r}$ with orthonormal columns such that
\begin{equation}
\l(A'\r)^*B = \mbox{diag}\l(
\l[
\begin{array}{cccc}
\sin \theta_1 &
\sin \theta_2 &
\cdots &
\sin \theta_r
\end{array}
\r]
  \r)
=: \Sigma \in\mathbb{R}^{r\times r },
\qquad \l(A'\r)^* A = 0,
\label{eq:sines}
\end{equation}
and, moreover,
\begin{equation}
\mbox{span}\l(
\l[
\begin{array}{cc}
A & B
\end{array}
\r]
 \r)
 =
 \mbox{span}\l(
 \l[
\begin{array}{cc}
A & A'
\end{array}
\r]
 \r).
 \label{eq:decomp proof}
\end{equation}
With $\A'=\mbox{span}(A')$, it follows that
\begin{align}
B & = P_{\A} B + P_{\A'} B
\nonumber\\
& = AA^* B + A' (A')^* B
\nonumber\\
& = A \Gamma + A' \Sigma.
\qquad \mbox{(see \eqref{eq:cosines} and \eqref{eq:sines})}
\label{eq:princ decomposition}
\end{align}
Consequently,
\begin{align}
\sqrt{\eta\l( \B\r)} &
= \sqrt{\frac{n}{r}}\max_i \l\| B[i,:] \r\|_2
\qquad \mbox{(see \eqref{eq:def of coh})}
\nonumber\\
& \le  \sqrt{\frac{n}{r}}
\max_i \l\|  A[i,:] \cdot \Gamma \r\|_2 +
\sqrt{\frac{n}{r}}
\max_i \l\| A'[i,:] \cdot \Sigma  \r\|_2
\qquad \mbox{(\eqref{eq:princ decomposition} and triangle inequality)}
\nonumber\\
& \le \sqrt{\frac{n}{r}}
\max_i \l\|  A[i,:] \r\|_2 \l\| \Gamma  \r\| +
\sqrt{\frac{n}{r}}
\max_i \l\| A'[i,:] \r\|_2 \l\| \Sigma  \r\| \nonumber\\
& = \sqrt{\eta \l( \A \r)} \l\| \Gamma \r\|+
\sqrt{\eta( \A' )} \l\| \Sigma \r\|
\qquad \mbox{(see \eqref{eq:def of coh})}
\nonumber\\
& \le \sqrt{\eta \l( \A \r)} \l\| \Gamma\r\| +
\sqrt{\frac{n}{r}} \l\| \Sigma \r\|
\qquad \l(  \eta(\A ') \le \frac{n}{r}\r)
\nonumber\\
& \le \sqrt{\eta \l( \A \r)}  +
\sqrt{\frac{n}{r}} \sin \theta_1
\qquad
\mbox{(see \eqref{eq:cosines} and \eqref{eq:sines})}\nonumber\\
& = \sqrt{\eta \l( \A \r)}  +
\sqrt{\frac{n}{r}}  \l\|P_{\A^\perp}P_{\B} \r\|
\qquad \mbox{(see \eqref{eq:basic id about principal angle 1})} \nonumber\\
& \le \sqrt{\eta \l( \A \r)}  +
\sqrt{\frac{n}{r}}  \l\|P_{\A^\perp}P_{\B} \r\|_F \nonumber\\
& = \sqrt{\eta \l( \A \r)}  +
 d_{\GR}(\A,\B) \sqrt{n},
\qquad \mbox{(see \eqref{eq:err metric})}
\end{align}
which completes the proof of Lemma \ref{lem:coherence perturbation}.
\end{proof}

\section{Supplement to Section \ref{sec:interp}}
\label{sec:interp verifying}
In this section, we verify that
\begin{align}
R_k = \begin{cases}
\arg \min\,\,\, \|P_{\h{\SU}_{k-1}^\perp}X_k \|_F^2 +\lambda\| P_{\Omega^C_k}(X_k)\|_F^2 \\
P_{\Omega_k}\l(X_k\r) = Y_k,
\end{cases}
\label{eq:Rk interp app}
\end{align}
when $k\ge 2$. The optimization above is over $X_k\in\R^{n\times b}$.  First note that Program \eqref{eq:Rk interp app} is separable and equivalent to the following $b$ programs:
\begin{equation}
R_k[:,j] =
\begin{cases}
\arg\min \,\,\, \| P_{\h{\SU}_{k-1}^\perp} x\|_2^2 + \lambda\| P_{\omega_t^C} x \|_F^2 \\
 P_{\omega_t} \cdot  x
 = y_t,
\end{cases}
\qquad t = (k-1)b+j,
\,\,\, j \in [1:b].
\label{eq:Rk separated}
\end{equation}
Above, $R_k[:,j]\in\mathbb{R}^n$ is the $j$th column of $R_k$ in MATLAB's matrix notation and the optimization is over $x\in \R^n$. To solve the $j$th program in \eqref{eq:Rk separated}, we make the change of variables $x=y_t+W_{\omega_t^C} \cdot z$. Here, $z\in\mathbb{R}^{n-m}$ and  $W_{\omega_t^C}\in\{0,1\}^{n\times (n-m)}$ is defined naturally so that $P_{\omega_t^C}=W_{\omega_t^C}W_{\omega_t^C}^*$.   We now rewrite \eqref{eq:Rk separated} as the following $b$ unconstrained programs:
\begin{align}
z_j & := \arg \min \,\,\, \| P_{\h{\SU}_{k-1}^\perp} y_t
+ P_{\h{\SU}_{k-1}^\perp} W_{\omega_t^C}   z
\|_2^2  +\lambda  \| P_{\omega_t^C} W_{\omega_t^C} z  \|_F^2\nonumber\\
&  = \arg \min \,\,\, \| (\h{S}_{k-1}^\perp)^* y_t
+ (\h{S}_{k-1}^\perp)^* W_{\omega_t^C}   z
\|_2^2 + \lambda \|z\|_F^2, 
\qquad t = (k-1)b+j,
\,\,\, j \in [1:b].
\label{eq:ls prs listed}
\end{align}
Above, $\h{S}^\perp_{k-1}\in\R^{n\times (n-r)}$ is an orthonormal basis for the subspace $\h{\SU}^\perp_{k-1} \in \GR(n,n-r)$ and in particular $P_{\h{\SU}^\perp_{k-1}} = \h{S}^\perp_{k-1} (\h{S}^\perp_{k-1})^*$. The optimization above is over $z\in\R^{n-m}$. 
Note that 
\begin{equation}
z_j
= -\l( W_{\omega^C_t}^* P_{\h{\SU}_{k-1}^\perp} W_{\omega^C_t} + \lambda I_{n-m} \r)^{-1} W_{\omega_t^C}^*P_{\h{\SU}_{k-1}^\perp} y_t ,
\qquad j\in [1:b],
\end{equation}
are solutions of the least squares programs in \eqref{eq:ls prs listed} when $m$ is large enough. 
For fixed $j$, we simplify the expression for $z_j$ as follows:
\begin{align}
z_j & =
-\l( \lambda I_{n-m}+ W_{\omega^C_t}^* P_{\h{\SU}_{k-1}^\perp} W_{\omega^C_t}  \r)^{-1} W_{\omega_t^C}^* P_{\h{\SU}_{k-1}^\perp} y_t  \nonumber\\
& = - \l( (1+\lambda)I_{n-m} - W_{\omega^C_t}^* P_{\h{\SU}_{k-1}} W_{\omega_t^C}  \r)^{-1} W_{\omega_t^C}^* P_{\h{\SU}_{k-1}^\perp} y_t
\qquad \l(P_{\h{\SU}_{k-1}^\perp} = I_n - P_{\h{\SU}_{k-1}}  \r) \nonumber\\
& = \frac{-1}{1+\lambda} \l( I_{n-m} + W_{\omega^C_t}^* \h{S}_{k-1}
\l((1+\lambda) I_r - \h{S}_{k-1}^*P_{\omega_t^C}\h{S}_{k-1} \r)^{-1} \h{S}_{k-1}^*W_{\omega_t^C}
  \r)
 W_{\omega_t^C}^* P_{\h{\SU}_{k-1}^\perp} y_t
 \qquad \mbox{(inversion lemma)} \nonumber\\
 & = \frac{-1}{1+\lambda}\l( I_{n-m} + W_{\omega^C_t}^* \h{S}_{k-1}
\l((1+\lambda) I_r - \h{S}_{k-1}^*P_{\omega^C_t}\h{S}_{k-1} \r)^{-1} \h{S}_{k-1}^*W_{\omega^C_t}
  \r)
 W_{\omega_t^C}^* P_{\h{\SU}_{k-1}^\perp} P_{\omega_t} y_t
 \qquad \l( y_t = P_{\omega_t}y_t \r) \nonumber\\
 & = \frac{1}{1+\lambda} \l( I_{n-m} + W_{\omega_t^C}^* \h{S}_{k-1}
\l( (1+\lambda) I_r - \h{S}_{k-1}^*P_{\omega^C_t}\h{S}_{k-1} \r)^{-1} \h{S}_{k-1}^*W_{\omega_t^C}
  \r)
 W_{\omega_t^C}^* P_{\h{\SU}_{k-1}}  y_t
 \qquad \l( W_{\omega_t^C} P_{\omega_t}=0 \r) \nonumber\\
 & = \frac{W_{\omega_t^C}P_{\h{\SU}_{k-1}}y_t}{1+\lambda}
 + \frac{W_{\omega_t^C}^* \h{S}_{k-1}}{1+\lambda}
\l( (1+\lambda) I_r - \h{S}_{k-1}^*P_{\omega^C_t}\h{S}_{k-1} \r)^{-1} \h{S}_{k-1}^*P_{\omega_t^C} P_{\h{\SU}_{k-1}} y_t
\qquad \l( P_{\omega^C_t} = W_{\omega_t^C} W_{\omega_t^C}^*\r)
\nonumber\\
 & = 
\frac{W_{\omega_t^C}P_{\h{\SU}_{k-1}}y_t}{1+\lambda}
 + \frac{W_{\omega_t^C}^* \h{S}_{k-1}}{1+\lambda}
\l( (1+\lambda) I_r - \h{S}_{k-1}^*P_{\omega^C_t}\h{S}_{k-1} \r)^{-1}
\l( \h{S}_{k-1}^*P_{\omega_t^C}\h{S}_{k-1}-(1+\lambda) I_r\r) \h{S}_{k-1}^*y_t
\nonumber\\
& \qquad + W_{\omega_t^C}^* \h{S}_{k-1} \l( (1+\lambda) I_r- \h{S}_{k-1}^* P_{\omega_t^C} \h{S}_{k-1} \r)^{-1} \h{S}_{k-1}^* y_t \nonumber\\
& = W_{\omega_t^C}^* \h{S}_{k-1} \l( (1+\lambda) I_r- \h{S}_{k-1}^* P_{\omega_t^C} \h{S}_{k-1} \r)^{-1} \h{S}_{k-1}^* y_t \nonumber\\
& = W_{\omega_t^C}^* \h{S}_{k-1} \l( \lambda I_r+ \h{S}_{k-1}^* P_{\omega_t} \h{S}_{k-1} \r)^{-1} \h{S}_{k-1}^* y_t,
\end{align}
which means that
\begin{equation}
y_t + W_{\omega_t^C} z_j =
y_t + P_{\omega_t^C} \h{S}_{k-1}\l(\lambda I_r + \h{S}_{k-1}^* P_{\omega_t} \h{S}_{k-1}  \r)^{-1} \h{S}_{k-1}^* y_t,
\end{equation}
is a solution of the $j$th program in  \eqref{eq:Rk separated} which indeed matches the $j$th column of $R_k$ defined in \eqref{eq:Rk}.

\section{Proof of Proposition \ref{lem:init}}
\label{sec:proof of lemma init}


Let us form the blocks $Q_1\in\R^{b_1\times r}$, $S_1\in\R^{n\times b_1}$, and $\Omega_1\subseteq[1:n]\times [1:b_1]$ as usual, see Section \ref{sec:interp}. As in that section, we also write the measurement process as $Y_1=P_{\Omega_1}(S_1)$, where $P_{\Omega_1}(\cdot)$ projects onto the index set $\Omega_1$. Let us fix $Q_1$ for now. Also let $Y_{1,r}\in \mathbb{R}^{n\times b_1}$ be a rank-$r$ truncation of $Y_1$ obtained via SVD.
$\alg$ then  sets $\h{\SU}_1=\mbox{span}(Y_{1,r})$. Our objective here is to control $\|P_{\SU^\perp} P_{\h{\SU}_1}\|_F$. Since
\begin{equation}
\|P_{\SU^\perp} P_{\h{\SU}_1}\|_F \le \sqrt{r} \|P_{\SU^\perp} P_{\h{\SU}_1}\|,
\qquad \l( \h{\SU}_1\in\mathbb{G}(n,r) \r)
\label{eq:rank-r adv}
\end{equation}
it suffices to bound the spectral norm.  Conditioned on $Q_1$, it is easy to  verify that $\mathbb{E}[Y_1]=\mathbb{E}[P_{\Omega_1}(S_1)]= p\cdot S_1$, suggesting that we might consider $Y_1$ as a perturbed copy of $p\cdot S_1$ and perhaps consider  $\h{\SU}_1=\mbox{span}(Y_{1,r})$ as a perturbation of $\SU=\mbox{span}(p\cdot S_1)$. Indeed, the perturbation bound in Lemma \ref{lem:perturbation lemma} dictates that
\begin{align}
\| P_{\SU^{\perp}} P_{\h{\SU}_1} \|
& \le \frac{\| Y_1-p  S_1 \|}{p\cdot \sigma_r\l(   S_1 \r)-\| Y_1-p S_1 \|} \nonumber\\
& = \frac{\l\| Y_1-p S_1 \r\|}{p \cdot \sigma_r\l( Q_1 \r)-\l\| Y_1-p S_1 \r\|}
\qquad \l( S_1 = S Q_1,\,\, S^*S =I_r \r)
\nonumber\\
& \le \frac{2}{p}\cdot  \frac{\l\| Y_1-p S_1 \r\|}{\sigma_r\l( Q_1\r)}.
\qquad \l(   \mbox{if } \l\| Y_1-p S_{Q_1}\r\|\le  \frac{p}{2} \cdot \sigma_r(Q_1)\r)
\label{eq:perturb init}
\end{align}
It remains to bound the norm in the last line above.
To that end,
we study the concentration of $Y_1$ about its expectation by writing that
\begin{align}
Y_1 - p  S_1
&= P_{\Omega_1} \l(S_1\r) -p S_{1}
\nonumber\\
& = \sum_{i,j} \l(\epsilon_{i,j}-p\r)  S_1[i,j]\cdot E_{i,j}
\nonumber\\
& =: \sum_{i,j } Z_{i,j},
\label{eq:Bernie prep initialization}
\end{align}
where  $\{\epsilon_{i,j}\}_{i,j}\overset{\operatorname{ind.}}{\sim} \mbox{Bernoulli}(p)$ and $E_{i,j}\in\mathbb{R}^{n\times b_1} $ is the $[i,j]$th canonical matrix.  Additionally, $\{Z_{i,j}\}_{i,j}$ are independent zero-mean random matrices. In order to appeal to the matrix  Bernstein inequality (Lemma~\ref{lem:Bernie for spec}), we compute the $\beta$ and $\sigma$ parameters below, starting with $\beta$:
\begin{align}
\l\| Z_{i,j} \r\| & = \l\| \l(\epsilon_{i,j}-p\r) S_1[i,j]\cdot E_{i,j} \r\| \nonumber\\
& = \l| \l(\epsilon_{i,j}-p\r) S_1[i,j] \r|\qquad  \l( \l\| E_{i,j} \r\|=1\r)\nonumber\\
& \le \l|S_1[i,j] \r| \qquad \l( \epsilon_{i,j}\in\{0,1\}  \r) \nonumber\\
& \le \l\| S_{1}\r\|_\infty
\nonumber\\
& \le \l\| S\r\|_{2\rightarrow\infty} \l\|Q_1 \r\|_{2\rightarrow\infty}
\qquad \l(S_{1} = SQ_1,\,\, \|AB^*\|_\infty\le \|A\|_{2\rightarrow\infty} \|B\|_{2\rightarrow\infty} \r)\nonumber\\
& \le \sqrt{\frac{\eta\l(\SU \r)r}{n}}\cdot
\sqrt{\frac{\eta\l( \Q_1 \r)r}{b_1}} \cdot \|Q_1\|
\qquad \l( \mbox{see \eqref{eq:def of coh}}
\r) \nonumber\\
& =: \beta.
\label{eq:beta in init}
\end{align}
Above, $\|A\|_\infty$ and $\|A \|_{2\rightarrow\infty}$ return the largest entry of $A$ in magnitude and the largest $\ell_2$ norm of the rows of matrix $A$, respectively.  As for $\sigma$, we write that
\begin{align}
\l\| \mathbb{E} \l[ \sum_{i,j} Z_{i,j} Z_{i,j}^* \r] \r\|&
 = \l\| \sum_{i,j} \E \l[ \l(\epsilon_{i,j}-p \r)^2 \r] S_1[i,j]^2 \cdot E_{i,i} \r\|
\nonumber\\
& = \l\| \sum_{i,j} p(1-p) S_1[i,j]^2 \cdot E_{i,i} \r\|
\qquad \l( \epsilon_{i,j} \sim \mbox{Bernoulli}(p) \r) \nonumber\\
 & \le p \l\|\sum_{i,j} S_1[i,j]^2 \cdot E_{i,j}\r\| \nonumber\\
 & = p \l\| \sum_i \l\|S_1[i,:]\r\|^2_2 \cdot E_{i,i} \r\| \nonumber\\
 & = p \max_i \l\| S_1[i,:] \r\|_2^2 \nonumber\\
 & = p \l\| S_1 \r\|_{2\rightarrow\infty}^2 \nonumber\\
 & \le p \l\|S \r\|_{2\rightarrow\infty}^2 \cdot \l\| Q_1\r\|^2
 \qquad \l( S_1 = SQ_1, \,\,  \|AB\|_{2\rightarrow\infty} \le \|A\|_{2\rightarrow\infty}\|B\|\r)
\nonumber\\
& = p \cdot \frac{\eta\l( \SU\r)r}{n} \cdot \|Q_1\|^2.
\qquad \l(\mbox{see \eqref{eq:def of coh}}\r)
\label{eq:sigma leg 1 spec norm}
\end{align}
In a similar fashion, we find that
\begin{align}
\l\| \mathbb{E} \l[ \sum_{i,j} Z_{i,j}^* Z_{i,j} \r] \r\|
& \le  p  \l\| \sum_j \l\| S_1[:,j]\r\|_2^2 E_{j,j} \r\| \nonumber\\
& = p\l\| S_1^* \r\|_{2\rightarrow\infty}^2 \nonumber\\
& \le p\cdot \|S\|^2 \cdot \l\|Q_1\r\|_{2\rightarrow\infty}^2
\qquad \l(S_{1}=SQ_1,\,\, \l\| AB\r\|_{2\rightarrow\infty} \le \|A\|_{2\rightarrow\infty} \|B\|  \r)
\nonumber\\
& \le p \cdot \frac{\eta\l( \Q_1\r)r}{b_1} \cdot \|Q_1\|^2, \qquad \l(\|S\|=1,\,\,\mbox{see \eqref{eq:def of coh}} \r)
\label{eq:sigma leg 2 spect norm}
\end{align}
and eventually
\begin{align}
\sigma^2 & = \l\| \mathbb{E}\l[ \sum_{i,j} Z_{i,j}^*Z_{i,j}\r]\r\|
\vee \l\| \mathbb{E}\l[ \sum_{i,j} Z_{i,j} Z_{i,j}^*\r]\r\|
\nonumber\\
& \le \frac{pr}{n}  \l( 1\vee \frac{n}{b_1} \r)
\l( \eta\l( \SU\r)\vee \eta\l(\Q_1 \r)\r)\|Q_1\|^2.
\qquad \l( \mbox{see \eqref{eq:sigma leg 1 spec norm} and \eqref{eq:sigma leg 2 spect norm}}\r)
\label{eq:sigma in initialization}
\end{align}
Lastly,
\begin{align}
& \max\l( \log(n\vee b_1) \cdot \beta , \sqrt{\log(n\vee b_1)} \cdot \sigma \r)
\nonumber\\
&
\lesssim \max\l(  \log(n\vee b_1)\cdot  \frac{r}{n} ,
\sqrt{\log(n\vee b_1)} \cdot
 \sqrt{\frac{{pr}}{n}}
 \r)
\sqrt{ 1\vee \frac{n}{b_1}}
\cdot
\sqrt{\eta\l(\SU\r) \vee  \eta\l( \Q_1\r)}\cdot  \|Q_1\|
\qquad \mbox{(see \eqref{eq:beta in init} and \eqref{eq:sigma in initialization})}
\nonumber\\
& \le \sqrt{\log(n\vee b_1)} \cdot \sqrt{\frac{{pr}}{n}}
\sqrt{ 1\vee \frac{n}{b_1}} \cdot
\sqrt{\eta\l(\SU\r)\vee  \eta\l( \Q_1\r)}\cdot  \|Q_1\|.
\qquad \l(\mbox{if } p\ge \frac{\log\l( n\vee b_1\r)r}{n} \r)
\label{eq:Bernie init another prep }
\end{align}
The Bernstein inequality now dictates that
\begin{align}
\l\| Y_1 - p S_{1} \r\| & = \l\| \sum_{i,j} Z_{i,j}\r\|
\qquad \mbox{(see \eqref{eq:Bernie prep initialization})}
 \nonumber\\
& \lesssim \alpha  \max\l(\log(n\vee b_1)\cdot \beta,
\sqrt{\log(n\vee b_1} \cdot \sigma  \r)
\qquad \mbox{(see Lemma \ref{lem:Bernie for spec})}
\nonumber\\
& \lesssim
\alpha \sqrt{\log(n\vee b_1)} \cdot \sqrt{\frac{{rp}}{n}} \sqrt{1 \vee \frac{n}{b_1} }
\cdot
\sqrt{\eta\l(\SU \r) \vee \eta\l(\Q_1 \r)} \cdot \|Q_1\|,
\qquad \mbox{(see \eqref{eq:Bernie init another prep })}
\label{eq:norm bound in init}
\end{align}
except with a probability of at most $e^{-\alpha}$.
 In particular, suppose that
\begin{equation}
 p \gtrsim \alpha^2      \nu(Q_1)^2
  \l( 1\vee \frac{n}{b_1} \r)\frac{\l(\eta\l( \SU \r) \vee \eta\l( \Q_1\r) \r)r \log(n\vee b_1) }{n},
  \qquad \l( \nu(Q_1) = \frac{\|Q_1\|}{\sigma_r(Q_1)} \r)
\end{equation}
so that \eqref{eq:perturb init} holds. Then, by  substituting \eqref{eq:norm bound in init} back into \eqref{eq:perturb init} and then applying \eqref{eq:rank-r adv}, we find that
\begin{align}
\frac{\| P_{\SU^\perp}P_{\h{\SU}_1} \|_F}{\sqrt{r}} &
\le  \| P_{\SU^\perp} P_{\h{\SU}_1} \| \qquad \mbox{(see \eqref{eq:rank-r adv})}
\nonumber\\
& \le \frac{2}{p} \cdot
\frac{\l\| Y_1-pS_{1} \r\|}{\sigma_{r}(Q_1)}
\qquad \mbox{(see \eqref{eq:perturb init})} \nonumber\\
& \lesssim \alpha  \sqrt{\log(n\vee b_1) \cdot \frac{r}{pn} \l( 1\vee \frac{n}{b_1} \r)  \l( {\eta\l(\SU \r)\vee \eta\l( \Q_1\r) }\r) }\cdot \nu\l( Q_1\r)
\qquad \mbox{(see \eqref{eq:norm bound in init})}\nonumber\\
& =: \delta_1\l( \nu(Q_1),\eta(Q_1)\r),
\label{eq:bnd on frob iter 1 pre final}
\end{align}
except with a probability of at most $e^{-\alpha}$ and for fixed $Q_1$. In order to remove the conditioning on $Q_1$, fix $\nu\ge 1$, $1\le \eta_{1} \le \frac{b_1}{r}$, and recall the following inequality for events $\mathcal{A}$ and $\mathcal{B}$:
\begin{align}
\Pr\l[ \mathcal{A} \r]
& = \Pr\l[ \mathcal{A} | \mathcal{B}\r] \cdot \Pr\l[ \mathcal{B}\r] + \Pr\l[ \mathcal{A} | \mathcal{B}^C\r] \cdot \Pr\l[ \mathcal{B}^C\r]\nonumber\\
& \le \Pr\l[ \mathcal{A} | \mathcal{B}\r]  +\Pr\l[ \mathcal{B}^C\r].
\label{eq:useful event ineq}
\end{align}
Set $\Q_1=\mbox{span}(Q_1^*)$ and let $\mathcal{E}$ be the event where both $\nu(Q_1) \le \nu$ and $\eta(\Q_1)\le \eta_{1}$.
Thanks to the inequality above, we find that
\begin{align}
& \Pr\l[ \frac{\|P_{\SU^{\perp}} P_{\h{\SU}_1}\|_F}{\sqrt{r}}
\gtrsim
 \delta_1\l( \nu, \eta_{1}\r)
\r]
 \nonumber\\
&
\le \Pr\l[ \frac{\|P_{\SU^{\perp}} P_{\h{\SU}_1}\|_F}{\sqrt{r}}
\gtrsim
\delta\l( \nu,\eta_{1}\r) \mid
\mathcal{E}
\r]
+ \Pr\l[ \mathcal{E}^C \r]
\qquad \mbox{(see \eqref{eq:useful event ineq})}
\nonumber\\
& \le e^{-\alpha} + \Pr\l[ \nu(Q_1) > \nu \r]+ \Pr\l[ \eta(Q_1)> \eta_{1} \r],
\qquad \mbox{(see \eqref{eq:bnd on frob iter 1 pre final})}
\end{align}
which completes the proof of Proposition \ref{lem:init}.

\section{Proof of Lemma \ref{lem:stationary pnts} \label{lem:derivatives}}
 We conveniently define the orthonormal basis 
$$
B=\l[
\begin{array}{cc}
U & U^\perp
\end{array}
\r] \in\mathbb{R}^{n\times n}.
$$
Then the perturbation from $\U$ can be written more compactly as 
$$
U+U^\perp \Delta' = B \l[ 
\begin{array}{c}
I_r \\
\Delta'
\end{array}
\r] \in\mathbb{R}^{n\times r}.
$$
In particular, orthogonal projection onto $\mbox{span}(U+U^\perp \Delta')$ is 
\begin{align}
(U+U^\perp \Delta') (U+U^\perp \Delta')^\dagger & = 
B\l[
\begin{array}{c}
I_r \\
\Delta'
\end{array}
\r] \cdot 
\l[
\begin{array}{c}
I_r \\
\Delta'
\end{array}
\r]^\dagger B^* \nonumber\\
&  = B
\l[
\begin{array}{c}
I_r \\
\Delta'
\end{array}
\r] 
\l(I_r+ \Delta'^*\Delta'\r)^{-1}
\l[
\begin{array}{cc}
I_r & \Delta'^*
\end{array}
\r]
O^* \nonumber\\
& = B
\l[
\begin{array}{cc}
I_r  & \Delta'^* \\
\Delta' & 0_{n-r}
\end{array}
\r] 
B^* + o(\|\Delta'\|_F),
\label{eq:proj diff}
\end{align}
where $0_{n-r}\in \R^{(n-r)\times (n-r)}$ is the matrix of zeros of size $n-r$. 
From \eqref{eq:main pr}, note also that 
$$
f_\Omega(X,\U) = \l\langle P_{\U^\perp}, XX^* \r\rangle + \lambda \|P_{\Omega^C}(X)\|_F^2=  \l\langle I_n - UU^\dagger,XX^*  \r\rangle + \lambda\|P_{\Omega^C}(X)\|_F^2.
$$
 We can now write that 
\begin{align}
& f_{\Omega}\l(X+\Delta,U+U^\perp \Delta' \r) \nonumber\\
& = \l\langle I_n - B \l[ 
\begin{array}{cc}
I_r& \Delta'^*\\
\Delta' & 0_{n-r}
\end{array}
\r] B^*,
\l(X+\Delta \r)\l(X+\Delta \r)^*
\r\rangle + 2\lambda P_{\Omega^C}(X)+ o(\|\Delta\|_F) + o(\|\Delta'\|_F)
\qquad \mbox{(see \eqref{eq:proj diff})}
\nonumber\\
& = \l\langle B \l[ 
\begin{array}{cc}
0_r & -\Delta'^*\\
-\Delta' & I_{n-r}
\end{array}
\r] B^*,
\l(X+\Delta \r)\l(X+\Delta \r)^*
\r\rangle +2\lambda P_{\Omega^C}(X)+  o(\|\Delta\|_F) + o(\|\Delta'\|_F) \nonumber\\
& = \l\langle  \l[ 
\begin{array}{cc}
0_r & -\Delta'^*\\
-\Delta' & I_{n-r}
\end{array}
\r],
B^*\l(XX^* + \Delta X^* + X\Delta^*  \r) B
\r\rangle + 2\lambda P_{\Omega^C}(X)+ o(\|\Delta\|_F) + o(\|\Delta'\|_F) \nonumber\\
& = \l\langle 
\l[ 
\begin{array}{cc}
0_r & 0_{r\times (n-r)}\\
0_{(n-r)\times r } & I_{n-r}
\end{array}
\r]
, B^* XX^* B
\r\rangle \nonumber\\
& +
\l\langle
\l[
\begin{array}{cc}
0_r & 0_{r\times (n-r)}\\
0_{(n-r)\times r} & I_{n-r}
\end{array}
\r]
,B^*\Delta X^* B+B^* X\Delta^* B
\r\rangle \nonumber\\
& - \l\langle
\l[ 
\begin{array}{cc}
0_r & \Delta'^*\\
\Delta' & 0_{n-r}
\end{array}
\r],
B^* XX^* B \r\rangle
 +2\lambda P_{\Omega^C}(X)+  o(\|\Delta\|_F) +o(\|\Delta'\|_F).
\end{align} 
We can further simplify the above expansion as
\begin{align}
& f_{\Omega}\l(X+\Delta,U+U^\perp \Delta'\r) \nonumber\\
& = f_{\Omega}\l( X,\U \r)  + 2 \l\langle
\Delta, P_{\U^{\perp}} X 
\r\rangle - 2\l\langle  
\Delta',(U^{\perp})^* XX^* U
\r\rangle + 2\lambda P_{\Omega^C}(X)+  o(\|\Delta\|_F) +o(\|\Delta'\|_F). 
\end{align}
Therefore the partial derivatives of $f$ are  
\begin{equation}
\partial_X f_\Omega \l(X,\U \r) = 2 P_{\U^\perp} X + 2\lambda P_{\Omega^C}(X)\in\mathbb{R}^{n\times T},
\qquad \partial_{\U} f_\Omega\l(X,\U\r) = -2 (U^\perp)^*XX^*U \in\mathbb{R}^{(n-r)\times r},
\end{equation}
which completes the proof of Lemma \ref{lem:stationary pnts}.

\section{Proof of Lemma \ref{lem:exchange order of lims} \label{sec:proof of lemma exchange}}
 By definition in \eqref{eq:def of R_eps}, $\{R_{\epsilon'}\}_{\epsilon'\le \epsilon}$ is a bounded set, see also Program \eqref{eq:main pr} for the definition of $f_{\h{\Omega}}$. Therefore there exist a subsequence $\{R_{\epsilon_i}\}_i$ and a matrix $R\in \R^{n\times b}$ such that  
\begin{equation}
\lim_{i\rightarrow \infty} \epsilon_i = 0,
\label{eq:vanishing}
\end{equation}
\begin{equation} 
\lim_{i\rightarrow \infty} \|R_{\epsilon_i}-R\|_F=0.
\label{eq:cvg of Rs over eps}
\end{equation}
That is, $\{R_{\epsilon_i}\}_i$ converges to $R$.  On the other hand, for every $\delta\ge 0$, \eqref{eq:implication of lemma} implies that there exists an integer $l_\delta$  that depneds on $\delta$ and 
\begin{equation}
\| R_{k_l,\epsilon}- R_{\epsilon} \|_F 
\le 2\epsilon l'+\delta ,
\qquad  l\ge l_\delta.
\label{eq:that above ineq}
\end{equation}
Restricted to the sequence $\{\epsilon_i\}_i$ above, \eqref{eq:that above ineq} reads as 
\begin{equation}
\| R_{k_l,\epsilon_i}- R_{\epsilon_i} \|_F 
\le 2\epsilon_i l'+\delta ,
\qquad  l\ge l_\delta,
\end{equation}
which, in the limit of $i\rightarrow \infty$, yields that
\begin{equation}
\lim_{i\rightarrow\infty} 
\| R_{k_l,\epsilon_i}- R_{\epsilon_i} \|_F 
\le \lim_{i\rightarrow \infty}2\epsilon_i +\delta = \delta,
\qquad l\ge l_\delta.
\label{eq:limit raw}
\end{equation}
We used \eqref{eq:vanishing} to obtain the identity above.
An immediate consequence of \eqref{eq:limit raw} is that 
\begin{equation}
\lim_{l\rightarrow \infty}\lim_{i\rightarrow\infty} 
\| R_{k_l,\epsilon_i}- R_{\epsilon_i} \|_F 
 =  0.
 \label{eq:two limit}
\end{equation}
Invoking \eqref{eq:cvg of Rs over eps}, it then follows that 
\begin{align}
\lim_{l\rightarrow \infty}\lim_{i\rightarrow\infty} 
\| R_{k_l,\epsilon_i}- R \|_F 
 & = \lim_{l\rightarrow \infty}\lim_{i\rightarrow\infty} 
\| R_{k_l,\epsilon_i}- R_{\epsilon_i} \|_F 
\qquad \mbox{(see \eqref{eq:cvg of Rs over eps})} \nonumber\\
& = 0. \qquad \mbox{(see \eqref{eq:two limit})}
\label{eq:final cvg of Rs}
\end{align}
Exchanging the order of limits above yields that 
\begin{align}
& \lim_{i\rightarrow\infty} \lim_{l\rightarrow\infty} \| R_{k_l,\epsilon_i}-R \|_F \nonumber\\
& 
\le \lim_{i\rightarrow\infty} \lim_{l\rightarrow\infty}  \| R_{k_l,\epsilon_i}-R_{\epsilon_i} \|_F + \lim_{i\rightarrow \infty} \|R_{\epsilon_i}-R\|_F 
\qquad \mbox{(triangle inequality)}
\nonumber\\
& =  \lim_{i\rightarrow\infty} \lim_{l\rightarrow\infty}  \| R_{k_l,\epsilon_i}-R_{\epsilon_i} \|_F 
\qquad \mbox{(see \eqref{eq:cvg of Rs over eps})} \nonumber\\
& = 0. \qquad \mbox{(see \eqref{eq:one limit})}
\label{eq:final cvg of Rs 2}
\end{align}
Therefore, \eqref{eq:final cvg of Rs} and \eqref{eq:final cvg of Rs 2} together state that $R_{k,\epsilon_i}$ converges to $R$ as $l,i\rightarrow\infty$, namely
\begin{equation}
\lim_{l,i\rightarrow\infty}  \|R_{k_l,\epsilon_i}-R\|_F = 0,\label{eq:final lim of R_k,i proof}
\end{equation}
thereby proving the first claim in Lemma \ref{lem:exchange order of lims}, see  \eqref{eq:final lim of R_k,i}. 
In order to prove the claim about subspaces in Lemma \ref{lem:exchange order of lims}, we proceed as follows. 
Recall  the output of $\alg$, namely $\{(R_k,\h{\SU}_k)\}_k$ constructed in Algorithm \ref{alg:Alg}. In light of Section \ref{sec:interp}, $R_k$ is also the unique minimizer of Program \eqref{eq:Rk interp}.  Recall that $\I_l = [k_l-l+1:k_l]$ from \eqref{eq:interval 0}. Recall also the construction of sequence $\{(R_{k,\epsilon},\h{\SU}_{k,\epsilon} )\}_{k\in \I_{l}}$ in the beginning of Section \ref{sec:cvg to crit pnt} and note that both procedures are initialized identically at the beginning of interval $\I_l$, namely  $\h{\SU}_{k_l-l,\epsilon} = \h{\SU}_{k_l-1}$.  Therefore, for  fixed $l$, observe that\footnote{To verify \eqref{eq:snipe out n st pnts pre}, note that for every feasible $X_\epsilon$ in Program \eqref{eq:main eps}, $X= Y+P_{\widehat{\Omega}^C}(X_\epsilon)$ is feasible for Program \eqref{eq:main eps zero}. Moreover, as $\epsilon\rightarrow0$, $\|X_\epsilon-X\|_F\rightarrow0$ and consequently, by continuity, $| f_{\widehat{\Omega}}(X_\epsilon,\U)- f_{\widehat{\Omega}}(X,\U) | \rightarrow0$. On the other hand, by definition, $R_k$ is the unique minimizer of Program \eqref{eq:main eps zero}, namely $f_{\widehat{\Omega}}(R_k,\U) < f_{\widehat{\Omega}}(X,\U)$ for any $X\ne R_k$ feasible for Program \eqref{eq:main eps zero}. For sufficiently small $\epsilon$, it follows that $f_{\widehat{\Omega}}(R_k,\U) < f_{\widehat{\Omega}}(X_\epsilon,\U)$ for any $X_\epsilon$ that is feasible for Program \eqref{eq:main eps} and $\lim\inf_{\epsilon\rightarrow 0} \| X_\epsilon-R_k\| >0$. That is, the unique minimizer of Program \eqref{eq:main eps} approaches $R_k$ in the limit, namely $\lim_{\epsilon\rightarrow 0} \|R_{k,\epsilon}-R_k\|_F = 0$. } 
\begin{equation}
\lim_{\epsilon\rightarrow 0} \| R_{k,\epsilon} - R_k\|_F = 0, \qquad k\in \I_l,
\label{eq:snipe out n st pnts pre}
\end{equation}
which, when restricted to $\{\epsilon_i\}_i$, reads as 
\begin{equation}
\lim_{i\rightarrow \infty} \| R_{k,\epsilon_i} - R_k\|_F = 0, \qquad k\in \I_l.
\label{eq:snipe out n st pnts pre restricted}
\end{equation}
By design, every $R_k$ has a spectral gap in the sense that  there exists $\tau>0$ such that 
\begin{equation}
\frac{\sigma_r(R_{k})}{\sigma_{r+1}(R_{k})} \ge 1+\tau, 
\label{eq:gap designed}
\end{equation}
for every $k$. Recall that $\h{\SU}_k$ is the span of  top $r$ left singular vectors of $R_k$.  An immediate consequence of \eqref{eq:gap designed} is that  $\h{\SU}_{k}$ is uniquely defined, namely there are no ties in the spectrum of $R_{k}$.  By \eqref{eq:snipe out n st pnts pre restricted}, there are no ties in the spectrum of $R_{k,\epsilon_i}$  as well for sufficiently large $i$, namely 
\begin{equation}
\lim_{i\rightarrow \infty} \frac{\sigma_r(R_{k,\epsilon_i})}{\sigma_{r+1}(R_{k,\epsilon_i})} \ge 1+\tau,\qquad k\in \I_l.
\label{eq:fixed k and limit eps}
\end{equation} 
By sending $l$ to infinity above, we find that 
\begin{align}
1+\tau & \le \lim_{l\rightarrow \infty}\lim_{i\rightarrow \infty} \frac{\sigma_r(R_{k_l,\epsilon_i})}{\sigma_{r+1}(R_{k_l,\epsilon_i})} 
\qquad \mbox{(see \eqref{eq:fixed k and limit eps})}
\nonumber\\
& = \lim_{l,i\rightarrow\infty}   \frac{\sigma_r(R_{k_l,\epsilon_i})}{\sigma_{r+1}(R_{k_l,\epsilon_i})}  
\qquad \mbox{(see \eqref{eq:final lim of R_k,i proof})} \nonumber\\
& = \frac{\sigma_r(R)}{\sigma_{r+1}(R)}.
\qquad \mbox{(see \eqref{eq:final lim of R_k,i proof})}
\label{eq:gap double hit}
\end{align}
Recall that $\h{\SU}_{k,\epsilon_i}$ is by definition the span of  leading $r$ left singular vectors of $R_{k,\epsilon_i}$. Likewise, let $\h{\SU}$  be the span of  leading $r$ left singular vectors of  $R$. An immediate consequence of the second line of \eqref{eq:gap double hit} is that   $\h{\SU}_{k_l,\epsilon_i}$ is uniquely defined, namely no ties in the spectrum of $R_{k_l,\epsilon_i}$ in the limit of $l,i\rightarrow\infty$.  The third line of \eqref{eq:gap double hit} similarly implies that $\h{\SU}$ is uniquely defined. 
Given the uniqueness of these subspaces, another implication of  \eqref{eq:final lim of R_k,i proof} is that 
\begin{equation}
\lim_{l,i\rightarrow\infty} d_{\GR}(\h{\SU}_{k_l,\epsilon_i},\h{\SU}) = 0,
\label{eq:final lim of S_ki proof}
\end{equation}
where $d_{\GR}$ is the metric on Grassmannian defined in \eqref{eq:err metric}. 
Lastly we show that $\alg$ in the limit produces copies of $(R,\h{\SU})$ on the interval $\I_{l,l'}$. This is done by simply noting that 
\begin{align}
\lim_{l\rightarrow\infty } \| R_{k_l} - R \|_F & 
= \lim_{l\rightarrow\infty } \lim_{i\rightarrow\infty } \| R_{k_l,\epsilon_i} - R \|_F
\qquad \mbox{(see \eqref{eq:snipe out n st pnts pre restricted})} \nonumber\\
& = \lim_{l,i\rightarrow\infty } \| R_{k_l,\epsilon_i} - R \|_F
\qquad \mbox{(see \eqref{eq:final lim of R_k,i proof})} \nonumber\\
& = 0, \qquad \mbox{(see \eqref{eq:final lim of R_k,i proof})}
\end{align}
\begin{align}
\lim_{l\rightarrow\infty } d_{\GR}(\h{\SU}_{k_l},\h{\SU})  & 
= \lim_{l\rightarrow\infty } \lim_{i\rightarrow\infty } d_{\GR}(\h{\SU}_{k_l,\epsilon_i},\h{\SU}) 
\qquad \mbox{(see \eqref{eq:snipe out n st pnts pre restricted})} \nonumber\\
& = \lim_{l,i\rightarrow\infty }  \lim_{i\rightarrow\infty } d_{\GR}(\h{\SU}_{k_l,\epsilon_i},\h{\SU})
\qquad \mbox{(see \eqref{eq:final lim of S_ki proof})} \nonumber\\
& = 0, \qquad \mbox{(see \eqref{eq:final lim of S_ki proof})}
\end{align}
which completes the proof of Lemma \ref{lem:exchange order of lims}.

\section{Proof of Lemma \ref{lem:final blockwise st} \label{sec:proof of final blockwise st}}

By restricting \eqref{eq:st leg 1} to $\{\epsilon_i\}_i$, we find that 
\begin{align}
0 & = \partial_{\U} f_{\h{\Omega}}(R_{k_l,\epsilon_i},\h{\SU}_{k,\epsilon_i})  
\qquad \mbox{(see \eqref{eq:st leg 1})}
\nonumber\\
& = \partial_{\U} f_{\Omega_{k_l}}(R_{k_l,\epsilon_i},\h{\SU}_{k,\epsilon_i}),
\qquad \mbox{(see \eqref{eq:repeated pattern})}
\label{eq:st leg 1 recall}
\end{align}
for  every $l,i$. By the joint continuity of $\partial_{\U}f_{\Omega_{k_l}}$ in Lemma \ref{lem:stationary pnts}, it follows that 
\begin{align}
0 & = \lim_{l,i\rightarrow \infty} \l\| \partial_{\U} f_{\Omega_{k_l}}(R_{k_l,\epsilon_i},\h{\SU}_{k_l,\epsilon_i})\r\|_F  \nonumber\\
& = \lim_{l\rightarrow\infty} \l\| \partial_{\U} f_{\Omega_{k_l}}(R,\h{\SU})\r\|_F, 
\qquad \mbox{(see Lemma \ref{lem:exchange order of lims})}
\end{align}
%
%
which establishes \eqref{eq:st cnd pre final 1}. To establish \eqref{eq:st cnd pre final 2}, we restrict \eqref{eq:st cnd 1.5} to $\{\epsilon_i\}_i$  and then send $i,l$ to infinity to find that 
\begin{align}
0 & = \lim_{i\rightarrow\infty} \lim_{l\rightarrow\infty}\| P_{\h{\Omega}}(R_{k_l,\epsilon_i}) - \h{Y} \|_F 
\qquad \mbox{(see \eqref{eq:st cnd 1.5})}
\nonumber\\
& = \lim_{i\rightarrow\infty} \lim_{l\rightarrow\infty}\| P_{{\Omega}_k}(R_{k_l,\epsilon_i}) - Y_{k_l} \|_F 
\qquad \mbox{(see (\ref{eq:repeated pattern},\ref{eq:limit behaviour}))} \nonumber\\
& = \lim_{l\rightarrow\infty}  \| P_{{\Omega}_{k_l}}(R) - {Y}_{k_l} \|_F. 
\qquad \mbox{(see   Lemma \ref{lem:exchange order of lims})}
\end{align}
To establish \eqref{eq:st cnd pre final 3}, we restrict  \eqref{eq:st of fixed Y cnd 2} to $\{\epsilon_i\}_i$ and then send $i$ to infinity to find that 
\begin{align}
0 & =\lim_{i\rightarrow \infty} \lim_{l\rightarrow\infty} \l\| \partial_X f_{\h{\Omega}}(R_{k_l,\epsilon_i},\h{\SU}_{k_l,\epsilon_i}) + \lambda_{k_l,\epsilon_i} (P_{\h{\Omega}}(R_{k_l,\epsilon_i})-\h{Y}) \r\|_F^2 \nonumber\\
& = \lim_{i\rightarrow \infty} \lim_{l\rightarrow\infty} 
\l\| P_{\h{\Omega}^C}\l( \partial_X f_{\h{\Omega}}(R_{k_l,\epsilon_i},\h{\SU}_{k_l,\epsilon_i}) \r)   \r\|_F^2 +  \lim_{i\rightarrow \infty} \lim_{l\rightarrow\infty}   \l\|  P_{\h{\Omega}}\l( \partial_X f_{\h{\Omega}}(R_{k_l,\epsilon},\h{\SU}_{k_l,\epsilon_i}) 
+ \lambda_{k,\epsilon_i} (R_{k_l,\epsilon_i}-\h{Y}) 
\r)  \r\|_F^2 \nonumber\\
& \ge  \lim_{i\rightarrow \infty} \lim_{l\rightarrow\infty} 
\l\| P_{\h{\Omega}^C}\l( \partial_X f_{\h{\Omega}}(R_{k_l,\epsilon_i},\h{\SU}_{k_l,\epsilon_i}) \r)   \r\|_F^2  \nonumber\\
&  =  \lim_{l\rightarrow\infty} \l\| P_{{\Omega}_{k_l}^C}\l( \partial_X f_{\Omega_{k_l}}(R,\h{\SU}) \r)   \r\|_F^2.
\qquad \mbox{(see \eqref{eq:repeated pattern} and Lemma \ref{lem:exchange order of lims})}
\end{align}
This completes the proof of Lemma \ref{lem:final blockwise st}.

\section{Proof of Lemma \ref{lem:whp main} \label{sec:Proof-of-Lemma whp main}}


Recall that by construction in Section \ref{sec:interp},  the rows of the coefficient matrix $Q_k\in\mathbb{R}^{b\times r}$ are independent copies of the random vector $q\in\mathbb{R}^r$.
Setting   $S_k=S\cdot Q_k^*\in\mathbb{R}^{n\times b}$,  we  observe in iteration $k$ each entry of  the data block $S_k$ independently with a probability of $p$, collect the observed entries in $Y_k\in\mathbb{R}^{n\times b}$, supported on the index set $\Omega_k\subseteq [1:n]\times [1:b]$. 
 We write this as $Y_k=P_{\Omega_k}(S_k)$, where the linear operator $P_{\Omega_k}$ retains the entries on the index set $\Omega_k$ and sets the rest to zero. Recall also that $Q_k$ is obtained by concatenating the coefficient vectors $\{q_t\}_{t=(k-1)b+1}^{kb} \subset \R^r$. To unburden the notation, we enumerate these vectors as $\{q_j\}_{j=1}^b$. Likewise, we use the indexing $\{s_j,y_j,\omega_j\}_{j=1}^b$ for the data vectors, incomplete data vectors, and their supports, respectively. 
%

Given the new incomplete block $Y_k$ at iteration $k$, we update our estimate of the true subspace $\SU$ from the old $\h{\SU}_{k-1}$ as follows. We calculate the random matrix
\begin{equation}
\mathbb{R}^{n\times b}\ni R_k := Y_k+ P_{\Omega_k^{C}}\left(O\left(Y_k\right)\right),\label{eq:main}
\end{equation}
where the linear operator $P_{\Omega_k^C}$ projects onto the complement of index set $\Omega_k$, and
\begin{equation}
O(Y_k)=\left[\cdots \h{S}_{k-1} (P_{\omega_{j}} \h{S}_{k-1})^{\dagger} y_j\cdots\right]\in\mathbb{R}^{n\times b}.
\label{eq:def of O recall}
\end{equation}
Above, $\h{S}_{k-1}\in\mathbb{R}^{n\times r}$ is an orthonormal basis for the $r$-dimensional subspace $\h{\SU}_{k-1}$. If $\sigma_r(R_k) < \sigma_{\min}$, reject this iteration, see Algorithm \ref{alg:Alg}. Otherwise, let $R_{k,r}$ denote a rank-$r$ truncation of $R_k$ obtained via SVD. Then our updated estimate is $\h{\SU}_{k}=\mbox{span}(R_{k,r})$. 

We condition on the subspace $\h{\SU}_{k-1}$ and the coefficient matrix $Q_k$ for now. To control the estimation error $d_{\GR}(\SU,\h{\SU}_{k})$, our strategy is to treat $R_k$ as a perturbed copy of $S_k=SQ_k^*$ and in turn treat $\h{\SU}_{k}=\mbox{span}(R_{k,r})$ as a perturbed copy of  $\SU=\mbox{span}(S_k)$.
Indeed, an application of the perturbation bound in Lemma \ref{lem:perturbation lemma}
yields that
\begin{align}
\Vert P_{\SU^{\perp}}P_{\h{\SU}_{k}}\Vert _{F}& \le\frac{\left\Vert P_{\SU^{\perp}}R_k\right\Vert _{F}}{\sigma_r\left(R_k\right)}. 
\label{eq:objective}
\end{align}
To control the numerator above, we begin with some
preparation. First, recalling the definition of $O(Y_k)$ from \eqref{eq:def of O recall}, we observe that
\begin{align}
O\left(Y_k\right) & = O\l( P_{\Omega_k}\l(S_k\r) \r)
\qquad \l(Y_k=P_{\Omega_k}\l( S_k\r)  \r)
\nonumber\\
 & =O(P_{\Omega_k}( P_{\h{\SU}_{k-1}}S_{k}) ) +O(P_{\Omega_k}( P_{\h{\SU}_{k-1}^{\perp}}S_k ))\qquad\l( \mbox{linearity of } O\r)\nonumber \\
 & =P_{\h{\SU}_{k-1}}S_k+O(P_{\Omega_k}(P_{\h{\SU}_{k-1}^{\perp}}S_k)).\qquad\mbox{(see (\ref{eq:def of O recall}))}\label{eq:basic decomp}
\end{align}
The above decomposition allows us to rewrite $R$ in \eqref{eq:main} as
\begin{align}
R_k & = Y_k+ P_{\Omega_k^{C}}\left(O\left(Y_k\right)\right)\qquad\mbox{(see (\ref{eq:main}))}\nonumber \\
& =  P_{\Omega_k}(S_k)+ P_{\Omega_k^{C}}\left(O\left(Y_k\right)\right)\qquad
\l( Y_k=P_{\Omega_k}(S_k) \r)\nonumber\\
 & =P_{\Omega_k}\left(S_{k}\right)+P_{\Omega_k^{C}}(P_{\h{\SU}_{k-1}}S_{k})+[P_{\Omega_k^{C}}\circ O\circ P_{\Omega_k}](P_{\h{\SU}_{k-1}^{\perp}}S_{k})
 \qquad\left(\mbox{see (\ref{eq:basic decomp}), }f\circ g(x):=f(g(x))\right)\nonumber \\
 & =S_k-P_{\Omega_k^{C}}\left(S_{k}\right)+P_{\Omega_k^{C}}(P_{\h{\SU}_{k-1}}S_{k})+[P_{\Omega_k^{C}}\circ  O \circ P_{\Omega_k} ](P_{\h{\SU}_{k-1}^{\perp}}S_{k})
 \nonumber \\
 & = S_{k}-P_{\Omega_k^{C}}(P_{\h{\SU}_{k-1}^{\perp}}S_{k})+[P_{\Omega_k^{C}}\circ O \circ P_{\Omega_k} ](P_{\h{\SU}_{k-1}^{\perp}}S_{k})
 \nonumber \\
 & =
 S_{k}-P_{\Omega_k^{C}}(P_{\h{\SU}_{k-1}^{\perp}}S_{k})
- \left[P_{\Omega_k}\circ O\circ P_{\Omega_k}\right](P_{\h{\SU}_{k-1}^{\perp}}S_{k}) 
 +\l[ O\circ P_{\Omega_k}  \r](P_{\h{\SU}_{k-1}^\perp}S_{k}).\label{eq:exp for R}
\end{align}
Since $P_{\SU^{\perp}}S_{k}=P_{\SU^{\perp}}SQ_k^{*}=0$, it immediately  follows that 
\[
P_{\SU^{\perp}}R_k= -
P_{\SU^\perp} \cdot
P_{\Omega_k^{C}}(P_{\h{\SU}_{k-1}^{\perp}}S_{k})
- P_{\SU^\perp} \cdot
[P_{\Omega_k}\circ O \circ P_{\Omega_k} ](P_{\h{\SU}_{k-1}^{\perp}}S_{k})
+
P_{\SU^\perp} \cdot
\l[ O \circ P_{\Omega_k} \r](P_{\h{\SU}_{k-1}^{\perp}}S_{k}).
\]
In particular, with an application of the triangle inequality above, we find that
\begin{align}
\left\Vert P_{\SU^{\perp}}R_k\right\Vert _{F} &
\le
\left\Vert P_{\Omega_k^{C}}(P_{\h{\SU}_{k-1}^{\perp}}S_{k})
+ [P_{\Omega_k}\circ O \circ P_{\Omega_k} ](P_{\h{\SU}_{k-1}^{\perp}}S_{k})
\right\Vert _{F}
+\left\Vert P_{\SU^{\perp}}\cdot \l[ O \circ P_{\Omega_k}\r] (P_{\h{\SU}_{k-1}^{\perp}}S_{k})\right\Vert _{F}.\label{eq:2nd decomp}
\end{align}
We proceed by controlling each norm on the right-hand side above using
the next two technical lemmas, proved in Appendices \ref{sec:Proof-of-Lemma t3} and \ref{sec:Proof-of-Lemma t2}, respectively. 
\begin{lem}
\label{lem:t3} It holds that 
\begin{equation}
\E \l[ \left\Vert P_{\Omega_k^{C}}(P_{\h{\SU}_{k-1}^{\perp}}S_{k})
+ \left[P_{\Omega_k}\circ O \circ P_{\Omega_k} \right](P_{\h{\SU}_{k-1}^{\perp}}S_{k})
\r\Vert _{F} \,|\, \h{\SU}_{k-1},Q_k\r] \le 
\sqrt{1-p} \cdot  \| P_{\SU^\perp} P_{\h{\SU}_{k-1}} \|_F \cdot   \|Q_k\|. 
\label{eq:contraction expectation}
\end{equation}
For fixed $\h{\SU}_{k-1}$ and $Q_k$, we also have  
\begin{equation}
\left\Vert P_{\Omega_k^{C}}(P_{\h{\SU}_{k-1}^{\perp}}S_{k})
+ \left[P_{\Omega_k}\circ O \circ P_{\Omega_k} \right](P_{\h{\SU}_{k-1}^{\perp}}S_k )
\right\Vert _{F}
\le \| P_{\SU^{\perp}}P_{\h{\SU}_{k-1}}\Vert _{F}\cdot\|Q_k\|.
\label{eq:contraction}
\end{equation}
and the  stronger bound
\begin{equation}
\left\Vert P_{\Omega_k^{C}}(P_{\h{\SU}_{k-1}^{\perp}}S_k )
+ \left[P_{\Omega_k}\circ O \circ P_{\Omega_k} \right](P_{\h{\SU}_{k-1}^{\perp}}S_k )
\right\Vert _{F}
\le 
\sqrt{1-p/2} \cdot \Vert  P_{\SU^{\perp}}P_{\h{\SU}_{k-1}}\Vert _{F}\cdot\|Q_k\|,
\label{eq:strong contraction}
\end{equation}
except with a probability of at
most 
\begin{equation}
\exp\l(-\frac{\Cr{fail}p^2nb}{\wt{\eta}_k}\r),
\end{equation}
where 
\begin{equation}
\wt{\eta}_k = \wt{\eta}(P_{\h{\SU}_{k-1}^\perp}S_k) = nb\cdot  \frac{\|P_{\h{\SU}_{k-1}^\perp} S_k\|^2_{\infty}}{\|P_{\h{\SU}_{k-1}^\perp}S_k\|_F^2}.
\end{equation}
\end{lem}
\begin{lem}
\label{lem:t2}For fixed $\o, Q_k$ and $\alpha \ge1$, it holds that
\begin{equation}
\left\Vert P_{\SU^{\perp}}\cdot \l[ O\circ P_{\Omega_k} \r](P_{\h{\SU}_{k-1}^{\perp}}S_k )\r\Vert _{F}\lesssim  \alpha \log b \sqrt{\frac{ \log n}{p}}   \cdot 
\Vert P_{\SU^{\perp}}P_{\h{\SU}_{k-1}}\Vert \cdot\Vert P_{\SU^{\perp}}P_{\h{\SU}_{k-1}}\Vert _{F}\cdot\|Q_k\|,
\label{eq:neighborhood}
\end{equation}
 except with a probability of at
most $b^{-C\alpha }$ and 
provided that $p\gtrsim\alpha^{2}\log^{2} b \log n\cdot\eta(\h{\SU}_{k-1})r/n$. 
\end{lem}
We next use Lemmas \ref{lem:t3} and \ref{lem:t2} to derive two  bounds for the numerator of \eqref{eq:objective}, the weaker bound holds with high probability but the stronger bound holds with  only some probability. More specifically, substituting \eqref{eq:contraction} and \eqref{eq:neighborhood}  into (\ref{eq:2nd decomp}) yields that
\begin{align}
&\left\Vert P_{\SU^{\perp}}R_k\right\Vert _{F}\nonumber\\
& \le
\left\Vert P_{\Omega_k^{C}}(P_{\h{\SU}_{k-1}^{\perp}}S_k )
+ \left[P_{\Omega_k}\circ O \circ P_{\Omega_k} \right](P_{\h{\SU}_{k-1}^{\perp}}S_k )
\right\Vert _{F}+\left\Vert P_{\SU^{\perp}}\cdot \l[ O \circ P_{\Omega_k}\r] (P_{\h{\SU}_{k-1}^{\perp}}S_k )\right\Vert _{F}
\qquad \mbox{(see \eqref{eq:2nd decomp})} \nonumber\\
&
\le 
\left(1+ C\alpha \log b \sqrt{\frac{ \log n}{p}} \| P_{\SU^\perp}P_{\h{\SU}_{k-1}}\|   
\right)
\Vert P_{\SU^{\perp}}P_{\h{\SU}_{k-1}}\Vert _{F}\|Q_k\|,\label{eq:numerator -1}
\end{align}
except with a probability of at most $b^{-C\alpha}$ and provided that $p\gtrsim \alpha^2 \log^2 b \log n \cdot \eta(\h{\SU}_{k-1})r/n$. For positive $c$ to be set later, let us further assume that 
\begin{equation}
\Vert P_{\SU^{\perp}}P_{\h{\SU}_{k-1}}\Vert
\le \Vert P_{\SU^{\perp}}P_{\h{\SU}_{k-1}}\Vert_F
\lesssim 
\frac{ p^{\frac{7}{2}}nb}{ c \alpha   \log b \sqrt{\log n} }.
\label{eq:initial 2}
\end{equation}
With an appropriate constant replacing $\lesssim$ above,  \eqref{eq:numerator -1} simplifies to 
\begin{align}
\left\Vert P_{\SU^{\perp}}R_k\right\Vert _{F}
\lesssim
\left(1+  \frac{p^3nb}{c }\right)\Vert P_{\SU^{\perp}}P_{\h{\SU}_{k-1}}\Vert _{F}\|Q_k\|.\label{eq:numerator 0}
\end{align}
A stronger bound is obtained by substituting (\ref{eq:strong contraction}, \ref{eq:neighborhood}) into \eqref{eq:2nd decomp}, namely  
\begin{align}
&\left\Vert P_{\SU^{\perp}}R_k\right\Vert _{F}\nonumber\\
& \le
\l\Vert P_{\Omega_k^{C}}(P_{\h{\SU}_{k-1}^{\perp}}S_k )
+ [P_{\Omega_k}\circ O \circ P_{\Omega_k} ](P_{\h{\SU}_{k-1}^{\perp}}S_k )
\r\Vert _{F}+\l\Vert P_{\SU^{\perp}}\cdot [ O \circ P_{\Omega_k}] (P_{\h{\SU}_{k-1}^{\perp}}S_k )\right\Vert _{F}
\qquad \mbox{(see \eqref{eq:2nd decomp})} \nonumber\\
&
\le \l( \sqrt{1-\frac{p}{2}} +  \frac{p^3nb}{c }\r) \| P_{\SU^\perp } P_{\h{\SU}_{k-1}} \|_F  \|Q_k\|
\qquad \mbox{(see (\ref{eq:strong contraction},\ref{eq:neighborhood},\ref{eq:initial 2}))}
\nonumber\\
& \le 
\l( {1-\frac{p}{4}} +  \frac{p^3nb}{c } \r)\| P_{\SU^\perp } P_{\h{\SU}_{k-1}} \|_F  \|Q_k\|, 
\label{eq:numerator}
\end{align}
provided that $p\gtrsim \alpha^2 \log^2 b \log n \cdot \eta(\h{\SU}_{k-1})r/n$
and except with a probability of at most 
$$
\exp\l(-\frac{\Cr{fail}p^2nb}{\wt{\eta}_k}\r)+b^{-C\alpha}.
$$
Note that \eqref{eq:numerator 0} and \eqref{eq:numerator} offer two alternative bounds for the numerator in the last line  \eqref{eq:objective}, which we will next use to complete the proof of Lemma \ref{lem:whp main}. 
%
%

Fix the subspace $\h{\SU}_{k-1}$ for now. Let $\ev_{k-1}$ be the event where $p\gtrsim \alpha^2 \log^2 b\log n \cdot \eta(\h{\SU}_{k-1})r/n$ and \eqref{eq:initial 2} holds. For $\nu\ge 1$ to be set later, let $\ev'_k$ be the event where 
\begin{equation}
\|Q_k\| \le \nu \cdot \sigma_{\min}.
\label{eq:bnd on Qk}
\end{equation}
 Conditioned on the event $\ev_{k-1}\cap \ev'_k$, we write that 
\begin{align}
 \Vert P_{\SU^{\perp}}P_{\n}\Vert _{F} 
 & \le\frac{\left\Vert P_{\SU^{\perp}}R_k\right\Vert _{F}}{\sigma_{r}\left(R_k\right)}
\qquad \mbox{(see \eqref{eq:objective})}
 \nonumber \\
 & \le \frac{\left\Vert P_{\SU^{\perp}}R_k\right\Vert _{F}}{\sigma_{\min}} \nonumber\\
 & \lesssim  \nu \left( 1+ \frac{p^3nb}{c } \right)\Vert P_{\SU^{\perp}}P_{\h{\SU}_{k-1}}\Vert _{F},
\qquad \mbox{(see (\ref{eq:numerator 0}))} 
 \label{eq:dragon  contraction only}
\end{align}
except with a probability of at most $b^{-C\alpha}$.
A  stronger bound is obtained from \eqref{eq:numerator}, namely 
\begin{align}
 \Vert P_{\SU^{\perp}}P_{\n}\Vert _{F}
 & \le\frac{\left\Vert P_{\SU^{\perp}}R_k\right\Vert _{F}}{\sigma_{r}\left(R_k\right)}
\qquad \mbox{(see \eqref{eq:objective})}
 \nonumber \\
 & \le  \nu \l(1- \frac{p}{4}+ \frac{p^3nb}{c }\r) \Vert P_{\SU^{\perp}}P_{\h{\SU}_{k-1}}\Vert _{F},
 \qquad \mbox{(see (\ref{eq:numerator}))} 
 \label{eq:dragon}
\end{align}
conditioned on the event $\ev_{k-1}\cap \ev'_k$ and except with a probability of at most 
\begin{equation}
\exp\l(-\frac{\Cr{fail}p^2nb}{\wt{\eta}_k}\r)+b^{-C\alpha}.
\end{equation}
This completes the proof of  the probabilistic claims in Lemma \ref{lem:whp main}, namely \eqref{eq:nearly contraction} and \eqref{eq:geo fast}. To complete the proof of Lemma \ref{lem:whp main}, we next derive a bound for the conditional expectation of $\|P_{\SU^\perp} P_{\h{\SU}_k}\|_F$. 
Let  $\ev''_k$ be the event where  $p\gtrsim \alpha^2\log^{2}b \log n \cdot \eta(\h{\SU}_{k-1})r / n$ and  
\begin{equation}
\label{eq:def of eventprime -1}
\left\Vert P_{\SU^{\perp}}\cdot \l[ O\circ P_\Omega \r]\left(P_{\h{\SU}_{k-1}^{\perp}}S_k \right)\right\Vert _{F}\lesssim  \frac{p^3nb}{c}
\Vert P_{\SU^{\perp}}P_{\h{\SU}_{k-1}}\Vert _{F}\|Q_k\|.
\end{equation}
In light of Lemma \ref{lem:t2}, we have that 
\begin{equation}
\Pr[\ev''_k|\h{\SU}_{k-1},\ev_k,\ev'_k] \ge 1- b^{-C\alpha}.
\label{eq:chance of evprime}
\end{equation} 
Using the law of total expectation, we now write that 
\begin{align}
& \E \l[\Vert P_{\SU^{\perp}}P_{\n}\Vert _{F}\, |\, \h{\SU}_{k-1},\ev_{k-1},\ev'_k\r] \nonumber\\
& 
= \E \l[ \Vert P_{\SU^{\perp}}P_{\n}\Vert _{F} \,|\, \h{\SU}_{k-1},\ev_{k-1},\ev'_k,\ev''_k\r] \cdot \Pr[\ev''_k|\h{\SU}_{k-1},\ev_{k-1},\ev'_k] \nonumber\\
&\qquad  +
 \E \l[ \Vert P_{\SU^{\perp}}P_{\n}\Vert _{F} \,|\, \h{\SU}_{k-1},\ev_{k-1},\ev'_k,\ev''^C_k\r] \cdot  \Pr[\ev''^C_k|\h{\SU}_{k-1},\ev_{k-1},\ev'_k] \nonumber\\
& \le  \E \l[ \Vert P_{\SU^{\perp}}P_{\n}\Vert _{F} \,|\, \h{\SU}_{k-1},\ev_{k-1},\ev'_k,\ev''_k\r]  + \sqrt{r}  \Pr[\ev''^C_k| \h{\SU}_{k-1},\ev_{k-1},\ev'_k]
\qquad \l(\h{\SU}_k \in \GR(n,r) \r) \nonumber\\
&  \le \E \l[ \Vert P_{\SU^{\perp}}P_{\n}\Vert _{F} \,|\, \h{\SU}_{k-1},\ev_{k-1},\ev'_k,\ev''_k\r] +  \sqrt{r} b^{-C\alpha} \qquad \mbox{(see \eqref{eq:chance of evprime})} \nonumber\\
& \le \E \l[ \frac{\l\| P_{\SU^\perp}R_k\r\|_F}{\sigma_r(R_k)} \,|\, \h{\SU}_{k-1},\ev_{k-1},\ev'_{k},\ev''_k\r] +   b^{-C\alpha} \qquad \l(\mbox{\eqref{eq:objective} and }b\ge r\r) \nonumber\\
& \le \sigma_{\min}^{-1} \E \l[ \l\| P_{\SU^\perp}R_k\r\|_F \,|\, \h{\SU}_{k-1},\ev_{k-1},\ev'_{k},\ev''_k\r] +   b^{-C\alpha}. 
\label{eq:exp decomp 1}
\end{align}
We next bound the remaining expectation above by writing that 
\begin{align}
& \E \l[ {\l\| P_{\SU^\perp}R_k\r\|_F} | \h{\SU}_{k-1},\ev_{k-1},\ev'_k,\ev''_k\r]  \nonumber\\
& \le \E\l[{\left\Vert P_{\Omega_k^{C}}(P_{\h{\SU}_{k-1}^{\perp}}S_k)
+ \left[P_{\Omega_k}\circ O \circ P_{\Omega_k} \right](P_{\h{\SU}_{k-1}^{\perp}}S_k )
\right\Vert _{F}+\left\Vert P_{\SU^{\perp}}\cdot \l[ O \circ P_\Omega\r] (P_{\h{\SU}_{k-1}^{\perp}}S_k )\right\Vert _{F}}\, |\, \h{\SU}_{k-1},\ev_{k-1},\ev'_k,\ev''_k\r] 
\qquad \mbox{(see \eqref{eq:2nd decomp})} \nonumber\\
& \le \E\l[ {\left\Vert P_{\Omega_k^{C}}(P_{\h{\SU}_{k-1}^{\perp}}S_k )
+ \left[P_{\Omega_k}\circ O \circ P_{\Omega_k} \right](P_{\h{\SU}_{k-1}^{\perp}}S_k )
\right\Vert _{F}+\frac{p^3nb}{c}\Vert P_{\SU^{\perp}}P_{\h{\SU}_{k-1}}\Vert _{F}\|Q_k\|}\, |\, \h{\SU}_{k-1},\ev_{k-1},\ev'_k,\ev''_k \r] 
\qquad  \mbox{(see (\ref{eq:def of eventprime -1}))} \nonumber\\
&  = \E\l[ {\E\l[  \left\Vert P_{\Omega_k^{C}}(P_{\h{\SU}_{k-1}^{\perp}}S_k )
+ \left[P_{\Omega_k}\circ O \circ P_{\Omega_k} \right](P_{\h{\SU}_{k-1}^{\perp}}S_k )
\right\Vert _{F} |\h{\SU}_{k-1},Q_k \r]+\frac{p^3nb}{c}\Vert P_{\SU^{\perp}}P_{\h{\SU}_{k-1}}\Vert _{F}\|Q_k\|}  | \h{\SU}_{k-1},\ev_{k-1},\ev'_k,\ev''_k\r] \nonumber\\
& \le 
\E\l[ { \sqrt{1-p} \| P_{\SU^\perp} P_{\h{\SU}_{k-1}}\|_F  \|Q_k\| +\frac{p^3nb}{c} \Vert P_{\SU^{\perp}}P_{\h{\SU}_{k-1}}\Vert _{F}\|Q_k\|} | \h{\SU}_{k-1},\ev_{k-1},\ev'_k,\ev''_k \r]
\qquad \mbox{(see \eqref{eq:contraction expectation})}
\nonumber\\
&= \E\l[  \nu\cdot \sigma_{\min}\l(  \sqrt{1-p}+\frac{p^3nb}{c} \r) \Vert P_{\SU^{\perp}}P_{\h{\SU}_{k-1}}\Vert _{F} | \h{\SU}_{k-1},\ev_{k-1},\ev'_k,\ev''_k\r]  
\qquad \mbox{(see \eqref{eq:bnd on Qk})}
\nonumber\\
& = \nu \cdot \sigma_{\min}\l(  \sqrt{1-p }+\frac{p^3nb}{c} \r) \Vert P_{\SU^{\perp}}P_{\h{\SU}_{k-1}}\Vert _{F} \nonumber\\
& \le \nu \cdot \sigma_{\min}\l(  1-\frac{p}{2}+\frac{p^3nb}{c} \r) \Vert P_{\SU^{\perp}}P_{\h{\SU}_{k-1}}\Vert _{F}.
\end{align}
Plugging the bound above back into \eqref{eq:exp decomp 1} yields that
\begin{align}
\E \l[ \Vert P_{\SU^{\perp}}P_{\n}\Vert _{F} | \h{\SU}_{k-1},\ev_{k-1},\ev'_{k} \r]&  \le \sigma_{\min}^{-1}
\E \l[ {\l\| P_{\SU^\perp}R_k\r\|_F} \,|\, \h{\SU}_{k-1},\ev_{k-1},\ev'_k,\ev''_k\r] +   b^{-C\alpha} \qquad \mbox{(see \eqref{eq:exp decomp 1})}\nonumber\\
& = \nu \l(1-\frac{p}{2}+\frac{p^3nb}{c}\r)\Vert P_{\SU^{\perp}}P_{\h{\SU}_{k-1}}\Vert _{F} + b^{-C\alpha},
\end{align}
which proves \eqref{eq:dragon in exp} and completes the proof of Lemma \ref{lem:whp main}.

\section{Proof of Lemma \ref{lem:t3} \label{sec:Proof-of-Lemma t3}}

Throughout, $\o,Q_k$ is fixed. Recalling the definition of operator $O(\cdot)$ from \eqref{eq:def of O recall},
we write that
\begin{align}
\l[ 
P_{\Omega_k} \circ O \circ P_{\Omega_k}
\r] (P_{\h{\SU}^\perp_{k-1}} S_k  ) & 
= \l[
\begin{array}{ccc}
\cdots & 
\l( P_{\omega_j} \h{S}_{k-1} \r) \l(P_{\omega_j} \h{S}_{k-1}\r)
\cdot P_{\omega_j} P_{\h{\SU}^\perp_{k-1}} S q_j
& \cdots 
\end{array}
\r]
\qquad \mbox{(see \eqref{eq:def of O recall})}
\nonumber\\
& = \l[
\begin{array}{ccc}
\cdots & 
\l( P_{\omega_j} \h{S}_{k-1} \r) \l(P_{\omega_j} \h{S}_{k-1} \r)
\cdot  P_{\h{\SU}^\perp_{k-1}} S q_j
& \cdots 
\end{array}
\r] \nonumber\\
& = 
\l[
\begin{array}{ccc}
\cdots & 
P_{\h{\SU}_{k-1,j}} 
\cdot  P_{\h{\SU}^\perp_{k-1}} S q_j
& \cdots 
\end{array}
\r] .
\qquad 
\l( \h{\SU}_{k-1,j} := \mbox{span}(P_{\omega_j} \h{S}_{k-1}) \r)
 \label{eq:brk 9.99}
\end{align}
Let also $\h{\SU}_{k-1,j}^C:=\mbox{span}(P_{\omega_j^C}\h{S}_{k-1})$. Note that $\h{S}_{k-1} = P_{\omega_j}\h{S}_{k-1}+P_{\omega_j^C}\h{S}_{k-1}$ and $(P_{\omega_j}\h{S}_{k-1})^* (P_{\omega_j^C}\h{S}_{k-1})=0$. Consequently, $\h{\SU}_{k-1,j} \perp \h{\SU}_{k-1,j}^C$  and then $P_{\h{\SU}_{k-1}}=P_{\h{\SU}_{k-1,j}}+P_{\h{\SU}_{k-1,j}^C}$. Put differently, $\h{\SU}_{k-1} = \h{\SU}_{k-1,j} \oplus \h{\SU}_{k-1,j}^C$. Using this decomposition, we simplify \eqref{eq:brk 9.99} as 
\begin{align}
\l[ 
P_{\Omega_k} \circ O \circ P_{\Omega_k}
\r] (P_{\h{\SU}^\perp_{k-1}} S_k  ) & 
= 
\l[
\begin{array}{ccc}
\cdots & 
P_{\h{\SU}_{k-1,j}} 
\cdot  P_{\h{\SU}^\perp_{k-1}} S q_j
& \cdots 
\end{array}
\r]  
\qquad \mbox{(see \eqref{eq:brk 9.99})}
\nonumber\\
& = 
\l[
\begin{array}{ccc}
\cdots & 
(P_{\h{\SU}_{k-1}} - P_{\h{\SU}_{k-1,j}^C}) 
\cdot   P_{\h{\SU}^\perp_{k-1}} S  q_j
& \cdots 
\end{array}
\r] 
\qquad \l( \h{\SU}_{k-1} = \h{\SU}_{k-1,j}\oplus \h{\SU}_{k-1,j}^C \r)
 \nonumber\\
& = -\l[
\begin{array}{ccc}
\cdots & 
P_{\h{\SU}_{k-1,j}^C}
\cdot   P_{\h{\SU}^\perp_{k-1}} S q_j
& \cdots 
\end{array}
\r].  
\end{align}
It immediately follows that 
\begin{align}
& P_{\Omega_k^C}(P_{\h{\SU}_{k-1}^\perp}S_k ) + \l[ 
P_{\Omega_k} \circ O \circ P_{\Omega_k}
\r] (P_{\h{\SU}^\perp_{k-1}} S_k  ) \nonumber\\
& = 
\l[
\begin{array}{ccc}
\cdots & 
P_{\omega_j^C}  \cdot P_{\h{\SU}_{k-1}^\perp} S q_j
-P_{\h{\SU}_{k-1,j}^C}
\cdot   P_{\h{\SU}^\perp_{k-1}} Sq_j
& \cdots 
\end{array}
\r] \nonumber\\
& =
\l[
\begin{array}{ccc}
\cdots & 
P_{\omega_j^C}  ( I_n 
-P_{\h{\SU}_{k-1,j}^C}
) P_{\omega_j^C}   \cdot P_{\h{\SU}^\perp_{k-1}} S q_j
& \cdots 
\end{array}
\r], 
\qquad \l( \h{\SU}_{k-1,j}^C = \mbox{span}(P_{\omega_j^C} \h{S}_{k-1}) \r)
\label{eq:columns of matrix of interest}
\end{align}
and consequently 
\begin{align}
& \l\|  P_{\Omega_k^C}(P_{\h{\SU}_{k-1}^\perp}S_k ) + \l[ 
P_{\Omega_k} \circ O \circ P_{\Omega_k}
\r] (P_{\h{\SU}^\perp_{k-1}} S_k  ) \r\|_F^2 \nonumber\\
& = \sum_{j=1}^b 
\l\| 
P_{\omega_j^C}  ( I_n 
-P_{\h{\SU}_{k-1,j}^C}
) P_{\omega_j^C}   \cdot P_{\h{\SU}^\perp_{k-1}} S q_j
\r\|_2^2 
\qquad \mbox{(see \eqref{eq:columns of matrix of interest})} \nonumber\\
& \le 
\sum_{j=1}^b 
\l\| 
 P_{\omega_j^C}   \cdot P_{\h{\SU}^\perp_{k-1}} S q_j
\r\|_2^2  \nonumber\\
& = \l\| P_{\Omega_k^C} (P_{\h{\SU}_{k-1}^\perp} S Q_k ) \r\|_F^2 \nonumber\\
& = \l\| P_{\Omega_k^C} (P_{\h{\SU}_{k-1}^\perp} S_k  ) \r\|_F^2.
\qquad \l(S_k  = SQ_k \r)
\label{eq:pre pre main bnd of 1st aux lemma}
\end{align}
Note that
\begin{align}
& \E \l[ \l\|  P_{\Omega_k^C}(P_{\h{\SU}_{k-1}^\perp}S_k ) + \l[ 
P_{\Omega_k} \circ O \circ P_{\Omega_k} 
\r] (P_{\h{\SU}^\perp_{k-1}} S_k  ) \r\|_F \,|\, \h{\SU}_{k-1},Q_k\r]  \nonumber\\
& \le \sqrt{ \E \l[ \l\|  P_{\Omega_k^C}(P_{\h{\SU}_{k-1}^\perp}S_k ) + \l[ 
P_{\Omega_k} \circ O \circ P_{\Omega_k}
\r] (P_{\SU^\perp_{k-1}} S_k  ) \r\|_F^2 \,|\, \h{\SU}_{k-1},Q_k\r]} 
\qquad \mbox{(Jensen's inequality)}
 \nonumber\\
& \le \sqrt{\E \l[ \l\| P_{\Omega_k^C} (P_{\h{\SU}_{k-1}^\perp} S_k  ) \r\|_F^2\, |\, \h{\SU}_{k-1},Q_k \r]} 
\qquad \mbox{(see \eqref{eq:pre pre main bnd of 1st aux lemma})}
\nonumber\\
& = \sqrt{\E\l[ \sum_{i,j} (1-\epsilon_{i,j}) \l| (P_{\h{\SU}_{k-1}^\perp} S_k )[i,j] \r|^2 \,|\, \h{\SU}_{k-1},Q_k\r]} \nonumber\\
& = \sqrt{1-p} \| P_{\h{\SU}_{k-1}^\perp} S_k \|_F \nonumber\\
& \le \sqrt{1-p} \| P_{\h{\SU}_{k-1}^\perp} S \|_F  \|Q_k\| 
\qquad \l( S_k  = SQ_k^*, \qquad \|AB\|_F \le \|A\|_F\cdot  \|B\|\r)
\label{eq:expectation Hoeffding}
\end{align} 
where $\{\epsilon_{i,j}\}_{i,j}\overset{\operatorname{ind.}}{\sim} \mbox{Bernoulli}(p)$. We thus proved the first claim in Lemma \ref{lem:t3}, namely \eqref{eq:contraction expectation}.
Then note that 
\begin{align}
& \l\|  P_{\Omega_k^C}(P_{\h{\SU}_{k-1}^\perp}S_k ) + \l[ 
P_{\Omega_k} \circ O \circ P_{\Omega_k}
\r] (P_{\SU^\perp_{k-1}} S_k  ) \r\|_F^2 \nonumber\\
& \le  \l\| P_{\Omega_k^C} (P_{\h{\SU}_{k-1}^\perp} S_k  ) \r\|_F^2
\qquad \mbox{(see \eqref{eq:pre pre main bnd of 1st aux lemma})}
\nonumber\\
& \le\| P_{\h{\SU}_{k-1}^\perp} S_k \|_F^2 \nonumber\\
& \le \| P_{\h{\SU}_{k-1}^\perp} S\|_F^2 \|Q_k\|^2,
\qquad \l( S_k  = SQ_k \r)
\label{eq:pre main bnd of 1st aux lemma}
\end{align}
which proves the second claim, namely \eqref{eq:contraction}. In fact, with some probability, a  stronger bound can be derived by controlling the deviation from the expectation in \eqref{eq:expectation Hoeffding} using the Hoeffding inequality (Lemma \ref{lem:Bernie for fro}). With $\alpha = \frac{p}{2}\|P_{\h{\SU}_{k-1}^\perp} S_k \|_F^2$ in  Lemma \ref{lem:Bernie for fro} and recalling that $\o,Q_k$ are fixed for now, we find that 
\begin{align}
\l\| P_{\Omega_k^C} (P_{\h{\SU}_{k-1}^\perp}S_k ) \r\|_F^2 & \le \E \l[\l\| P_{\Omega_k^C} (P_{\h{\SU}_{k-1}^\perp}S_k ) \r\|_F^2 \,|\, \h{\SU}_{k-1},Q_k\r]+ \alpha \nonumber\\
& = (1-p/2) \| P_{\h{\SU}_{k-1}^\perp}S_k  \|_F^2 
\qquad \mbox{(see \eqref{eq:expectation Hoeffding})}
\nonumber\\
& = (1-p/2) \| P_{\h{\SU}_{k-1}^\perp}S Q_k \|_F^2 
\qquad \l( S_k  = SQ_k\r)
\nonumber\\
& \le (1-p/2) \| P_{\h{\SU}_{k-1}^\perp}S \|_F^2  \l\| Q_k \r\|^2, 
\label{eq:Hoeff applied}
\end{align}
except with a probability of at most 
\begin{align}
\exp\l({-\frac{\Cr{fail}p^2  \|P_{\h{\SU}_{k-1}^\perp} S_k \|_F^4}{\sum_{i,j} | ( P_{\h{\SU}_{k-1}^\perp} S_k )[i,j] |^4  }}\r) &  \le \exp\l( {-\frac{\Cr{fail}p^2  \|P_{\h{\SU}_{k-1}^\perp} S_k \|_F^4}{ \| P_{\h{\SU}_{k-1}^\perp} S_k \|_{\infty}^2\cdot \| P_{\h{\SU}_{k-1}^\perp} S_k \|_{F}^2 } }\r) 
\nonumber\\
& = \exp\l( {-\frac{\Cr{fail}p^2  \|P_{\h{\SU}_{k-1}^\perp} S_k \|_F^2}{ \| P_{\h{\SU}_{k-1}^\perp} S_k \|_{\infty}^2 }}\r) \nonumber\\
& =: \exp\l(\frac{-\Cr{fail}p^2nb}{\wt{\eta}(P_{\h{\SU}_{k-1}^\perp}S_k )}\r),
\end{align}
where $\|A\|_{\infty}$ returns the largest entry of $A$ in magnitude. 
Substituting \eqref{eq:Hoeff applied} back into \eqref{eq:pre main bnd of 1st aux lemma} yields that 
\begin{align}
& \l\|  P_{\Omega_k^C}(P_{\h{\SU}_{k-1}^\perp}S_k ) + \l[ 
P_{\Omega_k} \circ O \circ P_{\Omega_k}
\r] (P_{\h{\SU}^\perp_{k-1}} S_k  ) \r\|_F
\nonumber\\
& 
\le \l\| P_{\Omega_k^C} (P_{\h{\SU}_{k-1}^\perp} S_k ) \r\|_F
\qquad \mbox{(see \eqref{eq:pre main bnd of 1st aux lemma})} \nonumber\\
&  \le  \sqrt{1-p/2} \cdot \Vert P_{\h{\SU}_{k-1}^{\perp}}S\Vert _{F}\|Q_k\|,
\end{align}
which proves the last claim in Lemma \ref{lem:t3}, namely \eqref{eq:strong contraction}.

\section{Proof of Lemma \ref{lem:t2}\label{sec:Proof-of-Lemma t2}}

We fix $\o,Q_k$ throughout. We begin by bounding the target quantity as 
\begin{align}
\left\Vert P_{\SU^{\perp}}\cdot \l[ O \circ P_{\Omega_k}  \r](P_{\h{\SU}_{k-1}^{\perp}}S_k )\right\Vert _{F} & =\left\Vert P_{\SU^{\perp}}P_{\h{\SU}_{k-1}}\cdot \l[ O\circ P_{\Omega_k} \r](P_{\h{\SU}_{k-1}^{\perp}}S_k )\right\Vert _{F}\qquad\mbox{(see (\ref{eq:def of O recall}))}\nonumber \\
 & \le\Vert P_{\SU^{\perp}}P_{\h{\SU}_{k-1}}\Vert \cdot\left\Vert \l[O \circ P_{\Omega_k} \r](P_{\h{\SU}_{k-1}^{\perp}}S_k )\right\Vert _{F}\qquad\left(\|AB\|_{F}\le\|A\|\cdot\|B\|_{F}\right).\label{eq:brk .39}
\end{align}
We bound the random norm in the last line above in Appendix \ref{sec:Proof-of-Lemma O}.
\begin{lem}
\label{lem:O norm}For $\alpha\ge1$ and except with a probability of
at most $b^{-C\alpha}$, it holds that
\begin{equation}
\left\Vert \l[ O\circ P_{\Omega_k} \r](P_{\h{\SU}_{k-1}^{\perp}}S_k )\right\Vert _{F}\lesssim { \alpha\log b \sqrt{\frac{{\log n}}{{p}}}} \Vert P_{\SU^{\perp}}P_{\h{\SU}_{k-1}}\Vert _{F}\cdot\|Q_k\|,
\label{eq:O norm whp}
\end{equation}
provided that $p\gtrsim\alpha^{2}\log^{2}b \log n\cdot\eta(\h{\SU}_{k-1})r/n$. 
\end{lem}
In light of the above lemma, we conclude
that
\begin{align}
\left\Vert P_{\SU^{\perp}}\cdot \l[O \circ P_{\Omega_k} \r](P_{\h{\SU}_{k-1}^{\perp}}S_k )\right\Vert _{F}
& \le\Vert P_{\SU^{\perp}}P_{\h{\SU}_{k-1}}\Vert \cdot\left\Vert \l[ O \circ P_{\Omega_k} \r](P_{\h{\SU}_{k-1}^{\perp}}S_k )\right\Vert _{F}
\qquad \mbox{(see \eqref{eq:brk .39})}
\nonumber\\
& \lesssim {\alpha\log b \sqrt{\frac{ \log n}{p}}}\Vert P_{\SU^{\perp}}P_{\h{\SU}_{k-1}}\Vert \cdot \Vert P_{\SU^{\perp}}P_{\h{\SU}_{k-1}}\Vert _{F}\cdot\|Q_k\|,
\qquad \mbox{(see \eqref{eq:O norm whp})} 
\label{eq:high prob bnd }
\end{align}
except with a probability of at most $b^{-C\alpha}$ and provided
that $p\gtrsim\alpha^2 \log^2 b \log n \cdot\eta(\h{\SU}_{k-1})r/n$. 
This completes the proof of Lemma \ref{lem:t2}.

\section{Proof of Lemma \ref{lem:O norm}\label{sec:Proof-of-Lemma O}}

Using the definition of operator $O$ in \eqref{eq:def of O recall}, we write that
\begin{align}
\left\Vert \l[O \circ P_{\Omega_k}\r] (P_{\h{\SU}_{k-1}^{\perp}}S_k )\right\Vert _{F}^{2} &
 =\sum_{j=1}^{b}\left\Vert \h{S}_{k-1}(P_{\omega_{j}}\h{S}_{k-1})^{\dagger}P_{\omega_j} \cdot P_{\h{\SU}_{k-1}^{\perp}}Sq_j \right\Vert _{2}^{2}
 \qquad\left(\mbox{(\ref{eq:def of O recall}) and }S_k =SQ_k\right)\nonumber \\
 & =\sum_{j=1}^{b}\left\Vert \h{S}_{k-1}(P_{\omega_{j}}\h{S}_{k-1})^{\dagger} P_{\h{\SU}_{k-1}^{\perp}}Sq_j\right\Vert _{2}^{2}
 \nonumber \\
 & =\sum_{j=1}^{b}\left\Vert (P_{\omega_{j}}\h{S}_{k-1})^{\dagger}P_{\h{\SU}_{k-1}^{\perp}}Sq_j \right\Vert _{2}^{2}.\qquad\left(\h{S}_{k-1}^{*}\h{S}_{k-1}=I_{r}\right)\label{eq:brk .4}
\end{align}
For fixed $j\in [1:b]$, consider the summand in the last line above:
\begin{align}
\left\Vert (P_{\omega_{j}}\h{S}_{k-1})^{\dagger} P_{\h{\SU}_{k-1}^{\perp}}S\cdot q_j \right\Vert _{2} 
& \le\left\Vert (P_{\omega_{j}}\h{S}_{k-1})^{\dagger}P_{\h{\SU}_{k-1}^{\perp}}\right\Vert \cdot \Vert P_{\h{\SU}_{k-1}^{\perp}}Sq_j\Vert _{2}\nonumber \\
 & =\left\Vert (P_{\omega_{j}}\h{S}_{k-1})^{\dagger}\h{S}_{k-1}^{\perp}\right\Vert \cdot\Vert P_{\h{\SU}_{k-1}^{\perp}}Sq_j\Vert _{2}\qquad\left(P_{\h{\SU}_{k-1}^{\perp}}= \h{S}_{k-1}^{\perp} (\h{S}_{k-1}^{\perp})^* \right) \nonumber \\
 & =:\Vert \widehat{Z}_{j} \Vert \cdot \Vert P_{\h{\SU}_{k-1}^{\perp}}S q_j\Vert _{2}.\label{eq:brk .5}
\end{align}
Above,  $\h{S}_{k-1}^\perp$ is as usual an orthonormal basis for the subspace $\h{\SU}_{k-1}^\perp$. We can now revisit (\ref{eq:brk .4}) and write that
\begin{align}
\left\Vert O (P_{\h{\SU}_{k-1}^{\perp}}S_k )\right\Vert _{F}^{2} & =\sum_{j=1}^{b}\left\Vert (P_{\omega_{j}}\h{S}_{k-1} )^{\dagger} P_{\h{\SU}_{k-1}^{\perp}}Sq_j )\right\Vert _{2}^{2}\qquad\mbox{(see (\ref{eq:brk .4}))}\nonumber \\
 & \le\max_{j}\Vert \widehat{Z}_{j}\Vert ^{2}\cdot\sum_{j=1}^{b}\Vert P_{\h{\SU}_{k-1}^{\perp}}Sq_j\Vert _{2}^{2}\qquad\mbox{(see (\ref{eq:brk .5}))}\nonumber \\
 & =\max_{j}\Vert \widehat{Z}_{j}\Vert ^{2}\cdot \Vert P_{\h{\SU}_{k-1}^{\perp}}SQ_k \Vert _{F}^{2}\nonumber \\
 & \le\max_{j}\Vert \widehat{Z}_{j}\Vert ^{2}\cdot \Vert P_{\h{\SU}_{k-1}^{\perp}}S \Vert _{F}^{2}\|Q_k\|^{2}.\qquad\left(\|AB\|_{F}\le\|A\|_{F}\cdot\|B\|\right)\label{eq:brk .51}
\end{align}
It remains to control the maximum in the last line above. We first focus on controlling $\|\widehat{Z}_{j}\|$ for
fixed $j\in[1:b]$. Observe
that $\widehat{Z}_{j}$ is a solution of the least-squares problem
\[
\min_{Z\in\mathbb{R}^{n\times(n-r)}}\left\Vert \h{S}_{k-1}^{\perp}-(P_{\omega_{j}}\h{S}_{k-1})Z\right\Vert _{F}^{2},
\]
and therefore satisfies the \emph{normal equation}
\[
(P_{\omega_{j}}\h{S}_{k-1})^{*}\left((P_{\omega_{j}}\h{S}_{k-1})\widehat{Z}_{j}-\h{S}_{k-1}^{\perp}\right)=0,
\]
which is itself equivalent to
\begin{equation}
(\h{S}_{k-1}^{*}P_{\omega_{j}}\h{S}_{k-1})\widehat{Z}_{j}=\h{S}_{k-1}^{*}P_{\omega_{j}}\h{S}_{k-1}^{\perp}.\qquad\left(P_{\omega_{j}}^{2}=P_{\omega_{j}}\right)\label{eq:norma}
\end{equation}
In fact, since
\[
\mathbb{E}\left[\h{S}_{k-1}^{*}P_{\omega_{j}}\h{S}_{k-1}^{\perp}\right]=p\cdot \h{S}_{k-1}^{*}\h{S}_{k-1}^{\perp}=0,
\]
\[
\mathbb{E}\left[\h{S}_{k-1}^{*}P_{\omega_{j}}\h{S}_{k-1}\right]=p\cdot I_{r},
\qquad \l( \h{S}_{k-1}^* \h{S}_{k-1} = I_r \r)
\]
we can rewrite (\ref{eq:norma}) as
\[
\left(\h{S}_{k-1}^{*}P_{\omega_{j}}\h{S}_{k-1}-\mathbb{E}\left[\h{S}_{k-1}^{*}P_{\omega_{j}}\h{S}_{k-1}\right]\right)\widehat{Z}_{j}+p\cdot\widehat{Z}_{j}=\h{S}_{k-1}^{T}P_{\omega_{j}}\h{S}_{k-1}^{\perp}-\mathbb{E}\left[\h{S}_{k-1}^{T}P_{\omega_{j}}\h{S}_{k-1}^{\perp}\right].
\]
An application of the triangle inequality above immediately implies that
\begin{align}
p \Vert \widehat{Z}_{j} \Vert  & \le\left\Vert \h{S}_{k-1}^{*}P_{\omega_{j}}\h{S}_{k-1}-\mathbb{E}\left[\h{S}_{k-1}^{*}P_{\omega_{j}}\h{S}_{k-1}\right]\right\Vert \cdot\Vert \widehat{Z}_{j}\Vert +\left\Vert \h{S}_{k-1}^{*}P_{\omega_{j}}\h{S}_{k-1}^{\perp}-\mathbb{E}\left[\h{S}_{k-1}^{*}P_{\omega_{j}}\h{S}_{k-1}^{\perp}\right]\right\Vert .\label{eq:key}
\end{align}
To control $\|\widehat{Z}_{j}\|$, we therefore need to derive large
devation bounds for the two remaining norms on the right-hand side
above. For the first spectral norm, we write that
\begin{equation}
\left\Vert \h{S}_{k-1}^{*}P_{\omega_{j}}\h{S}_{k-1}-\mathbb{E}\left[\h{S}_{k-1}^{*}P_{\omega_{j}}\h{S}_{k-1}\right]\right\Vert =\left\Vert \sum_{i}\l(\epsilon_i-p \r)\cdot \h{S}_{k-1}^{*}E_{i,i}\h{S}_{k-1}\right\Vert =:\left\Vert \sum_{i}A_{i}\right\Vert ,
\label{eq:def of As}
\end{equation}
where $\{\epsilon_i\}_i \overset{\operatorname{ind.}}{\sim} \mbox{Bernoulli}(p)$ and  $E_{i,i}\in\mathbb{R}^{n\times n}$ is the $[i,i]$th canonical
matrix. Furthermore, $\{A_{i}\}_i\subset\mathbb{R}^{r\times r}$ above are independent
and zero-mean random matrices. To apply the Bernstein inequality (Lemma \ref{lem:Bernie for spec}), we first compute the parameter $\beta$ as
\begin{align}
\left\Vert A_{i}\right\Vert  & =\left\Vert \l(\epsilon_i-p \r)\cdot S_{k-1}^*E_{i,i}S_{k-1} \right\Vert
\qquad \mbox{(see \eqref{eq:def of As})}
\nonumber\\
 & \le \Vert \h{S}_{k-1}^{*}E_{i,i}\h{S}_{k-1}\Vert  \qquad
\l( \epsilon_{i}\in \{0,1\} \r)
 \nonumber\\
 & = \Vert \h{S}_{k-1}[i,:]\Vert _{2}^{2}
 \nonumber\\
& \le \frac{\eta(\h{\SU}_{k-1})r}{n}=:\beta.\qquad \qquad \mbox{(see \eqref{eq:def of coh})}
\label{eq:beta for Ais}
\end{align}
To compute the weak variance $\sigma$, we write that
\begin{align}
\l\| \E\l[  \sum_{i} A_i^2 \r]  \r\|
&
= \l\| \sum_i \mathbb{E}\l[\l(\epsilon_i-p\r)^2 \r]  ( \h{S}_{k-1}^* E_{i,i}\h{S}_{k-1})^2 \r\|
\qquad \mbox{(see \eqref{eq:def of As})}
\nonumber\\
& = \l\|  \sum_i p(1-p) ( \h{S}_{k-1}^* E_{i,i} \h{S}_{k-1})^2  \r\|
\qquad \l(\epsilon_i \sim \mbox{Bernoulli}(p)  \r) \nonumber\\
& \le p \l\|\sum_i (\h{S}_{k-1}^* E_{i,i}\h{S}_{k-1})^2  \r\| \nonumber\\
& = p \l\| \sum_i \h{S}_{k-1}^* E_{i,i} \h{S}_{k-1}\h{S}_{k-1}^* E_{i,i} \h{S}_{k-1} \r\| \nonumber\\
& = p \l\| \sum_i \|\h{S}_{k-1}[i,:] \|_2^2 \cdot \h{S}_{k-1}^* E_{i,i} \h{S}_{k-1} \r\|
\nonumber\\
& \le p \cdot \max_i \| \h{S}_{k-1}[i,:] \|_2^2 \cdot \l\| \sum_i \h{S}_{k-1}^* E_{i,i} \h{S}_{k-1}  \r\|  \nonumber\\
& = p \l\| \sum_i \h{S}_{k-1}^* E_{i,i}   \h{S}_{k-1} \r\|
\nonumber\\
& = p \cdot \frac{\eta( \h{\SU}_{k-1})r}{n} \cdot \l\| \sum_i \h{S}_{k-1}^* E_{i,i} \h{S}_{k-1}\r\|
\qquad \mbox{(see \eqref{eq:def of coh})}
\nonumber\\
& = p \cdot \frac{\eta( \h{\SU}_{k-1})r}{n}
\qquad \l( \sum_i E_{i,i}=I_n,\,\, \h{S}_{k-1}^* \h{S}_{k-1} = I_r \r)
\nonumber\\
& =: \sigma^2.
 \label{eq:sigma for Ais}
\end{align}
It also follows that
\begin{align}
\max\left(\log r\cdot \beta,\sqrt{\log r}\cdot\sigma\right)&
=\max\left(\frac{\log r\cdot\eta(\h{\SU}_{k-1})r}{n},\sqrt{\frac{{\log r\cdot p\cdot \eta(\h{\SU}_{k-1})r}}{n}}\right)
\qquad \mbox{(see \eqref{eq:beta for Ais} and  \eqref{eq:sigma for Ais})} \nonumber\\
&
\le\sqrt{\frac{{\log r\cdot p \cdot \eta(\h{\SU}_{k-1})r}}{n}}.\qquad\left(\mbox{if }p \ge\frac{\log r \cdot \eta(\h{\SU}_{k-1})r}{n}\right)
\label{eq:max b sigma Ais}
\end{align}
As a result, for $\alpha\ge1$ and except with a probability of at
most $e^{-C\alpha}$, it holds that
\begin{align}
\left\Vert \h{S}_{k-1}^{*}P_{\omega_{j}}\h{S}_{k-1}-\mathbb{E}\left[\h{S}_{k-1}^{*}P_{\omega_{j}}\h{S}_{k-1}\right]\right\Vert & \lesssim\alpha\max\left(\log r\cdot \beta,\sqrt{\log r}\cdot\sigma\right)
\qquad \mbox{(see Lemma \ref{lem:Bernie for spec})}
\nonumber\\
& \le \alpha \sqrt{\frac{{\log r \cdot p \cdot \eta(\h{\SU}_{k-1})r}}{n}}.\label{eq:large dev 1}
\end{align}
On the other hand, in order to apply the Bernstein inequality to the
second spectral norm in (\ref{eq:key}), we write that
\begin{equation}
\left\Vert \h{S}_{k-1}^{*}P_{\omega_{j}}\h{S}_{k-1}^{\perp}-\mathbb{E}\left[\h{S}_{k-1}^{*}P_{\omega_{j}}\h{S}_{k-1}^{\perp}\right]\right\Vert  =\left\Vert \sum_{i} \l(\epsilon_i-p \r) \h{S}_{k-1}^{*}E_{i,i}\h{S}_{k-1}^{\perp}\right\Vert =:\left\Vert \sum_{i}A_{i}\right\Vert ,
\label{eq:def of other As}
\end{equation}
where $\{\epsilon_i \}_i\overset{\operatorname{ind.}}{\sim} \mbox{Bernoulli}(p)$, $E_{i,i}\in \mathbb{R}^{n\times n}$ is the $i$th  canonical  matrix, and  $\{A_{i}\}_i\subset\mathbb{R}^{r\times(n-r)}$ are zero-mean and
independent random matrices. To compute the parameter $\beta$ here, we write that
\begin{align}
\left\Vert A_{i}\right\Vert  & =\left\Vert
\l(\epsilon_i-p  \r)
 \h{S}_{k-1}^{*}E_{i,i}\h{S}_{k-1}^{\perp}\right\Vert
\qquad \mbox{(see \eqref{eq:def of other As})}
  \nonumber \\
  & \le \|  \h{S}_{k-1}^* E_{i,i} \h{S}_{k-1}^\perp \| \qquad
  \l( \epsilon_i \in\{0,1\} \r) \nonumber\\
 & \le\Vert \h{S}_{k-1}^{*}E_{i,i}\Vert
\qquad \l( ( \h{S}_{k-1}^\perp)^* \h{S}_{k-1}^\perp = I_{n-r} \r)
\nonumber\\
 & =\Vert \h{S}_{k-1}[i,:]\Vert _{2}\nonumber\\
 & \le\sqrt{\frac{\eta(\h{\SU}_{k-1})r}{n}}=:\beta.
 \qquad \mbox{(see \eqref{eq:def of coh})}
\label{eq:b Ais 2}
\end{align}
To compute the weak variance $\sigma$, we notice that
\begin{align}
\l\| \E\l[ \sum_{i} A_i A_i^* \r] \r\| &
= \l\|   \sum_i \mathbb{E}\l[ \l( \epsilon_i-p\r)^2 \r] \h{S}_{k-1}^* E_{i,i} \h{S}_{k-1}^\perp ( \h{S}_{k-1}^\perp )^* E_{i,i} \h{S}_{k-1} \r\|
\qquad \mbox{(see \eqref{eq:def of other As})}
\nonumber\\
& = \l\| \sum_i p(1-p) \cdot \h{S}_{k-1}^* E_{i,i} \h{S}_{k-1}^\perp ( \h{S}_{k-1}^\perp )^* E_{i,i} \h{S}_{k-1} \r\|
\qquad \l( \epsilon_i \sim \mbox{Benoulli}(p) \r) \nonumber\\
& \le p \l\| \sum_i \h{S}_{k-1}^* E_{i,i} \h{S}_{k-1}^\perp ( \h{S}_{k-1}^\perp )^* E_{i,i} \h{S}_{k-1}  \r\| \nonumber\\
& \le p
\l\| \sum_i \h{S}_{k-1}^* E_{i,i}  E_{i,i} \h{S}_{k-1}  \r\|
\qquad \l(  \h{S}_{k-1}^\perp ( \h{S}_{k-1}^\perp )^* \preceq I_n  \r)
\nonumber\\
& = p \l\| \sum_i \h{S}_{k-1}^* E_{i,i}  \h{S}_{k-1}  \r\|  \nonumber\\
& = p. \qquad \l( \sum_i E_{i,i}=I_n,\,\, \h{S}_{k-1}^* \h{S}_{k-1} = I_r \r)
 \label{eq:pre sigma Ais leg 1}
\end{align}
In a similar fashion, we find that
\begin{align}
\left\Vert \mathbb{E}\l[\sum_{i}A_{i}^{*}A_{i}\r]\right\Vert
& \le p \l\|\sum_i (\h{S}_{k-1}^\perp )^* E_{i,i} \h{S}_{k-1} \h{S}_{k-1}^* E_{i,i} \h{S}_{k-1}^\perp   \r\|
\nonumber\\
&= p \l\| \sum_i \| \h{S}_{k-1}[i,:]\|_2^2 \cdot   (\h{S}_{k-1}^\perp )^* E_{i,i} \h{S}_{k-1}^\perp \r\| \nonumber\\
& \le p \cdot \max_i \|\h{S}_{k-1}[i,:]  \|_2^2 \cdot \l \| \sum_i (\h{S}_{k-1}^\perp )^* E_{i,i} \h{S}_{k-1}^\perp \r\| \nonumber\\
& = p \cdot \max_i \|\h{S}_{k-1}[i,:]  \|_2^2
\qquad \l( \sum_i E_{i,i}=I_n,\,\, (\h{S}_{k-1}^\perp )^* \h{S}_{k-1}^\perp = I_{n-r} \r) \nonumber\\
& = p \cdot \frac{\eta(\h{\SU}_{k-1})r}{n},
\qquad \mbox{(see \eqref{eq:def of coh})}
 \label{eq:pre sigma Ais leg 2}
\end{align}
and finally
\begin{align}
\sigma
& =
\max\l( \l\| \E \sum_{i} A_i A_i^*\r\|,
\l\| \E \sum_{i} A_i^*A_i  \r\|
   \r)
\nonumber\\
& =\max\left(\sqrt{p},\sqrt{p}\cdot\sqrt{\frac{\eta(\h{\SU}_{k-1})r}{n}}\right)
\qquad \mbox{(see \eqref{eq:pre sigma Ais leg 1} and \eqref{eq:pre sigma Ais leg 2})}
\nonumber\\
&
=\sqrt{p}.\qquad\left(\eta(\h{\SU}_{k-1})\le\frac{n}{r}\right)
\label{eq:sigma Ais 2}
\end{align}
We now compute
\begin{align}
\max\left(\log n\cdot \beta,\sqrt{\log n}\cdot\sigma\right)
& =\max\left(\log n\sqrt{\frac{\eta(\h{\SU}_{k-1}}{n}},\sqrt{{\log n\cdot p}}\right)
\qquad \mbox{(see \eqref{eq:b Ais 2} and  \eqref{eq:sigma Ais 2})}
\nonumber\\
& =\sqrt{{\log n\cdot p}}.\qquad\left(\mbox{if }p\ge \frac{\log n\cdot\eta(\h{\SU}_{k-1})r}{n}\right)
\label{eq:other m cnd}
\end{align}
Therefore, for $\alpha\ge1$ and except with a probability of at most
$e^{-C\alpha}$, it holds that
\begin{align}
\left\Vert \h{S}_{k-1}^{*}P_{\omega_{j}}\h{S}_{k-1}^{\perp}-\mathbb{E}\left[\h{S}_{k-1}^{*}P_{\omega_{j}}\h{S}_{k-1}^{\perp}\right]\right\Vert &
\lesssim \alpha\max\left(\log n\cdot \beta,\sqrt{\log n}\cdot\sigma\right)
\qquad \mbox{(see Lemma \ref{lem:Bernie for spec})}
\nonumber\\
& =\alpha\sqrt{{\log n\cdot p}}.\label{eq:large dev 2}
\end{align}
Overall, by substituting the large deviation bounds (\ref{eq:large dev 1})
and (\ref{eq:large dev 2}) into (\ref{eq:key}), we find that
\begin{align*}
p \Vert \widehat{Z}_{j} \Vert  & \le\left\Vert \h{S}_{k-1}^{*}P_{\omega_{j}}\h{S}_{k-1}-\mathbb{E}\left[\h{S}_{k-1}^{*}P_{\omega_{j}}\h{S}_{k-1}\right]\right\Vert \cdot \Vert \widehat{Z}_{j}\Vert +\left\Vert \h{S}_{k-1}^{*}P_{\omega_{j}}\h{S}_{k-1}^{\perp}-\mathbb{E}\left[\h{S}_{k-1}^{*}P_{\omega_{j}}\h{S}_{k-1}^{\perp}\right]\right\Vert
\qquad \mbox{(see \eqref{eq:key})}
\\
 & \lesssim \alpha \sqrt{\frac{{\log r\cdot p \cdot \eta(\h{\SU}_{k-1})r}}{n}}\cdot \Vert \widehat{Z}_{j} \Vert   + \alpha\sqrt{{\log n\cdot p}},
 \qquad \mbox{(see \eqref{eq:large dev 1} and \eqref{eq:large dev 2})}
\end{align*}
except with a probability of at most $e^{-C \alpha}$ and under \eqref{eq:max b sigma Ais} and \eqref{eq:other m cnd}. It immediately follows that
\begin{align}
 \Vert \widehat{Z}_{j} \Vert
& \lesssim\frac{\alpha\sqrt{\frac{\log n}{ p}}}{1-\sqrt{\frac{\alpha^{2}\log r\cdot\eta(\h{\SU}_{k-1})r}{pn}}}
\qquad \mbox{(see the next line)}
\nonumber\\
& \lesssim \alpha\sqrt{\frac{\log n}{p}},\qquad\left(\mbox{if }\frac{\alpha^{2}\log r\cdot\eta(\h{\SU}_{k-1})r}{pn}\lesssim 1\right)
\end{align}
except with a probability of at most $e^{-C\alpha}$.
 In light of \eqref{eq:max b sigma Ais} and \eqref{eq:other m cnd}, we assume that $p \gtrsim \alpha^2 \log n \cdot \eta(\h{\SU}_{k-1}) r/n$.
Then using
the union bound and with the choice of $\alpha=\alpha'\log b$, it
follows that
\[
\max_{j\in [1:b]} \Vert \widehat{Z}_{j} \Vert \lesssim\alpha'\log b\sqrt{\frac{\log n }{p}},
\]
provided that $p\gtrsim \alpha'^{2}\log^{2}b\cdot\log n\cdot\eta(\h{\SU}_{k-1})r/n$
and except with a probability of at most $be^{-C\alpha'\log b}=b^{-C\alpha'}$. Invoking (\ref{eq:brk .51}), we finally conclude that
\begin{align*}
\left\Vert O(P_{\h{\SU}_{k-1}^{\perp}}S_k )\right\Vert _{F} &
\le \max_j \| \widehat{Z}_j \| \cdot
\| P_{\h{\SU}_{k-1}^\perp} P_{\SU} \|_F \|Q_k\|
\qquad \mbox{(see \eqref{eq:brk .51})}
\nonumber\\
& \lesssim \alpha'\log b\sqrt{\frac{\log n}{p}}\cdot\Vert P_{\h{\SU}_{k-1}^{\perp}}P_{\SU}\Vert _{F}\|Q_k\|.
\end{align*}
A bound in expectation also easily follows: Let $\delta$ denote the factor of $\delta'$ in last line above. Then we have that 
\begin{align}
\E \l\| P_{\SU^\perp} \cdot \l[ O\circ P_{\Omega_k} \r] (P_{\h{\SU}_{k-1}^\perp} S_k )\r \|_F & = \delta
\int_{0}^{\infty} \Pr\l[ \l\| P_{\SU^\perp} \cdot \l[ O\circ P_\Omega \r] (P_{\h{\SU}_{k-1}^\perp} S_k )\r \|_F > \alpha'  \delta \r]\, d\alpha' \nonumber\\
& \le \delta \l( 1+\int_{1}^{\infty}  
\Pr\l[ \l\| P_{\SU^\perp} \cdot \l[ O\circ P_\Omega \r] (P_{\h{\SU}_{k-1}^\perp} S_k )\r \|_F > \alpha' \delta   \r] \, d\alpha' 
 \r)
\nonumber\\
 & \le \delta \l( 1+ \int_1^{\infty} b^{-\alpha' } \, d\alpha'\r)   \nonumber\\
 & \le 2\delta \nonumber\\
 & = 2 \log b\sqrt{\frac{\log n}{p}}\cdot\Vert P_{\h{\SU}_{k-1}^{\perp}}P_{\SU}\Vert _{F}\|Q_k\|. 
\end{align}
This completes the proof of Lemma \ref{lem:O norm}.

\section{Properties of a Standard Random Gaussian Matrix}
\label{sec:cnd coh of Gaussian}

As a supplement to 
Remark 
\ref{rem:largeFirstBlock}, we show here that a standard  random Gaussian matrix $G\in\mathbb{R}^{b\times r}$ is well-conditioned and incoherent when $b$ is sufficiently large. 
 From \cite[Corollary 5.35]{vershynin2010introduction} and for fixed $\alpha \ge 1$, recall that
\begin{align}
\label{eq:norms of Gaussian}
\sqrt{b}- \Cl{Gaussian} \alpha \sqrt{r} \le \sigma_{r}(G)\le \sigma_{1}(G)\le \sqrt{b} + \Cr{Gaussian} \alpha \sqrt{r},
\end{align}
except with a probability of at most $e^{-\alpha^2 r}$. 
It follows that
\begin{align}
\nu(G) = \frac{\sigma_{1}(G)}{\sigma_{r}(G)}
\le \frac{\sqrt{b}+\Cr{Gaussian}\alpha \sqrt{r}}{\sqrt{b}-\Cr{Gaussian}\alpha \sqrt{r}},
\end{align}
which can be made close to one by choosing $b\gtrsim 
\alpha^2 r$.

For the coherence, note that $G(G^*G)^{-\frac{1}{2}}\in\mathbb{R}^{b\times r}$ is an orthonormal basis for $\mbox{span}(G)$. Using the definition of coherence in \eqref{eq:def of coh}, we then write that 
\begin{align}
\eta\l( \mbox{span}(G) \r) & = \frac{b}{r} \max_{i\in[1:b] }
\l\| G[i,:] \l(G^*G \r)^{-\frac{1}{2}}\r\|_2^2
\qquad \mbox{(see \eqref{eq:def of coh})} \nonumber\\
& \le \frac{b}{r} \max_i \l\| G[i,:]\r\|_2^2 \cdot 
\|\l(G^*G\r)^{-\frac{1}{2}} \|^2 \nonumber\\
& = \frac{b}{r} \max_i \l\| G[i,:]\r\|_2^2 \cdot \l( \sigma_{r}(G)\r)^{-2} \nonumber\\
& \le \frac{b}{r}\max_i \l\| G[i,:]\r\|_2^2 \cdot \l(\sqrt{b}-\Cr{Gaussian}\alpha \sqrt{r}\r)^{-2}
\qquad \mbox{(see \eqref{eq:norms of Gaussian})} \nonumber\\
& \le \frac{b}{r} \max_i \l\| G[i,:]\r\|_2^2 \cdot \l(\frac{b}{2}-\Cr{Gaussian}^2\alpha^2 r \r)^{-1}
\qquad \l( (a-b)^2 \ge \frac{a^2}{2}-b^2\r) \nonumber\\
& \lesssim  \frac{b}{r} \max_i \l\| G[i,:]\r\|_2^2 \cdot \l(b-\Cl{denom} \alpha^2 r \r)^{-1},
\label{eq:eta Gaussian 1}
\end{align}
except with a probability of at most $e^{-\alpha^2 r}$. For fixed $i$, $\|G[i,:]\|_2^2$ is a chi-squared random variable  with $r$ degrees of freedom so that
\begin{align}
\Pr\l[\l\| G[i,:]\r\|_2^2 \gtrsim \beta \cdot r \r] \le e^{-\beta},
\end{align}
for  $\beta\ge 1$. An application of the union bound and the choice of  
 $\beta=C\alpha \log b$ 
then leads us to
\begin{align}
\Pr\l[\max_{i\in[1:b]}\l\| G[i,:]\r\|_2^2 \gtrsim  \alpha \log b\cdot r \r] \le b \cdot b^{-C\alpha }= b^{-C\alpha}. 
\label{eq:union bnd Gaussian}
\end{align}
Substituting the bound above back into \eqref{eq:eta Gaussian 1} yields that
\begin{align}
\eta\l( \mbox{span}(G) \r)  & \lesssim \frac{b}{r}\max_i \l\| G[i,:]\r\|_2^2 \cdot \l(b-\Cr{denom}r\r)^{-1} \qquad \mbox{(see \eqref{eq:eta Gaussian 1})} \nonumber\\
& \lesssim  \frac{b}{r} \cdot  r\alpha \log b\cdot \l(b-\Cr{denom}r\r)^{-1}
\qquad \mbox{(see \eqref{eq:union bnd Gaussian})} \nonumber\\
& \lesssim \frac{\alpha b\log b}{b-\Cr{denom}\alpha^2 r},
\end{align}
except with a probability of at most $e^{-\alpha r}+b^{-C\alpha}$. In particular, when $b \ge  2  \Cr{denom} \alpha^2 r$, we find that $\eta(\mbox{span}(G))\lesssim \alpha \log b$ except with a probability of at most $e^{-\alpha r}+ b^{-C\alpha}$.

\section{Alternative Initialization \label{sec:alt init}}

$\alg$ in Algorithm \ref{alg:Alg} is initialized by truncating the SVD of the first  incomplete block $Y_1\in \mathbb{R}^{n\times b}$, where we often take $b=O(r)$ to keep the computational and storage requirements of $\alg$ minimal, see Remarks \ref{rem:complexity} and \ref{rem:storage}. Put differently, with the notation of Section \ref{sec:interp}, even though our end goal is to compute rank-$r$ truncated SVD of the full (but hidden) data block $S_1 \in\mathbb{R}^{n\times b}$, $\alg$ is initialized with  truncated SVD of the incomplete (but available) block $Y_1=P_{\Omega_1}(S_1)\in\mathbb{R}^{n\times b}$, which fills in the missing entries with zeros. Indeed,  when the first incomplete block $Y_1$ arrives, there is no prior knowledge to leverage and zero-filling the missing entries in $Y_1$ is  a sensible strategy. 
In contrast, for the rest of blocks $k\ge 2$, $\alg$ uses its previous estimate $\h{\SU}_{k-1}$ to fill out the erased entries in $Y_k$ before updating its estimate to $\h{\SU}_k$, see Algorithm~\ref{alg:Alg}.

One might  instead initialize $\alg$ with a larger block. More specifically, suppose that we change the first block  size to $b_1\ge b$ while keeping the rest of the blocks at the same size $b$. Then we set $\widehat{\SU}_1$ to be the span of  leading $r$ left singular vectors of the first incomplete block $Y_1\in\mathbb{R}^{n\times b_1}$, while the rest of steps in Algorithm \ref{alg:Alg} do not change. As the size of the first block $b_1$ increases,  $\widehat{\SU}_1$ increasingly better approximates the true subspace $\SU$. Indeed, one might consider $Y_1=P_{\Omega_1}(S_1) = S_1 + (P_{\Omega_1}(S_1)-S_1)$ as a ``noisy'' copy of $S_1$, where the noise is due to the erasures. Roughly speaking then, as $b_1$ increases, the energy of the ``signal'' part, namely $\| S_1\|_F$, grows faster than and eventually dominates the energy of the random noise $\|P_{\Omega_1}(S_1)-S_1\|_F$. This intuition is made precise by the following result which loosely speaking 
states that 
$$
d_{\GR}(\SU,\h{\SU}_1) \lesssim  \sqrt{\frac{r}{pn}},
$$
when $b_1 = \Omega(n)$. This result is proved in Appendix \ref{sec:proof of lemma init} with the aid of standard large deviation bounds.
\begin{prop}
\label{lem:init}
Consider an $r$-dimensional subspace $\SU$ with orthonormal basis $S\in\mathbb{R}^{n\times r}$.  For an  integer $b_1\ge r$, let the coefficient vectors $\{q_t\}_{t=1}^{b_1}\subset \mathbb{R}^r$ be independent copies of a random vector $q\in\mathbb{R}^r$.
For every $t\in[1:b_1]$,  we observe each coordinate of  $s_t = S q_t \in \SU$ independently  with a probability of $p$ and collect the observed entries in $y_t\in\mathbb{R}^n$, supported on the random index set $\omega_t\subseteq[1:n]$. We set $Q_1 = [q_1\,q_2\,\cdots\,q_{b_1}] \in\mathbb{R}^{r\times b_1 }$ 
 and  $Y_1 = [y_1\, y_2\,\cdots\,y_{b_1}]\in\mathbb{R}^{n\times b_1}$ for short. Let also $\widehat{\SU}_1$ be the span of  leading $r$ left singular vectors of $Y_1$.  

Then, for fixed $\alpha,\nu\ge 1$ and $1 \le \eta \le b_1/r$, it holds that
\begin{align}
d_{\GR}(\SU,\h{\SU}_1)
 \lesssim \alpha \cdot  \nu \sqrt{
  \l( 1\vee \frac{n}{b_1} \r) \frac{\l( \eta \vee \eta\l(\SU \r)\r) r\log(n\vee b_1)}{pn}},
\label{eq:init eq}
\end{align}
except with a probability of at most $e^{-\alpha}+ \Pr[ \nu(Q_1) > \nu ]+\Pr[ \eta(\Q_1)> \eta ]$. Above, $\nu(Q_1)$ is the condition number of $Q_1$,  $\eta(\Q_1)$ is the coherence of $\Q_1 = \mbox{span}(Q_1^*)$ (see (\ref{eq:def of coh})), and $a\vee b := \max\{a,b\}$. 
\end{prop}
\begin{rem}
\emph{\label{rem:largeFirstBlock}\textbf{[Discussion of Proposition \ref{lem:init}]}
As \eqref{eq:init eq} suggests, for $\widehat{\SU}_1$ to be close to $\SU_1$, the first block should be a wide matrix, namely $b_1=O(n)$. 
This dependence on the block size was anticipated. Indeed, it is
well-understood that one needs  $O(n)$ samples in order for the sample covariance matrix to closely approximate the covariance matrix of a random vector in $\R^n$  \cite{vershynin2012close}. As an example, consider the case where the coefficient vectors $\{q_t\}_{t=1}^{b_1}$ are standard random Gaussian vectors and so $Q_1\in\R^{n\times b_1}$ is a standard random Gaussian matrix, namely populated with zero-mean independent Gaussian random variables with unit variance. Then both probabilities above are small when $b_1$ is sufficiently large. More specifically, we show in Appendix \ref{sec:cnd coh of Gaussian} that 
\begin{equation}
\Pr[\nu(Q_1)>\nu ] \le \exp\l(-C\frac{\nu-1}{\nu+1} \r),
\qquad \mbox{when } b\gtrsim r,
\end{equation}
\begin{equation}
\Pr[\eta(\Q_1) > \eta ] \le \exp(-C\eta/\log b),
\qquad \mbox{when } b \log^2 b\gtrsim \eta^2 r,
\end{equation}
for a Gaussian coefficient matrix $Q_1$. 
\hfill\qedsymbol}
\end{rem}

It is also worth noting that initializing $\alg$ with a large first block can be done \emph{without} increasing the storage requirement or computational complexity of $\alg$, namely replacing the block size $b=O(r)$ in first step of Algorithm \ref{alg:Alg} with $b_1=O(n)$ can be done without losing the streaming nature of $\alg$. 
More specifically, with the alternative initialization, naively computing the truncated SVD of the first block requires $O(b_1n)=O(n^2)$ bits of storage and $O(r b_1n)=O(rn^2)$  flops. These requirements can be significantly lowered by implementing a state-of-the-art streaming PCA algorithm such as the ``power method'' in \cite{mitliagkas2013memory}. 
This suggests a two-phase algorithm. 
In the first phase, the power method is applied to the incoming data, 
where the missing entries are filled with zeros. This phase produces the estimate $\h{\SU}_1$ in $\alg$, which serves as an  initialization for the second phase in which the main loop of $\alg$ is applied to the incoming blocks, producing the estimates $\{\h{\SU}_k\}_{k\ge 2}$. If $b_1$ is sufficiently large, the first phase  brings us within the basin of attraction of the true subspace $\SU$ and activates  the locally linear convergence of $\alg$ to $\SU$ in the second phase, see Theorems \ref{thm:local cvg expectation} and \ref{thm:main result}.

\end{document}